\definecolor{libl}{cmyk}{0.2,0.1,0,0}
\newtheorem{Def}{Definition}
\newtheorem{Lemma}{Lemma}
\newtheorem{Theorem}{Theorem}
\newcommand{\ncd}{\newcommand}
\ncd{\lce}{\equiv_{l.C.}}
\ncd{\cg}{\mbox{cg}}
\ncd{\Tr}{\mbox{Tr}}
\newcommand{\nix}[1]{}
\begin{document}

\title{Measurement-based quantum computation--a quantum-mechanical toy model for spacetime?}

\author{R. Raussendorf, P. Sarvepalli, T.-C. Wei, P. Haghnegahdar\vspace{2mm}\\
{\small{\em{University of British Columbia, Department of Physics and Astronomy, Vancouver, BC, Canada}}}}

\maketitle

\begin{abstract}
  We propose measurement-based quantum computation (MBQC) as a quantum mechanical toy model for spacetime. Within this framework, we discuss the constraints on possible temporal orders enforced by certain symmetries present in every MBQC. We provide a classification for all MBQC temporal relations compatible with a given initial quantum state and measurement setting, in terms of a matroid. Further, we find a symmetry transformation related to local com\-ple\-men\-ta\-tion that leaves the temporal relations invariant. After light cones and closed time-like curves have previously been found to have MBQC counterparts, we identify event horizons as a third piece of the phenomenology of General Relativity that has an analogue in MBQC. 

\end{abstract}
 
\section{Introduction}

Unifying quantum mechanics \cite{BBR}-\cite{HMM} with general relativity \cite{AEGR, AEGR2} is a major open problem in physics. After the unification of quantum mechanics and special relativity has been accomplished \cite{DFT, DLI}, in the resulting relativistic quantum field theories \cite{QED1} - \cite{QCD2}, spacetime and quantum-mechanical degrees of freedom are still treated on a separate footing. For example, the electromagnetic, weak and strong forces are described as being mediated by particles, namely photons, $W$ and $Z$-bosons, and gluons. An analogous consistent formulation of gravity has so far not succeeded. In order to overcome the schism between the theory of gravity on one side and quantum mechanics on the other, one may therefore ask: Is there an element in the structure of quantum mechanics that forces spacetime into existence? I.e., is spacetime an emergent phenomenon, predicted by quantum mechanics?

In the present paper, we approach this question in a restricted and controlled setting, with emphasis on the notion of time. Namely, we study a scheme of quantum computation \cite{NC}, so-called `measurement-based' or `one-way' quantum computation (MBQC) \cite{RB01} as a quantum-mechanical toy model for spacetime. In MBQC, the process of computation is driven by local measurements on a suitably chosen entangled state. `Time' in deterministic MBQC is a partial ordering of measurement events, which is compatible with the notion of time in the theory of relativity. An external and continuously increasing clock time is not required. The possible temporal orders are severely constrained by symmetries present in every MBQC. The exploration of these symmetry-enforced constraints is a central theme in this paper. 

A fundamental property of quantum mechanics is that measurement outcomes---as opposed to measurement bases---cannot be chosen. They are in general random. MBQC, which is driven by measurements (no unitary evolution ever takes place in this computational scheme), were entirely useless if this randomness could not be prevented from creeping into the logical processing. But it can: all that is needed is the adjustment of measurement bases according to measurement outcomes obtained at other qubit locations {\em{before}}. One can turn this mechanism for adjusting measurement bases around, and say that it is the inherently probabilistic nature of quantum measurement which determines the `before' in MBQC. Indeed, without even knowing what is being computed in a given MBQC, the requirement of preventing the randomness of quantum mechanical measurement from affecting the logical processing imposes stringent constraints on the possible temporal orders of measurement events. It is because of this link between temporal order and quantum measurement that we propose MBQC as a quantum-mechanical toy model for spacetime.

In the context of the present work, it has previously been shown for the case the initial state in MBQC is a stabilizer state \cite{Stab} known in a suitable local basis, that the entire temporal order of measurements is fully specified if the set $I$ of first-measurable and the set $O$ of last-measurable qubits are known \cite{Gflow}, \cite{EffFlow}. But not every pair $I$, $O$ leads to a possible temporal order. A condition on admissible pairs $I$, $O$ has been given in \cite{Mhalla}. On a different route, an MBQC-analogue of closed time-like curves (CTCs) has been identified \cite{CTC}, via post-selection models \cite{BennettCTC}, \cite{SvetlichnyCTC} for CTCs in quantum circuits. Certain solutions of the Einstein equations permit closed time-like curves, and if MBQC is to serve as a toy model for a quantum spacetime it is a bonus that it accommodates this part of the phenomenology of general relativity.

The results of the present paper are three-fold. First, we identify a gauge transformation leaving the classical output of all MBQCs invariant, and causing additional terms in the classical processing relations for measurement outcomes. We show that the resulting generalized processing relations fully specify the resource state and measurement setting, up to equivalence. In the reverse direction, we provide a classification for all MBQC temporal relations compatible with a given resource stabilizer state and set of measurement bases. Specifically, we introduce a matroid capturing all information of the resource state and measurement bases relevant for temporal order, and show that the classical processing relations---including temporal order---are in one-to-one correspondence with the bases of this matroid. Second, we identify a further group of symmetry transformations generally changing MBQCs but preserving temporal order. The possible temporal orders in MBQC thus arise as representations of this group. These symmetry transformations turn out to be related to local complementation \cite{LC1}-\cite{LC3}. Third, after light cones and closed time-like curves, we identify event horizons as a third piece of GR phenomenology that has a counterpart in MBQC. 

\section{Background}
\label{BG}

In this section we review some basic facts about measurement-based quantum computation, essential definitions for the discussion of MBQC temporal order, as well as previous work in this area. The scheme of MBQC itself, with a proof of its computational universality, is not reviewed here since various articles on this subject exist in the literature \cite{RB01}, \cite{NielQCc} - \cite{Overview} Also, we require familiarity with the stabilizer formalism \cite{Stab}.

\subsection{MBQC and cluster states}

In MBQC, the process of computation is driven by local (=1-qubit) measurements on an initial highly entangled state, generally taken to be a so-called cluster state. The local measurements can only reduce entanglement, and therefore all entanglement needed for the computation must come from the initial state. For this reason, the initial state is often called the `resource state'\footnote{Large entanglement is a necessary \cite{vdN1, vdN2} but not sufficient condition for the usefulness of a quantum state as resource in MBQC. Somewhat paradoxically, quantum states exist that are too entangled to be useful \cite{Eisert1}.}. 

The computational power of a given MBQC strongly depends on the choice of the initial resource state. For example, a local resource state has obviously no computational power. Other states may be used for a restricted class of computations. Two-or-higher dimensional cluster states of unbounded size have the property that they enable universal quantum computation. That is, {\em{any}} quantum computation can be realized on such a state, by suitable choice of the local measurement bases.

We now define graph states, to be used later on, and cluster states as a subclass thereof.

\begin{Def}[Graph states and cluster states.] Be $G$ a graph with vertex set $V(G)$ and edge set $E(G)$, such that there is one qubit for each vertex $a \in V(G)$. Then, the graph state $|G\rangle$ is the unique (up to global phase) joint eigenstate, $|G\rangle = K_a\,|G\rangle$, for all $a \in V(G)$, of the operators
\begin{equation}
K_a = \sigma_x^{(a)}\bigotimes_{b |(a,b)\in E(G)}\sigma_z^{(b)}.
\end{equation}
The cluster state $|\Phi_{\cal{L}}\rangle$ is a graph state where the corresponding graph is a $d$-dimensional lattice ${\cal{L}}$.
\end{Def}
In the standard scheme \cite{RB01}, the measured observables are of the form
\begin{equation}
\label{ObsXY}
O_a[q_a]=\cos \varphi_a\, \sigma_x^{(a)} + (-1)^{q_a}\sin \varphi_a \,\sigma_y^{(a)},
\end{equation}
with $a$ the qubit to which the measurement is applied, and $q_a\in \mathbb{Z}_2$ depending on outcomes of (earlier) measurements on other qubit locations.

\subsection{Temporal order in MBQC}
\label{OTO}

As noted above, the temporal order of measurement in MBQC is a consequence of the randomness of measurement outcomes. By adjusting measurement bases according to measurement outcomes obtained on other qubits, this randomness inherent in  quantum mechanical measurement can be kept from creeping into the logical processing \cite{RB01}. If the measurement outcome of qubit $a$ influences the choice of measurement basis for qubit $b$, clearly, qubit $a$ must be measured before qubit $b$ can. This is how a temporal order among the measurement events arises.

We now generalize the above scenario of measuring observables of form Eq.~(\ref{ObsXY}) on cluster states. Namely, we now consider general stabilizer states $|\Psi\rangle$ as resources, with $\mbox{supp}(|\Psi\rangle)=\Omega$ and stabilizer group ${\cal{S}}(|\Psi\rangle)$. Furthermore, we generalize the local observables whose measurement drives the computation from Eq.~(\ref{ObsXY}) to
\begin{equation}
  \label{Obs}
  O_a[q_a] =  \cos \varphi_a\, \sigma_\phi^{(a)} +(-1)^{q_a} \sin \varphi_a\,  \sigma_{s\phi}^{(a)},\;\; \forall a \in \Omega,
\end{equation}
with $\sigma_\phi, \sigma_{s\phi} \neq \sigma_\phi \in \{X,Y,Z\}$. Therein, the measurement angles $\varphi_a$ are in the range $-\pi/2\leq \varphi_a <\pi/2$, and $q_a\in\mathbb{Z}_2$ may depend on measurement outcomes from several (other) qubits in $\Omega$. 
\begin{Def} [Measurement plane] 
For every qubit $a \in \mbox{supp}(|\Psi\rangle)$, the measurement plane at $a$ is the ordered pair $[\sigma_\phi^{(a)}, \sigma_{s\phi}^{(a)}]$. 
\end{Def}
We define a third Pauli operator, $\sigma_s = i \sigma_{s\phi}\sigma_\phi$. As we will see shortly, the Pauli operators $\sigma_\phi$ and $\sigma_s$ are useful because of the relations
\begin{equation}
  \label{obs2}
  \sigma_\phi O[q] \sigma_\phi^\dagger = O[q \oplus 1],\;\;  \sigma_s O[q] \sigma_s^\dagger = -O[q]. 
\end{equation} 
The basic mechanism of accounting for an ``undesired'' measurement outcome is the following. Suppose on some qubit $a \in \Omega$, instead of the ``desired'' post-measurement state $|\varphi_a\rangle_a$ the ``undesired'' post-measurement state $|\varphi_a^\perp\rangle_a$ has been obtained. The goal is to get the computation back on track by only adjusting the subsequent measurements. To do that, we require a stabilizer operator $\tilde{K}(a) \in {\cal{S}}(|\Psi\rangle)$ with the following properties \cite{Gflow}: (1) $\tilde{K}(a)$ has support only on $a$ and the yet unmeasured qubits, and (2) $\tilde{K}(a)|_a=\sigma_s^{(a)}$. Recall that $\sigma_s|\varphi_a\rangle = |\varphi_a^\perp\rangle$ for the eigenstates $|\varphi_a(q_a,s_a)\rangle$ of the local measured observable $O[q_a]$, c.f. Eq.~(\ref{obs2}).  Denote by ${\cal{P}}(a)$ and ${\cal{F}}(a)$ the past and future of $a$, respectively. Then,
\begin{equation}
\label{corr1}
\begin{array}{rcl}
  \left(\mbox{}_{{\cal{P}}(a)}\langle\varphi_{\text{loc}}|\otimes\,_a\langle \varphi_a^\perp|\right)|\Psi\rangle &=&  \left(\mbox{}_{{\cal{P}}(a)}\langle\varphi_{\text{loc}}|\otimes\,_a\langle \varphi_a^\perp|\right) \tilde{K}(a)|\Psi\rangle\\ 
& = & \left(\mbox{}_{{\cal{P}}(a)}\langle\varphi_{\text{loc}}|\otimes\,_a\langle \varphi_a|\right) \tilde{K}(a)\left|_{{\cal{F}}(a)}\right. |\Psi\rangle
\end{array}
\end{equation}
Therein, the first equality follows from $\tilde{K}(a) \in {\cal{S}}(|\Psi\rangle)$, and the second from the above properties (1) and (2).  

Since the overlaps between local states (representing the local measurements) with the resource state $|\Psi\rangle$ contain all information about the computation, we thus find that we can correct for ``undesired'' outcomes by (a) adjusting measurement bases of future measurements (caused by tensor factors $\sigma_\phi$ in $\tilde{K}(a)\left|_{{\cal{F}}(a)}\right.$) and (b) re-interpretation of measurement outcomes (caused by tensor factors $\sigma_s$ in $\tilde{K}(a)\left|_{{\cal{F}}(a)}\right.$).

{\em{Example:}} Consider MBQC on a cluster state $|\Phi_3\rangle$ of three qubits on a line, each measured in the $[\sigma_x,\sigma_y]$-plane. That is, for all three qubits $\sigma_s = Z$ and $\sigma_\phi = X$. (Here and from now on, we use the shorthand $X \equiv \sigma_x$, $Y \equiv \sigma_y$ and $Z \equiv \sigma_z$.) The stabilizer generators of $|\Phi_3\rangle$ are
\begin{equation}
  \label{3Cluster}
  \begin{array}{rclcl}
    K_1 &=& X_1\otimes Z_2 \otimes I_3 &=& \sigma_\phi^{(1)} \otimes \sigma_s^{(2)} \otimes I^{(3)},\\ 
    K_2 &=& Z_1 \otimes X_2 \otimes Z_3 &=& \sigma_s^{(1)} \otimes \sigma_\phi^{(2)} \otimes \sigma_s^{(3)},\\
    K_3 &=& I_1 \otimes Z_2 \otimes X_3 &=& I^{(1)} \otimes \sigma_s^{(2)} \otimes \sigma_\phi^{(3)}.
  \end{array}
\end{equation}
When the three cluster qubits are measured in the order $1\prec 2 \prec 3$, the corresponding quantum circuit is \cite{RBB03}
\begin{equation}
\label{EqCirc}
\parbox{5cm}{\includegraphics[width=5cm]{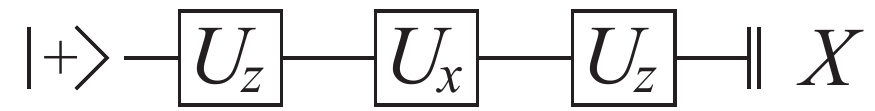}}
\end{equation}
Using Eq.~(\ref{corr1}), here we show that if qubits 1, 2, 3 are measured in the order $1\prec 2 \prec 3$, then the randomness of the measurement outcomes on qubits 1 and 2 can be corrected for. First we consider the stabilizer operator $K_2 = \sigma_s^{(1)} \otimes \sigma_\phi^{(2)} \otimes \sigma_s^{(3)}=:K(1)$. If inserted in the state overlap of Eq.~(\ref{corr1}), the measurement outcome of qubits 1 and 3 is flipped, as well as the measurement basis at qubit 2. Since qubits 2 and 3 are yet unmeasured when qubit 1 is measured, this is a valid correction operation for qubit 1; hence the notation $K(1)$. Similarly, $K_3 = I^{(1)} \otimes \sigma_s^{(2)} \otimes \sigma_\phi^{(3)} =: K(2)$ can be used as correction operation for qubit 2. $K(2)$ flips the measurement outcome of qubit 2 and the measurement basis of qubit 3. Since qubit 3 is yet unmeasured when qubit 2 is measured, this corresponds to a valid correction operation.

The above argument also works in reverse. If the correction operations $K(1)$ and $K(2)$ are used, then $1 \prec 2 \prec 3$ follows. $K(1)$ implies that the measurement basis of qubit 2 depends on the measurement outcome at qubit 1, hence $1 \prec 2$. Note that $1\prec 3$ does not yet follow! $K(1)$ does {\em{not}} affect the measurement basis at qubit 3. Only the meaning of the eigenstates is interchanged, which by itself  does not require qubit 3 to be measured after qubit 1. The interpretation of the measurement outcome may take place long after the measurement itself has taken place. 

$K_3=K(2)$ implies that the measurement basis of qubit 3 depends on the measurement outcome of qubit 2, and hence $2\prec 3$. Both relations taken together yield $1 \prec 2 \prec 3$.  

From the equivalence with the circuit of Eq.~(\ref{EqCirc}) one would expect one bit of classical output. Indeed, if no correction operations need to be used, the eigenvalue measured at the output of the circuit corresponds to the eigenvalue $\lambda_3$ measured on qubit 3 of the cluster. Now recall that $K(1)$, applied conditioned upon $\lambda_1=-1$, flips $\lambda_3$. Therefore, with or without corrections, the eigenvalue measured in the circuit Eq.~(\ref{EqCirc}) equals $\lambda_1\lambda_3$. Or, in binary notation $\lambda_1\equiv (-1)^{s_1},\, \lambda_3\equiv (-1)^{s_3}$, the single bit of classical output takes the value $s_1 + s_3 \mod 2$.

Note that we have not made any use of $K_1=\sigma_\phi^{(1)}\otimes \sigma_s^{(2)}$ in the above argument. Still, $K_1$ has a role to play, as we discuss in Section~\ref{gauge}.

\subsection{Influence matrix, forward and backward cones}

To counteract the randomness of measurement outcomes, two measurement settings (i.e., bases) per qubit suffice, which may be labeled by $q_a=0$ and $q_a=1$, respectively, for each qubit $a$. The measurement settings may collectively be described by a binary vector $\textbf{q}$, $[\textbf{q}]_a = q_a$ for all $a\in \Omega$, and the measurement outcomes by a binary vector $\textbf{s}$, $[\textbf{s}]_a = s_a$, forall $a \in \Omega$. It turns out that the relation between measurement bases $\textbf{q}$ and measurement outcomes \textbf{s} is {\em{linear}} \cite{RB02},
\begin{equation}
\label{TO1}
\textbf{q} = T \textbf{s} \mod 2,
\end{equation}
with $T$ a binary matrix. We call $T$ the influence matrix. 

The set of all qubits $b$ whose measurement basis must be adjusted according to the measurement outcome on $a$ is denoted as the {\em{forward cone}} of a qubit $a$. Similarly, the backward cone of a qubit $b$ is the set of all those qubits $a$ whose measurement outcome influence the measurement basis at $b$. More formally, with Eq.~(\ref{TO1}), 
\begin{Def}[Forward and backward cones]\label{FBC} For any $a \in \Omega$ the forward cone $\mbox{fc}(a)$ is given by
\begin{equation}
  \mbox{fc}(a):=\left\{ b \in \Omega|\, \partial  q_b / \partial s_a =1 \right\}.
\end{equation}
For any $b \in \Omega$ the backward cone $\mbox{bc}(b)$ is given by
\begin{equation}
  \mbox{bc}(b):=\left\{ a \in \Omega|\, \partial  q_b / \partial s_a =1 \right\}.
\end{equation}
\end{Def}

We denote the characteristic vectors of $fc(a)$ and $bc(b)$ by $\mbox{\textbf{fc}}(a)$ and $\mbox{\textbf{bc}}(b)$, respectively. Then, the influence matrix $T$ takes the form
\begin{equation}
\label{TO2}
  T = \left(\begin{array}{c} (\;\;\;\textbf{bc}(1)\;\;\;)\\
  (\;\;\;\textbf{bc}(2)\;\;\;)\\ .\\.\\(\;\;\;\textbf{bc}(n)\;\;\;) 
\end{array} \right) = 
   \left(\begin{array}{cccc} \!\!\!\left(\begin{array}{c}\mbox{ }\\ \mbox{ }\\ \!\!\!\!\!\textbf{fc}(1)\!\!\!\!\!\\ \mbox{ } \\ \mbox{ } \end{array} \right)
 \left(\begin{array}{c}\mbox{ }\\ \mbox{ }\\ \!\!\!\!\!\textbf{fc}(2)\!\!\!\!\!\\ \mbox{ } \\ \mbox{ } \end{array} \right) .. 
\left(\begin{array}{c}\mbox{ }\\ \mbox{ }\\ \!\!\!\!\!\textbf{fc}(n)\!\!\!\!\!\\ \mbox{ } \\ \mbox{ } \end{array} \right)
  \end{array} \!\!\!\right).
\end{equation}
The influence matrix $T$ generates a temporal relation among the measurement events under transitivity. We say $a \prec b$ ($a$ precedes $b$) if $b \in fc(a)$. A priori, it is not forbidden that for two qubits $a,b \in \Omega$, $a \prec b$ and $b \prec a$. However, such a computation could not be run deterministically in a world like ours where time progresses linearly. An MBQC is {\em{deterministically runnable}} if the temporal relation ``$\prec$'' between the measurement events is a strict partial order.  

\begin{Def}[Strict partial order]
A strict partial order is a relation among the elements $a \in \Omega$ with the following properties
\begin{equation}
  \begin{array}{llcr}
  a \not \prec a, & \forall a \in \Omega, &&(\mbox{\em{irreflexivity}})\\
  a \prec b \Longrightarrow b \not \prec a,& \forall a,b \in \Omega &&(\mbox{\em{antisymmetry}})\\
  a \prec b, b \prec c \Longrightarrow a \prec c, & \forall a,b,c \in \Omega. && (\mbox{\em{transitivity}})
  \end{array}
\end{equation}
\end{Def}

\begin{Def}[Input and output sets.]\label{IOdef} For a given MBQC, the input set $I \subseteq \Omega$ is the set of qubits whose backward cones are empty, $I = \{a \in \Omega|\,  bc(a)= \emptyset\}$. The output set $O \subseteq \Omega$ is the set of qubits whose forward cones are empty, $O = \{ b \in \Omega, fc(b) = \emptyset\}$.
\end{Def}
That is, with respect to a given temporal relation among the measurement events, $I$ is the maximal set of qubits which can be measured first, and $O$ is the maximal set of qubits which can be measured last.

Regarding the computational output, for the purpose of this paper we are exclusively interested in MBQCs for which the computational result is a classical bit string. That is, every qubit in $\Omega$ is measured. Then, as a consequence of the randomness of individual measurement outcomes, the classical output $\textbf{o}$ of an MBQC is given by {\em{correlations}} among measurement outcomes. Again, the relation between classical output and measurement outcomes is linear,
\begin{equation}
\label{TO1b}
\textbf{o} = Z \textbf{s} \mod 2,
\end{equation}
for a suitable binary matrix $Z$.\medskip

{\em{Example:}} To illustrate the above notions, we briefly return to the three-qubit cluster state example of Section~\ref{OTO}. From the previous discussion we find that
\begin{equation}
\label{TrelEx0}
\left(\begin{array}{c} q_1\\q_2\\q_3 \end{array}\right) = \left(\begin{array}{ccc} 0 & 0 & 0\\ 1 & 0 & 0 \\ 0 & 1 & 0\end{array}\right) \left(\begin{array}{c} s_1\\s_2\\s_3 \end{array}\right) \mod 2.
\end{equation}
Therefore, $fc(1)=\{2\}$, $fc(2)=\{3\}$, $fc(3)=\emptyset$ and $bc(1)=\emptyset$, $bc(2)=\{1\}$, $bc(3)=\{2\}$. Hence, $I=\{1\}$ and $O=\{3\}$. Also, $1\prec 2$ and $2\prec 3$. The latter two relations generate a third under transitivity, namely $1 \prec 3$. Asymmetry and irreflexivity are obeyed in this example. Regarding the single bit of output in this computation, the matrix $Z$ of Eq.~(\ref{TO1b}) is $Z=(1\,0\,1)$.

\subsection{Brief review of prior work on MBQC temporal order}

{\em{MBQC temporal order as an (almost) emergent phenomenon.}} In \cite{Gflow} the following question is asked: ``Given graph $G$ and the set $\Sigma$ of measurement planes for all vertices, can the temporal order of measurements in MBQC with a graph state $|G\rangle$ be uniquely reconstructed from this information?'' The graph $G$ and the measurement planes $\Sigma$ are undirected objects. Thus, if the answer to this question was yes, then temporal order in MBQC were truly emergent. 

However, it turns out that the pair $G,\Sigma$ does not specify the temporal order of measurements in MBQC uniquely; there are in general a number of consistent temporal orders respecting the requirement that the randomness of measurement outcomes should not affect the logical processing. One may then ask how constraining on temporal order this requirement actually is. To this question, the following answer is provided by \cite{Gflow}: If in addition to $G$ and $\Sigma$ the set $I$ of first-measurable and the set $O$ of last-measurable qubits is known, then the complete temporal order (if existing) can be uniquely reconstructed from this information. Thus, MBQC temporal order is not emergent in the strict sense; a seed $I,O$ must be provided in addition to $G$ and $\Sigma$, and the complete temporal order then follows.

But not every pair $I$, $O$ will lead to a consistent temporal order. The question that now arises is which pairs $I,O$ do. For the case where the stabilizer resource state is a graph state and all qubits are measured in the $[X,Y]$-plane\footnote{The former condition alone is not a restriction, since all stabilizer states are local Clifford equivalent to graph states \cite{Grassl}, but both conditions jointly are.} then the answer to this question is given in \cite{Mhalla}. Denote by $A_G$ the adjacency matrix of the graph $G$ describing the resource state $|G\rangle$, and by $A_G|_{O^c\times I^c}$ the submatrix of $A_G$ where the rows are restricted to $O^c:=\Omega\backslash O$ and the columns are restricted to $I^c$. Then, the pair $I,O$ leads to a partial order of measurement events in MBQC iff there exists a matrix $T$ such that $A_G|_{O^c\times I^c} T = I$, and $T$ is free of cycles (that is $T_{aa}=0,\forall a$, $T_{ab}T_{ba}=0,\forall a,b$, $T_{ab}T_{bc}T_{ca}=0,\forall a,b,c$, etc). The resulting temporal order is generated by $T$ under transitivity. \medskip

\begin{figure}[h]
\begin{center}
  \begin{tabular}{ll}
    a) & b)\\
    \parbox{7cm}{\includegraphics[width=7cm]{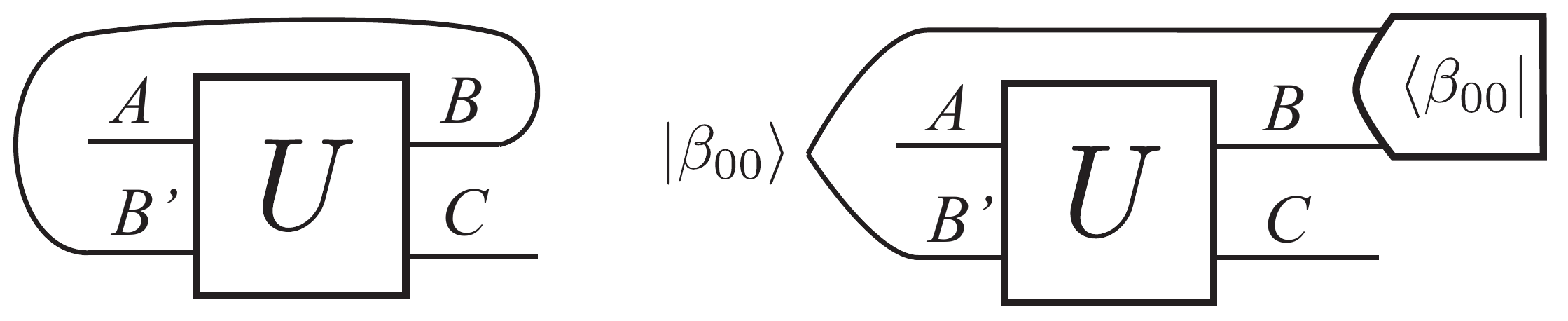}} & \parbox{7cm}{\includegraphics[width=7cm]{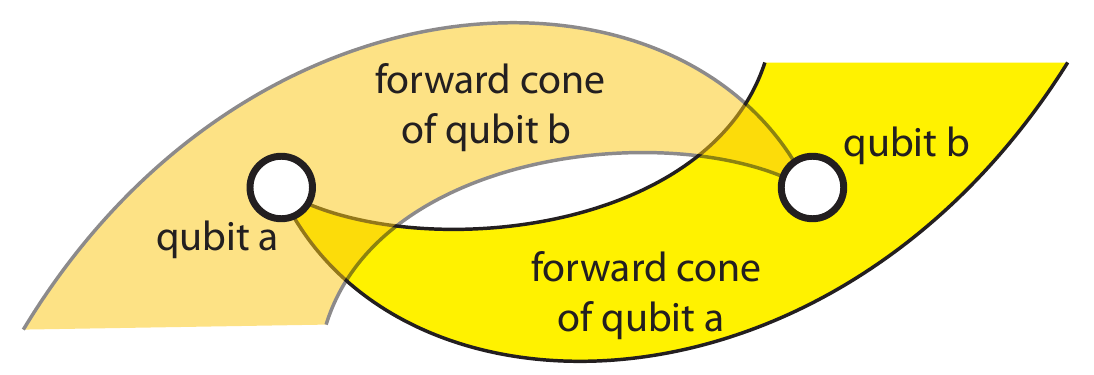}} 
  \end{tabular}
    \caption{\label{CTCfig} Closed time-like curves in MBQC. a) Bennett, Schumacher and Svetlichny's post-selection model \cite{BennettCTC,SvetlichnyCTC} of CTCs (left: circuit with wires `going backwards in time', right: implementation thereof using teleportation and post-selection). b) Nested forward cones in the MBQC equivalent of the teleportation circuit in (a).}
\end{center}
\end{figure}

\noindent
{\em{MBQC and closed time-like curves.}} In \cite{CTC} it is shown that MBQC encompasses the post-selection model of closed time-like curves (CTC's) proposed by Bennett, Schumacher \cite{BennettCTC} and Svetlichny \cite{SvetlichnyCTC}. The CTCs arise from circuits such as the one displayed in Fig.~\ref{CTCfig}a, translated into MBQC. The result are forward cones with the property $b \in fc(a)\,\wedge a \in fc(b)$, for two suitably chosen qubits $a,b \in \Omega$; see Fig.~\ref{CTCfig}b. Such nested forward cones are an obstruction to deterministic runnability of MBQC, but mimic closed time-like curves of General Relativity in the MBQC setting.

\section{Outline}

The remainder of the present paper is organized as follows. In Section~\ref{gauge} we show that every quantum computation---both in the circuit and the measurement-based model---transforms under a gauge transformation which affects the sequence of gates or measurements, respectively, but keeps the probability distribution of computational outputs invariant. For MBQC, this gauge transformation has the effect of causing extra terms in the classical processing relations Eq.~(\ref{TO1}), (\ref{TO1b}). 

In Sections~\ref{Intdep} and \ref{GTO} the consequences of these gauge transformations for MBQC are explored. It is already known \cite{Gflow} that the classical processing relations (containing the temporal order of measurement events) are not independent of the resource state and the set of measurement planes. For the extended processing relations---which now contain more information---this mutual dependence becomes stronger. In fact, the classical processing relations determine the resource state and set of measurement planes, c.f. Theorem~\ref{OneToOneTwo} in Section~\ref{Intdep}.

In Section~\ref{Flip} we introduce a second symmetry transformation acting on MBQCs which we call `flipping of the measurement plane'; see Fig~\ref{GT}b. As introduced in Section~\ref{BG}, the measurement plane $[\sigma_\phi,\sigma_{s\phi}]$ is an {\em{ordered}} pair of Pauli operators. Or is it? What happens under the exchange $\sigma_\phi \longleftrightarrow \sigma_{s\phi}$? (Under this transformation, the set of measurable observables in the plane is mapped back onto itself, hence the name for the transformation.) Denote by $a$ the qubit whose measurement plane is flipped. It turns out that whenever the measurement basis at $a$ does not depend on the measurement outcome at $a$, i.e., in the absence of a self-loop at $a$, flipping of the measurement plane at $a$ induces a transformation on the influence matrix $T$ which leaves the temporal order generated by $T$ invariant. The effect of flipping measurement planes on the influence matrix $T$ is closely related to the operation of local complementation \cite{LC1}-\cite{LC3} on graphs.

In the present context, we consider MBQC as a toy model for spacetime and therefore admit temporal relations with closed time-like curves. However, closed time-like curves are an obstacle to deterministic MBQC. In Section~\ref{BRKctc} we show that every MBQC solving a problem in the complexity class NP can, without reducing its probability of success, be transformed into a deterministically runnable MBQC, i.e., into an MBQC whose temporal relation of measurement events is a strict partial order. This transformation does not guard against the inefficiencies of post-selection, but restores deterministic runnability.

In Section~\ref{Concl} we conclude and point out open questions.

\section{Gauge degrees of freedom}
\label{gauge}

\begin{figure}[t]
  \begin{center}
    \begin{tabular}{ll}
      a) & b)\\
      \includegraphics[width=5.5cm]{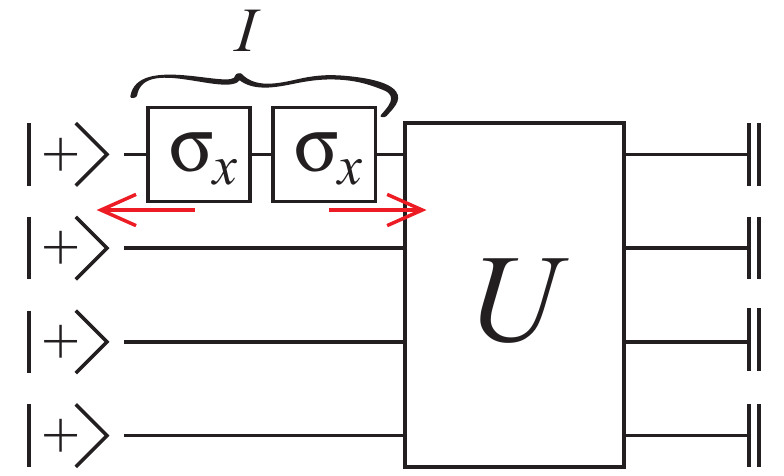} & 
      \includegraphics[width=8cm]{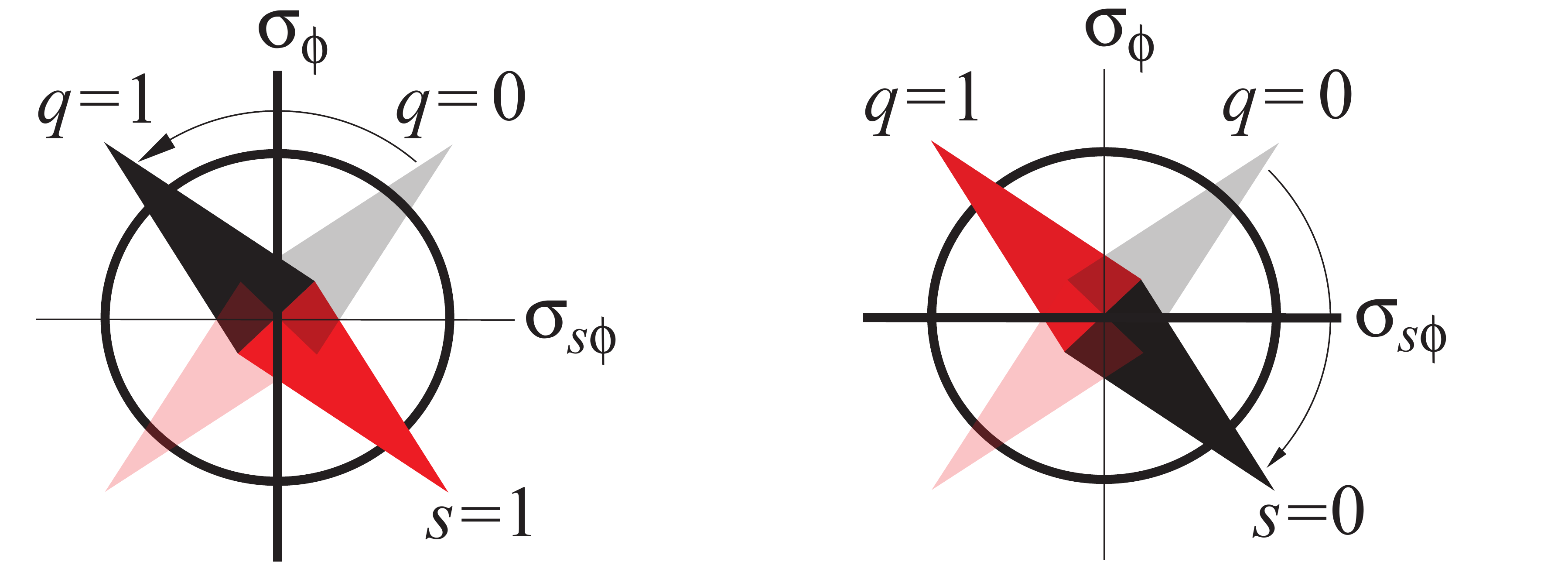}
    \end{tabular}
    \caption{\label{GT} Two symmetry transformations. a) Gauge transformation in the  circuit model. For any logical qubit $l$, an identity $I=\sigma_x^{(l)}\sigma_x^{(l)}$ is inserted into the circuit next to the input. The left $\sigma_x$ is propagated backwards in time, and absorbed by the input state $|+\rangle$. The right $\sigma_x$ is propagated forward in time, flipping rotation angles and, potentially, measurement outcomes in the passing. b) Flipping a measurement plane in MBQC. In the measurement plane $[\sigma_\phi,\sigma_{s\phi}]$, the Pauli operator $\sigma_\phi$ is distinguished over $\sigma_{s\phi}$ because the rule for adjusting a local observable $O_a$ for measurement is $O_a[q_a=1] = \sigma_\phi^{(a)} O_a[q_a=0] \big(\sigma_\phi^{(a)}\big)^\dagger$. If $T_{aa}=0$ (with $T$ the influence matrix) then the flip $\sigma_\phi^{(a)} \longleftrightarrow \sigma_{s\phi}^{(a)}$ is a symmetry transformation for the given MBQC.}
  \end{center}
\end{figure}

Here we introduce the notion of ``gauge transformations'' acting on a given quantum computation. These transformations exist for both the circuit model and MBQC.

\subsection{Gauge transformations in the circuit model}
\label{GC}

To obtain an intuition for the gauge transformations introduced here, it is instructive to first inspect them in the circuit model. Specifically, we consider a quantum circuit which consists of (1) the preparation of the quantum register in the initial state $\bigotimes_{i = 1}^n|+\rangle_i$, (2) unitary evolution composed of, say, CNOT gates and one-qubit rotations about the $X$- and $Z$-axes, and (3) local measurements for readout. Such a circuit is displayed in Fig.~\ref{GT} above. Then, into every qubit line individually,  we may insert an identity $I = \sigma_x \sigma_x$ next to the input; See Fig.~\ref{GT}. The left $\sigma_x$ is propagated backwards in time until absorbed by the input state $|+\rangle$. The right $\sigma_x$ is propagated forward in time, flipping rotation angles and readout-measurement outcomes in the passing. 

This transformation is an {\em{equivalence transformation}}, since it is caused by the insertion of an identity gate into the circuit of Fig.~\ref{GT}. The reason we call it a gauge transformation is that it changes reference frames for certain rotation angles. Specifically, for the rotation gates next to each input qubit we can individually choose our convention for which rotation angles are called positive or negative, respectively. Once those signs are fixed on the input side, they are fixed throughout the circuit.

\subsection{The gauge transformations in the measurement-based model}

Translating the above discussed gauge transformations from the circuit model into MBQC it is easily seen that the above relations Eq.~(\ref{TO1}) and (\ref{TO1b}) are incomplete. We find the more general relations
\begin{subequations}
  \begin{align}
  \label{TO7a}
  \textbf{q} &= T \textbf{s} + H \textbf{g} \mod 2,\\
  \label{TO7b}
  \textbf{o} &= Z \textbf{s} + R\textbf{g} \mod 2,
  \end{align}
\end{subequations}
with $\textbf{g}$ a choice of gauge. 

Under a change of $\textbf{g}$, the measurement bases in a particular MBQC change, but the probability distribution for the classical output values remains unchanged. Of course, knowledge of the $\textbf{g}$-de\-pen\-dent extra parts in Eq.~(\ref{TO7a},\ref{TO7b}) is not necessary to run any given MBQC, since $\textbf{g}=\textbf{0}$ is always a valid choice. However, the presence of the extra terms in the processing relations strengthens their interdependence with the resource state, which is the reason why we discuss them here.

We may now want to derive the generalized processing relations Eq.~(\ref{TO7a},\ref{TO7b}) directly in MBQC, without reference to the circuit model. Before discussing the general case, we return to the specific example of the three-qubit cluster state in Section~\ref{OTO}. 

{\em{Example:}} Consider the product $K_1K_3=\sigma_\phi^{(1)} \otimes \sigma_\phi^{(3)}$ of $K_1$, $K_3$ in Eq.~(\ref{3Cluster}). When used in Eq.~(\ref{corr1}), the effect of this stabilizer element is to flip the measurement bases of qubits 1 and 3. Hence, the relation Eq.~(\ref{TrelEx0}) generalizes to
\begin{equation}
\label{TrelEx}
\left(\begin{array}{c} q_1\\q_2\\q_3 \end{array}\right) = \left(\begin{array}{ccc} 0 & 0 & 0\\ 1 & 0 & 0 \\ 0 & 1 & 0 \end{array}\right) \left(\begin{array}{c} s_1\\s_2 \\s_3 \end{array}\right) + \left(\begin{array}{c} 1\\ 0 \\ 1 \end{array}\right)g_1 \mod 2.
\end{equation}
This is precisely what one would expect from the insertion of $\sigma_x\sigma_x$ next to the input of the equivalent quantum circuit in Eq.~(\ref{EqCirc}). 

We now turn to the general case. Via Eq.~(\ref{corr1}), the stabilizer group ${\cal{S}}(|\Psi\rangle)$ acts on $\textbf{s}$ and $\textbf{q}$. Denote the post-measurement state of qubit $a \in \Omega$ by $|s_a,q_a\rangle_a$, with $s_a$ the measurement outcome and $q_a$ specifying the measurement basis. Then, as in Eq.~(\ref{corr1}),
\begin{equation}\label{GaugeId}
  \left(\bigotimes_{a\in \Omega}\mbox{}_a\langle s_a,q_a|\right)|\Psi\rangle = \left(\bigotimes_{a\in \Omega}\mbox{}_a\langle s_a,q_a|\right)K|\Psi\rangle = \left(\bigotimes_{a\in \Omega}\mbox{}_a\langle s_a,q_a|K\right)|\Psi\rangle,\;\;\;\;\;\;\; \forall K \in {\cal{S}}(|\Psi\rangle).
\end{equation}
Since $\sigma_s|s,q\rangle = |s\oplus 1, q\rangle$ and $\sigma_\phi|s,q\rangle = |s, q\oplus 1\rangle$, for a stabilizer element $K=\bigotimes_{a \in \Omega}(\sigma_s^{(a)})^{v_a}(\sigma_\phi^{(a)})^{w_a}$ the action of $G_K$ on $\textbf{s}$, $\textbf{q}$ is 
\begin{equation}
  \label{G1}
  \begin{array}{rl}
  G_{K}: & \begin{array}{rcl} \textbf{s} &\longrightarrow& \textbf{s} + \textbf{v} \mod 2,\\
 \textbf{q} &\longrightarrow& \textbf{q} + \textbf{w} \mod 2.\end{array}
  \end{array}
\end{equation} 
Again, nothing changes by the insertion of a stabilizer (identity) operator into the state overlap of Eq.~(\ref{corr1}), and transformations $G_K$ are therefore equivalence transformations. They can be used to constrain the possible temporal orders in MBQC, as we discuss explicitly in Section~\ref{GTO}.\medskip

\subsection{Correction and gauge operations in the stabilizer formalism}

To state and prove our results on MBQC temporal order, we need to make a few more definitions. First, let us emphasize that in this paper we allow closed time like curves (CTC) in the computation; we are interested in classifying all transitive \emph{temporal relations} consistent with a given resource stabilizer state and set of measurement planes. Therefore, we do not restrict to partial orders per se, and drop the conditions of irreflexivity and antisymmetry on the temporal relations.

It turns out that the possible temporal relations, and indeed the classical processing relations Eq.~(\ref{TO7a}), (\ref{TO7b}), can be parametrized by two subsets of $\Omega$, namely the computational output set $O_{\text{comp}}$ and the gauge input set $I_{\text{gauge}}$. We define these sets next. 

We say that the measurement outcome $s_a$ of qubit $a$ is {\em{corrected}} in a given MBQC, if by insertion of a suitable stabilizer operator $K$ in Eq.~(\ref{GaugeId}) equivalence of $s_a=1$ with the reference outcome $s_a=0$ is established, at the cost of adjustment of measurement bases and/or re-interpretation of measurement outcomes on other qubits.

We observe that if for a given MBQC {\em{all}} measurement outcomes $s_a$, $a \in \Omega$, can be corrected then these measurement outcomes contain no information and no linear combination of them is worth outputting. Thus, in general there will be a set of qubits whose measurement outcomes are not corrected.

\begin{Def}[Computational output set]\label{COS} For a given MBQC, the computational output set $O_{\text{comp}}\subseteq \Omega$ is the set of qubits whose measurement outcomes are not corrected.
\end{Def}
The correction operations for the qubits in $a \in (O_{\text{comp}})^c$, c.f. Eq.~(\ref{corr1}), are each implemented by correction operators $K(a) \in {\cal{S}}(|\Psi\rangle)$. For each $a \in  (O_{\text{comp}})^c$, $K(a)$ has the property that it flips the measurement outcome $s_a$ at qubit $a$, but does not flip the measurement outcome on any other qubit in $(O_{\text{comp}})^c$. In this way, it is ensured that the correction operation is for qubit $a$ individually. 

\begin{Def}[Correction operator]\label{CorrOp} For an MBQC with a given stabilizer resource state $|\Psi\rangle$, fixed set $\Sigma$ of measurement planes and computational output set $O_{\text{comp}}$, for each $a \in (O_{\text{comp}})^c$ the corresponding correction operator $K(a) \in {\cal{S}}(|\Psi\rangle)$ is a Pauli operator satisfying the conditions
\begin{equation}
  \label{CorrCorr}
  \begin{array}{rclll}
    K(a)|_a & \in & \{\sigma_s^{(a)}, \sigma_{s\phi}^{(a)}\},\\
    K(a)|_b & \in & \{I^{(b)},\sigma_\phi^{(b)}\},& \forall b \in (O_{\text{comp}})^c\backslash a.
  \end{array}
\end{equation}
\end{Def}
The correction operator $K(a)$ can be used to correct an ``undesired'' measurement outcome at qubit $a$, c.f. Eq.~(\ref{corr1}). If, for a given operator $K(a)$, $K(a)|_b \in  \{\sigma_\phi^{(b)}, \sigma_{s\phi}^{(b)}\}$ then, by Eq.~(\ref{corr1}), the measurement basis at qubit $b$ depends on the measurement outcome for qubit $a$. In terms of the influence matrix $T$,
\begin{equation}
  \label{KTrel}
K(a)|_b \in \{\sigma_\phi^{(b)},\sigma_{s\phi}^{(b)}\}  \Longleftrightarrow [T]_{ba} = 1,\;\;\;\;\forall a \in (O_{\text{comp}})^c,\, b\in \Omega.  
\end{equation}
Note that in the first line of Eq.~(\ref{CorrCorr}) we allow $K(a)|_a =\sigma_{s\phi}^{(a)}$ only because we are admitting closed time-like curves in the present discussion. $K(a)|_a =\sigma_{s\phi}^{(a)}$ means that the measurement basis for qubit $a$ depends on the outcome $s_a$ of the measurement of qubit $a$. This amounts to a closed time-like curve only involving qubit $a$ (self-loop) and is an obstacle to deterministic runnability.

Because the qubits $a \in O_{\text{comp}}$ have no correction operations, $T_{ba}=0$ for all $b \in \Omega$, and thus 
\begin{equation}
\label{OOc}
O_{\text{comp}} \subseteq O.
\end{equation}  
Thus, $O_{\text{comp}}$ and the correction operators $\{K(a), a \in (O_{\text{comp}})^c\}$ completely determine the influence matrix $T$ and hence the temporal relation among the measurements. 

Conversely, if $O_{\text{comp}}$ and $\{K(a), a \in (O_{\text{comp}})^c\}$ are unknown, then the constraints Eq.~(\ref{CorrCorr}) pose self-consistency conditions on them. We discuss these conditions further below.\medskip

As we have seen in the concrete three-qubit example above, for a given MBQC the relation between the choice $\textbf{q}$ of measurement bases and the measurement outcomes $\textbf{s}$ in general allows for an offset term $H\textbf{g}$, c.f. Eq.~(\ref{TO7a}). Thus, the measurement bases for a certain set of qubits can be freely chosen until the initially arbitrary $\textbf{g}$ becomes fixed. This observation leads to 

\begin{Def}[Gauge input set]\label{GIS}For a given MBQC with input set $I$, the gauge input set $I_{\text{gauge}}\subseteq I$ is a set of qubits such that for each $i \in I_{\text{gauge}}$ the parameter $q_i$ specifying the locally measured observable $O_i[q_i]$ can be freely chosen. 
\end{Def}

Analogously to the correction operations, there will be gauge operators which implement the `corrections' of measurement {\em{bases}} for the qubits in $I_{\textbf{gauge}}$. The definition of the gauge operators ensures that for all $i \in I_{\text{gauge}}$ the corresponding $q_i$ can be changed individually without changing the others.

\begin{Def}[Gauge operators]\label{GaugeOp} For an MBQC with a given stabilizer resource state $|\Psi\rangle$, fixed set $\Sigma$ of measurement planes, computational output set $O_{\text{comp}}$ and gauge input set $I_{\text{gauge}}$, for each $i \in I_{\text{gauge}}$, the corresponding gauge operator $\overline{K}(i) \in {\cal{S}}(|\Psi\rangle)$ is a Pauli operator satisfying the conditions
\begin{equation}
\label{CorrGauge}
\begin{array}{rcll}
  \overline{K}(i)|_i &=& \sigma_\phi^{(i)},\\
  \overline{K}(i)|_j &=& I^{(j)},& \forall j \in (I_{\text{gauge}} \cap (O_{\text{comp}})^c) \backslash i,\\
  \overline{K}(i)|_k &\in & \{I^{(k)},\sigma_\phi^{(k)}\}, & \forall k \in ((I_{\text{gauge}})^c \cap (O_{\text{comp}})^c) \backslash i, \\
\overline{K}(i)|_l &\in & \{I^{(l)},\sigma_s^{(l)}\}, & \forall l \in (I_{\text{gauge}} \cap O_{\text{comp}}) \backslash i. 
\end{array}
\end{equation}
\end{Def}
Like previously for the correction operations, Eq.~(\ref{CorrGauge}) poses self-consistency condition on the possible sets $I_{\text{gauge}}$ and $\{\overline{K}(i), i \in I_{\text{gauge}}\}$.\medskip 

\begin{Lemma}
\label{indepL}
The correction operators $K(a)$ of Eq.~(\ref{CorrCorr}), $a \in (O_{\text{comp}})^c$, and the gauge operators $\overline{K}(i)$ of Eq.~(\ref{CorrGauge}), $i \in I_{\text{gauge}}$, are independent. That is, for all sets $J \subset I_{\text{gauge}}$, $L \subset (O_{\text{comp}})^c$ with $J \neq \emptyset\,\vee\, L \neq \emptyset$, $\tilde{K}(J,L):=\prod_{i \in J}\overline{K}(i)\,\prod_{a \in L}K(a) \neq I^{(\Omega)}$.
\end{Lemma}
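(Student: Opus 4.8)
The plan is to argue by contradiction: suppose $\tilde{K}(J,L) = I^{(\Omega)}$ for some $J \subseteq I_{\text{gauge}}$ and $L \subseteq (O_{\text{comp}})^c$ that are not both empty, and pin down a single qubit at which the restriction of $\tilde{K}(J,L)$ cannot equal the identity. The useful bookkeeping device is, for each qubit $a$, the unique representation (up to phase) of a one-qubit Pauli $P$ as $P = (\sigma_\phi^{(a)})^{w}(\sigma_s^{(a)})^{v}$ with $w,v \in \mathbb{Z}_2$; write $v_a(P)$ and $w_a(P)$ for these exponents. Both $v_a$ and $w_a$ are additive mod $2$ under multiplication of Paulis, $v_a$ equals $0$ exactly on $\{I^{(a)}, \sigma_\phi^{(a)}\}$ and $1$ on $\{\sigma_s^{(a)}, \sigma_{s\phi}^{(a)}\}$, while $w_a$ equals $0$ exactly on $\{I^{(a)}, \sigma_s^{(a)}\}$. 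In this language the conditions of Definitions~\ref{CorrOp} and~\ref{GaugeOp} say precisely that certain local weights are forced to vanish.

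First I would dispose of the case $L \neq \emptyset$. Fix any $a \in L$, so $a \in (O_{\text{comp}})^c$, and evaluate $v_a$ on the restriction to qubit $a$ of each factor of $\tilde{K}(J,L)$. The factor $K(a)$ contributes $v_a(K(a)|_a) = 1$ by the first line of Eq.~(\ref{CorrCorr}). Every other correction factor $K(a')$ with $a' \in L \setminus a$ contributes $0$ by the second line of Eq.~(\ref{CorrCorr}), using $a \in (O_{\text{comp}})^c$. Every gauge factor $\overline{K}(i)$ with $i \in J$ also contributes $0$: if $i = a$ then $\overline{K}(a)|_a = \sigma_\phi^{(a)}$; if $i \neq a$, then since $a \in (O_{\text{comp}})^c$ the qubit $a$ falls under the second or third line of Eq.~(\ref{CorrGauge}) according to whether $a \in I_{\text{gauge}}$ or not (the fourth line cannot apply, as it requires $a \in O_{\text{comp}}$), and in both cases $\overline{K}(i)|_a \in \{I^{(a)}, \sigma_\phi^{(a)}\}$. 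Summing the weights, $v_a(\tilde{K}(J,L)|_a) = 1$, so the Pauli part of $\tilde{K}(J,L)$ at $a$ lies in $\{\sigma_s^{(a)}, \sigma_{s\phi}^{(a)}\}$ and in particular $\tilde{K}(J,L) \neq I^{(\Omega)}$ --- a contradiction. Hence $L = \emptyset$ and $J \neq \emptyset$.

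The remaining case $L = \emptyset$, $J \neq \emptyset$, would be handled symmetrically with $w$-weights. Fix $i \in J \subseteq I_{\text{gauge}}$. The factor $\overline{K}(i)$ contributes $w_i(\overline{K}(i)|_i) = 1$, while every other gauge factor $\overline{K}(i')$, $i' \in J \setminus i$, contributes $0$: because $i \in I_{\text{gauge}}$, the qubit $i$ falls under the second line of Eq.~(\ref{CorrGauge}) if $i \in (O_{\text{comp}})^c$ (giving $\overline{K}(i')|_i = I^{(i)}$) or the fourth line if $i \in O_{\text{comp}}$ (giving $\overline{K}(i')|_i \in \{I^{(i)}, \sigma_s^{(i)}\}$), and both force $w_i = 0$; and there are no correction factors since $L = \emptyset$. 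Thus $w_i(\tilde{K}(J,\emptyset)|_i) = 1$, so $\tilde{K}(J,\emptyset) \neq I^{(\Omega)}$, again a contradiction. Together the two cases prove that no nonempty product $\tilde{K}(J,L)$ can equal $I^{(\Omega)}$.

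The argument is essentially bookkeeping, and the only point that I expect to require genuine care is verifying that the case lists in Eqs.~(\ref{CorrCorr}) and~(\ref{CorrGauge}) are exhaustive precisely for the qubit under scrutiny --- in particular that a qubit lying in both $J$ and $L$, which is possible since $I_{\text{gauge}} \subseteq I$ and $(O_{\text{comp}})^c \supseteq O^c$ (by Eq.~(\ref{OOc})) need not be disjoint, still receives its sole $\sigma_s$-weight contribution at that location from its own correction operator --- and correctly tracking the exclusions of the form $\setminus i$ in the definitions. Beyond this, nothing deep is involved.
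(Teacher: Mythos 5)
Your proof is correct and follows essentially the same route as the paper's: argue by contradiction, observe that the $\sigma_s$-type contribution of $K(a)$ at $a\in L$ cannot be cancelled by any other factor (forcing $L=\emptyset$), and then that the $\sigma_\phi$-contribution of $\overline{K}(i)$ at $i\in J$ cannot be cancelled (forcing $J=\emptyset$). The explicit $v/w$ exponent bookkeeping and the careful case analysis of Eqs.~(\ref{CorrCorr}) and (\ref{CorrGauge}) simply make rigorous what the paper states more tersely.
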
 

{\em{Proof of Lemma~\ref{indepL}.}} (indirect) Assume that there exists a pair $J,L$, with $J \neq \emptyset\,\vee\, L \neq \emptyset$, such that $\tilde{K}(J,L) = I^{(\Omega)}$. Then, for all $b \in (O_{\text{comp}})^c$, $\tilde{K}(J,L)|_b=1$. Now, with Eq.~(\ref{CorrCorr}), $K(b)|_b \in \{\sigma_s^{(b)}, \sigma_{s\phi}^{(b)}\}$, and $K(c)|_b \in  \{\sigma_\phi^{(b)}, I^{(b)}\}$ for all $c \in (O_{\text{comp}})^c\backslash b$. With Eq.~(\ref{CorrGauge}), $\overline{K}(i)|_b \in \{\sigma_\phi^{(b)}, I^{(b)}\}$ for all $i \in I_{\text{gauge}}$. Therefore, no other $K(\cdot)$, $\overline{K}(\cdot)$ can cancel a $\sigma_s^{(b)}$-contribution from $K(b)$ to $\tilde{K}(J,L)$. Hence, $b \not\in L$ for all $b \in (O_{\text{comp}})^c$, and thus $L = \emptyset$. By an analogous argument, no $K(\cdot)$ can cancel the $\sigma_\phi^{(i)}$-contribution of $\overline{K}(i)$ to $\tilde{K}(J,L)$, hence $i \not \in J$ for all $i \in I_{\text{gauge}}$, and $J=\emptyset$. Thus, $\tilde{K}(J,L) = I^{(\Omega)} \Longrightarrow J,L=\emptyset$. Contradiction. Hence, the $K(a)$, $\overline{K}(i)$ are independent. $\Box$ \medskip

{\em{Remark:}} For any given MBQC with fixed temporal relation, the computational output set $O_{\text{comp}}$ is a subset of the output set $O$, see Eq.~(\ref{OOc}). But $O_{\text{comp}}$ and $O$ are not necessarily equal. Likewise, $I_{\text{gauge}}\subseteq I$ by definition, but $I_{\text{gauge}}$ and $I$ may not be equal. To illustrate this point, we consider the following two examples.

{\em{Example 1.}} Consider the 3-qubit cluster state of Section~\ref{OTO}, with measurement planes $[X/Y]$ for all three qubits. We consider the correction operators $K(1)=K_2$, $K(2)=K_3$ for qubits 1 and 2, and gauge operator $\overline{K}(1)=K_1K_3$.  $O_{\text{comp}}=\{3\}$ and $I_{\text{gauge}}=\{1\}$ are then permitted by Eqs.~(\ref{CorrCorr}) and (\ref{CorrGauge}), respectively. As was discussed previously for the above choice of correction operations, $I=\{1\}$ and $O=\{3\}$. Thus, in the present example $I_{\text{gauge}}=I$ and $O_{\text{comp}}=O$. 

{\em{Example 2.}} Consider a Greenberger-Horne-Zeilinger state $|GHZ\rangle = (|000\rangle + |111\rangle)/\sqrt{2}$ as resource state, with all three qubits measured in a basis in the $[X,Y]$-plane. We use the stabi\-lizer elements $K(1):=Z_1Z_3=\sigma_s^{(1)}\sigma_s^{(3)}$ and $K(2):=Z_2Z_3=\sigma_s^{(2)}\sigma_s^{(3)}$ as correction operations for qubits 1 and 2, and $\overline{K}(1):=X_1X_2X_3 = \sigma_\phi^{(1)}\sigma_\phi^{(2)}\sigma_\phi^{(3)}$ as gauge operator. Then, the choice $I_{\text{gauge}}=\{1\}$ and $O_{\text{comp}}=\{3\}$ is admitted by Eqs.~(\ref{CorrCorr}) and (\ref{CorrGauge}), respectively. On the other hand, this is an example of a temporarily flat MQC, $T=0$. Therefore, $I=O=\{1,2,3\}$, and $O_{\text{comp}}\neq O$, $I_{\text{gauge}} \neq I$.

\begin{Lemma}\label{sizes}
For any MBQC on a stabilizer state, $|I_{\text{gauge}}| \leq |O_{\text{comp}}|$. 
\end{Lemma}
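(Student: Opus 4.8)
The plan is to exploit a counting / linear-independence argument over $\mathbb{Z}_2$, using the correction and gauge operators as elements of the stabilizer group and tracking their ``input side'' action. The key idea: each gauge operator $\overline{K}(i)$, $i \in I_{\text{gauge}}$, must act nontrivially (as $\sigma_\phi^{(i)}$) on its own site $i$ but trivially on every other site of $I_{\text{gauge}} \cap (O_{\text{comp}})^c$, while on sites $l \in I_{\text{gauge}} \cap O_{\text{comp}}$ it is allowed to carry a $\sigma_s^{(l)}$. Symmetrically, each correction operator $K(a)$, $a \in (O_{\text{comp}})^c$, carries a $\sigma_s^{(a)}$ or $\sigma_{s\phi}^{(a)}$ on $a$ and only $I$ or $\sigma_\phi$ on the other sites of $(O_{\text{comp}})^c$. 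First I would recast the statement as: the map sending each $i \in I_{\text{gauge}}$ to the ``$\sigma_s$-pattern'' of $\overline{K}(i)$ restricted to $O_{\text{comp}}$ is injective, which immediately forces $|I_{\text{gauge}}| \le |O_{\text{comp}}|$.

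Concretely, for each $i \in I_{\text{gauge}}$ define $\mathbf{r}(i) \in \mathbb{Z}_2^{O_{\text{comp}}}$ to be the vector whose $l$-th entry is $1$ iff $\overline{K}(i)|_l \in \{\sigma_s^{(l)},\sigma_{s\phi}^{(l)}\}$, for $l \in O_{\text{comp}}$. (By Definition~\ref{GaugeOp}, on the part of $O_{\text{comp}}$ lying inside $I_{\text{gauge}}$ the only non-identity option is $\sigma_s$, and outside $I_{\text{gauge}}$ the fourth line of Eq.~(\ref{CorrGauge}) does not literally apply, but on sites of $(O_{\text{comp}})$ that are not in $I_{\text{gauge}}$ one still has a well-defined $\sigma_s$-component; the point is only which sites of $O_{\text{comp}}$ receive an $s$-type Pauli.) The claim is that $\{\mathbf{r}(i)\}_{i \in I_{\text{gauge}}}$ is linearly independent over $\mathbb{Z}_2$. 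Suppose $\sum_{i \in J}\mathbf{r}(i) = 0$ for some nonempty $J \subseteq I_{\text{gauge}}$; consider $\tilde{K}(J,\emptyset) = \prod_{i \in J}\overline{K}(i)$. On every site of $(O_{\text{comp}})^c$, the product is of the form $\prod I$ or $\sigma_\phi$ — in particular it contains no $\sigma_s$-type factor anywhere in $(O_{\text{comp}})^c$; and by the hypothesis $\sum_{i\in J}\mathbf{r}(i)=0$ it contains no $\sigma_s$-type factor on $O_{\text{comp}}$ either. Hence $\tilde{K}(J,\emptyset)$ is, up to phase, a tensor product of $I$'s and $\sigma_\phi$'s. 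Now I would invoke Lemma~\ref{indepL} (with $L=\emptyset$): $\tilde{K}(J,\emptyset) \ne I^{(\Omega)}$, so it must carry some $\sigma_\phi^{(c)}$; but by the second line of Eq.~(\ref{CorrGauge}) the $\sigma_\phi$-component on each $i \in J \subseteq I_{\text{gauge}}$ cancels pairwise to zero among the $\overline{K}(i)$, $i\in J$ (each such site receives $\sigma_\phi$ from exactly one factor, namely $\overline{K}(i)$ itself), so in fact $\tilde{K}(J,\emptyset)$ has no $\sigma_\phi$ on $I_{\text{gauge}}$, forcing the surviving $\sigma_\phi$'s to sit on $(I_{\text{gauge}})^c \cap (O_{\text{comp}})^c$ — and this is the configuration I must rule out to reach a contradiction with $\tilde{K}(J,\emptyset)\ne I$.

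The main obstacle, therefore, is precisely this last point: showing that a nonempty product of gauge operators cannot reduce to a nontrivial tensor product of $\sigma_\phi$'s supported entirely on $(I_{\text{gauge}})^c \cap (O_{\text{comp}})^c$. My expected resolution is that such a product $\overline{K} := \tilde{K}(J,\emptyset)$ would itself qualify as a ``gauge-type'' stabilizer element whose presence contradicts the \emph{maximality} implicit in the choice of $I_{\text{gauge}}$ (Definition~\ref{GIS}): a stabilizer element acting only as $\sigma_\phi$ on output-free, non-gauge qubits would mean the $q$-values on those qubits are \emph{also} freely choosable, so they should have been included in $I_{\text{gauge}}$ — or, dually, it contradicts the defining property of the correction operators, since such an element commutes appropriately and would let one enlarge the set of independent corrections. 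Making this maximality argument airtight is the crux; the cleanest route is likely to argue at the level of the generalized processing relations Eq.~(\ref{TO7a}): the columns of $H$ are indexed by a gauge basis, the rows corresponding to $I_{\text{gauge}}$ form (essentially) an identity block by Definition~\ref{GaugeOp}, and a linear dependence $\sum_{i\in J}\mathbf{r}(i)=0$ would make the corresponding columns of the combined $[T\,|\,H]$-type matrix dependent in a way incompatible with $\mathbf{g}$ being a genuine free parameter. Once the independence of $\{\mathbf{r}(i)\}$ is established, $|I_{\text{gauge}}| = \dim \mathrm{span}\{\mathbf{r}(i)\} \le \dim \mathbb{Z}_2^{O_{\text{comp}}} = |O_{\text{comp}}|$, which is the assertion. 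I would also double-check the edge cases $I_{\text{gauge}} = \emptyset$ (trivial) and the overlap $I_{\text{gauge}} \cap O_{\text{comp}} \ne \emptyset$, since Example 2 shows this overlap can be nonempty and the definition of $\mathbf{r}(i)$ must be consistent there.
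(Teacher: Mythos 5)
Your plan rests on the claim that the vectors $\mathbf{r}(i) \in \mathbb{Z}_2^{O_{\text{comp}}}$ (the $\sigma_s$-patterns of the gauge operators restricted to $O_{\text{comp}}$) are linearly independent, and that claim is false in general, so the gap you yourself flag as ``the crux'' cannot be closed along the route you propose. A counterexample: take the resource state $|+\rangle^{\otimes n}$ with measurement planes $[X,Y]$ everywhere, so that $\sigma_\phi = X$ and $\sigma_s = Z$. The stabilizer consists entirely of products of $X$'s, so no correction operator exists and $O_{\text{comp}} = \Omega$, while each $X_i$ is a valid gauge operator, so $I_{\text{gauge}} = \Omega$. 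Every $\mathbf{r}(i)$ is then the zero vector, hence the family is linearly dependent, yet the lemma holds (with equality). There is also an internal slip in your cancellation step: if each site $i \in J$ receives a $\sigma_\phi$ from exactly one factor, namely $\overline{K}(i)$ itself, then the product $\tilde{K}(J,\emptyset)$ \emph{retains} $\sigma_\phi^{(i)}$ at every such site rather than losing it, so the hypothetical pure-$\sigma_\phi$ product is supported on $J \subseteq I_{\text{gauge}}$, not on $(I_{\text{gauge}})^c \cap (O_{\text{comp}})^c$ --- and there is no contradiction to be extracted, since a nontrivial pure-$\sigma_\phi$ product of gauge operators is perfectly consistent with all the definitions (the example above realizes exactly this).

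The statement follows instead from a counting argument that uses Lemma~\ref{indepL} in full strength (with $L \neq \emptyset$ as well), which is how the paper proves it inside the proof of Lemma~\ref{NF}. The correction operators $K(a)$, $a \in (O_{\text{comp}})^c$, and the gauge operators $\overline{K}(i)$, $i \in I_{\text{gauge}}$, are jointly independent elements of ${\cal{S}}(|\Psi\rangle)$ by Lemma~\ref{indepL}; since the stabilizer group of a stabilizer state on $\Omega$ has exactly $|\Omega|$ independent generators, $|(O_{\text{comp}})^c| + |I_{\text{gauge}}| \leq |\Omega|$, i.e.\ $|I_{\text{gauge}}| \leq |O_{\text{comp}}|$. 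Your instinct to use an independence count is right, but the relevant independence is that of the full Pauli operators (correction and gauge operators together) inside the stabilizer group, not that of their $\sigma_s$-restrictions to $O_{\text{comp}}$.
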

We prove Lemma~\ref{sizes} in Section~\ref{stabGNF}.

\begin{Def}
A pair $I_{\text{gauge}}$, $O_{\text{comp}}$ is called extremal iff $|I_{\text{gauge}}| =|O_{\text{comp}}|$.
\end{Def}
As will become clear in the next section, extremal pairs $I_{\text{gauge}}$, $O_{\text{comp}}$ are easier to handle than general pairs, and are not very restrictive (c.f. Theorem~\ref{Extr}).\medskip

We still need to relate the classical output vector $\textbf{o}$ appearing in Eq.~(\ref{TO7b}) to the set $O_{\text{comp}}$. To this end, we make the following

\begin{Def}[Optimal classical output]\label{oco} A classical output vector $\textbf{o}$, with processing relations $\textbf{o} = Z \textbf{s} + R\textbf{g}$, is optimal iff the following conditions hold  
\begin{enumerate}
\item{{\em{Maximality:}} Upon left-multiplication by an invertible matrix, $Z$ can be brought into a unique normal form 
\begin{equation}
  \label{ZN}
  Z \sim \left({\tt{Z}}|I\right),
\end{equation}  
where the column split is between $(O_{\text{comp}})^c|O_{\text{comp}}$, and}
\item{{\em{Determinism:}} For the matrix ${\tt{Z}}$ in Eq.~(\ref{ZN}),
\begin{equation}
\label{ZD}
[{\tt{Z}}]_{ij}=1 \Longleftrightarrow K(j)|_i \in \{ \sigma_s^{(i)}, \sigma_{s\phi}^{(i)}\}.
\end{equation}}
\end{enumerate}
\end{Def}
In other words Eq.~\eqref{ZD} informs us that the $j$th column of ${\tt{Z}}$ is simply the 
restriction of the  support of $K(j)$ to $O_{\text{comp}}$.

The reason for defining an `optimal classical output' besides a `classical output' is the following: One could, in principle, run an MBQC perfectly deterministically and then choose $\textbf{o}$ such that nothing is outputted at all, or all outputted bits, independent of the measurement angles chosen, are zero guaranteed or perfectly random guaranteed. The above definition of an optimal classical output eliminates such choices. Maximality says that there is one bit of optimal output per qubit of $O_{\text{comp}}$. The determinism condition can be understood from the correction procedure explained in Eq.~(\ref{corr1}). For a qubit $j \in (O_{\text{comp}})^c$, we account for the undesired outcome $s_j=1$ by inserting $K(j)$ into the state overlap $\langle \varphi_{\text{loc}}|\Psi\rangle = \langle \varphi_{\text{loc}}|K(j)|\Psi\rangle$. Now consider a qubit $i \in O_{\text{comp}}$. By Eq.~(\ref{ZN}), $s_i$ contributes to the output bit $o_i$, the $i$-th bit of $\textbf{o}$. If $K(j)|_i \in \{\sigma_s^{(i)},\sigma_{s\phi}^{(i)}\}$, then the insertion of $K(j)$ into the overlap flips $s_i$, i.e., $s_i \longrightarrow s_i \oplus 1$. This needs to be taken into account when reading out $s_i$. The linear combination $s_j \oplus s_i$ remains unaffected by the correction for $s_j$. 

We have now made the definitions needed to state and prove our results on temporal order in MBQC. In Section~\ref{stabGNF} we establish a normal form for the stabilizer generator matrix of the resource state $|\Psi\rangle$. This normal form is the basis for our results on MBQC temporal order, which are stated in Sections~\ref{Intdep} and \ref{GTO}.

\subsection{Normal form of the resource state stabilizer}
\label{stabGNF}  

Let us briefly review which pieces of information specify an MBQC. A priori, there are four: the set of measurement angles, the set $\Sigma$ of measurement planes, the resource state $|\Psi\rangle$ and the classical processing relations Eq.~(\ref{TO7a},\ref{TO7b}). The measurement angles entirely drop out of all our considerations about temporal order. Next, we observe that when specifying the measurement planes and the resource state separately, we really specify too much. Starting from a given pair $|\Psi\rangle ,\Sigma$ of resource state and set of measurement planes, for any local Clifford unitary $U$, the pair $U|\Psi\rangle, U(\Sigma)$ obtained by applying $U$ to both the measurement planes $\Sigma$ and the stabilizer state $|\Psi\rangle$ is again a valid pair, i.e., it consists of a set of a stabilizer state and a set of measurement planes. Furthermore, it amounts to exactly the same computation as the original pair. The pair $|\Psi\rangle ,\Sigma$ is thus redundant. To remove this redundancy, we combine the measurement planes and the stabilizer state $|\Psi\rangle$ into the stabilizer generator matrix ${\cal{G}}(|\Psi\rangle)$ in the $\sigma_\phi/\sigma_s$-stabilizer basis, 
\begin{equation}
\label{Stab}
{\cal{G}}(|\Psi\rangle) = (\Phi||S). 
\end{equation}
Therein, the columns to the left (right) form the $\sigma_\phi$ ($\sigma_s$-) part of the stabilizer generator matrix. ${\cal{G}}(|\Psi\rangle)$ comprises all information from the resource state $|\Psi\rangle$ and the set of measurement planes $\Sigma$ relevant for the discussion of MBQC.  We do not need to know the state and the measurement planes separately. 

Now note that the correction operators $K(a)$, $a \in (O_{\text{comp}})^c$, and the gauge operators $\overline{K}(i)$, $i \in I_{\text{gauge}},$ are all elements of the stabilizer ${\cal{S}}(|\Psi\rangle)$, and, by Lemma~\ref{indepL}, are independent. Thus, they either form or can be completed to a set of generators for ${\cal{S}}(|\Psi\rangle)$. This observation leads us to the following
\begin{Lemma}\label{NF}
For any MBQC on a stabilizer state $|\Psi\rangle$ with extremal $I_{\text{gauge}}$, $O_\text{comp}$, the generator matrix ${\cal{G}}$ of ${\cal{S}}(|\Psi\rangle)$ can be written in the normal form
 \begin{equation}
\label{GNF}
\begin{array}{c} \mbox{ } \\ \mbox{ } \\ \\ {\cal{G}} \cong\end{array} 
 \begin{array}{c}
    \sigma_\phi\hspace{1.8cm}\sigma_s\mbox{ }\\
    \mbox{}\;I_{\text{gauge}}\;\;(I_{\text{gauge}})^c\; (O_{\text{comp}})^c\; O_{\text{comp}}\\ \\
  \left(\begin{array}{c|c||c|c}
    \parbox{0.9cm}{\begin{center}$0$\end{center}} &  \parbox{0.9cm}{\begin{center}${\tt{T}}^T$\end{center}} & \parbox{0.9cm}{\begin{center}$I$\end{center}} & \parbox{0.9cm}{\begin{center}${\tt{Z}}^T$\end{center}}\\ \hline
    \parbox{0.9cm}{\begin{center}$I$\end{center}} & \parbox{0.9cm}{\begin{center}${\tt{H}}^T$\end{center}} & \parbox{0.9cm}{\begin{center}$0$\end{center}} & \parbox{0.9cm}{\begin{center}${\tt{R}}^T$\end{center}}
  \end{array}\right).
\end{array}
\end{equation}
The matrices $\tt{H}$, $\tt{R}$, $\tt{T}$, $\tt{Z}$ are related to the matrices $H$, $R$, $T$, $Z$ governing the classical processing of measurement outcomes in MBQC, $\textbf{q} = T \textbf{s} + H \textbf{g} \mod 2$ and $\textbf{o} = Z \textbf{s} + R \textbf{g} \mod 2$, via
\begin{equation}
  \label{HRTZshape}
   T =  \left(\begin{array}{c|c} 0 & 0 \\ \hline  {\tt{T}} & 0 \end{array}\right), \;
\begin{array}{rccc}
H  =  \left(\begin{array}{c} I \\ \hline {\tt{H}} \end{array}\right),\; Z = \big(\, {\tt{Z}}\, |\, I\,\big)
\end{array},\, R = \tt{R}.
\end{equation} 
\end{Lemma}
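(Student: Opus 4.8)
The plan is to build the claimed normal form by choosing an explicit generating set for $\mathcal{S}(|\Psi\rangle)$ out of the correction and gauge operators, then reading off the block structure by matching their action against Eqs.~(\ref{CorrCorr}), (\ref{CorrGauge}), (\ref{KTrel}) and the defining relations for the classical processing matrices. Since $|I_\text{gauge}|=|O_\text{comp}|$ (extremality) and, by Eq.~(\ref{OOc}), $O_\text{comp}\subseteq O$, the sets $I_\text{gauge}$ and $(O_\text{comp})^c$ both have size $n-|O_\text{comp}|$; together with Lemma~\ref{indepL} the operators $\{\overline{K}(i)\}_{i\in I_\text{gauge}}\cup\{K(a)\}_{a\in(O_\text{comp})^c}$ are $n$ independent elements of the stabilizer, hence a full generating set. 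So no completion is needed in the extremal case — this is exactly why extremality is assumed. I would order the generators so that the first $|I_\text{gauge}|$ rows of $\mathcal{G}$ are the $\overline{K}(i)$ and the last $n-|I_\text{gauge}| = |(O_\text{comp})^c|$ rows are the $K(a)$, and order the columns as $I_\text{gauge}\mid (I_\text{gauge})^c$ for the $\sigma_\phi$-part and $(O_\text{comp})^c\mid O_\text{comp}$ for the $\sigma_s$-part.

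Next I would fill in the eight blocks one at a time. For the correction-operator rows: Eq.~(\ref{CorrCorr}) says $K(a)|_a\in\{\sigma_s^{(a)},\sigma_{s\phi}^{(a)}\}$ and $K(a)|_b\in\{I,\sigma_\phi^{(b)}\}$ for $b\in(O_\text{comp})^c\setminus a$; writing $\sigma_{s\phi}=\sigma_s\sigma_\phi$ (up to phase), the $\sigma_s$-support of $K(a)$ restricted to $(O_\text{comp})^c$ is exactly $\{a\}$, giving the identity block $I$ in the $(O_\text{comp})^c$-$\sigma_s$ position; the $\sigma_s$-support on $O_\text{comp}$ is, by Definition~\ref{oco}, the $a$-th column of $\tt{Z}$, giving $\tt{Z}^T$; and the $\sigma_\phi$-support is precisely where $K(a)|_b\in\{\sigma_\phi,\sigma_{s\phi}\}$, which by Eq.~(\ref{KTrel}) is $[T]_{ba}=1$ — restricted to columns in $I_\text{gauge}$ this vanishes (since Eq.~(\ref{HRTZshape}) has $T$ supported only on rows $(O_\text{comp})^c$... — more precisely, after we derive the structure of $T$ the $I_\text{gauge}$-columns of the $\sigma_\phi$-block of the $K(a)$-rows are zero), and restricted to $(I_\text{gauge})^c$ gives the block I call $\tt{T}$, hence $\tt{T}^T$ in transposed form. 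For the gauge-operator rows: Eq.~(\ref{CorrGauge}) gives $\overline{K}(i)|_i=\sigma_\phi^{(i)}$ and $\overline{K}(i)|_j=I$ for $j\in I_\text{gauge}\setminus i$, producing the identity block $I$ in the $I_\text{gauge}$-$\sigma_\phi$ position; the remaining $\sigma_\phi$-entries on $(I_\text{gauge})^c$ form $\tt{H}^T$; on $O_\text{comp}$ the allowed $\sigma_s$-contributions $\overline{K}(i)|_l\in\{I,\sigma_s^{(l)}\}$ give $\tt{R}^T$; and on $(O_\text{comp})^c$ we get $0$ because $\overline{K}(i)|_k\in\{I,\sigma_\phi^{(k)}\}$ has no $\sigma_s$-part there.

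Finally I would verify the identification Eq.~(\ref{HRTZshape}) of $\tt{T},\tt{H},\tt{Z},\tt{R}$ with the processing matrices. That $T$ has the displayed block-lower-triangular shape $T=\left(\begin{smallmatrix}0&0\\ \tt{T}&0\end{smallmatrix}\right)$ follows from Eq.~(\ref{OOc}) (the rows of $T$ indexed by $O_\text{comp}$ vanish, since those qubits have no correction operators) together with the input-set structure forcing the $I_\text{gauge}$-columns to vanish (qubits in $I_\text{gauge}\subseteq I$ have empty backward cones as far as the chosen corrections go). The relations $H=\left(\begin{smallmatrix}I\\ \tt{H}\end{smallmatrix}\right)$, $Z=(\tt{Z}\mid I)$, $R=\tt{R}$ then come from reading Eqs.~(\ref{GaugeId}), (\ref{G1}) applied to the chosen generators: the gauge operator $\overline{K}(i)$ implements the unit shift $g_i$, so its $(\sigma_\phi,\sigma_s)$-supports are precisely the $i$-th columns of $(H,R)$, while the correction operator $K(a)$ sets the $a$-th column of $(T,Z)$. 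Matching indices gives exactly Eq.~(\ref{HRTZshape}), and the column-split conventions line up with Eq.~(\ref{ZN}) of Definition~\ref{oco}.

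The main obstacle I anticipate is the bookkeeping of the $\sigma_\phi/\sigma_s$ versus $\sigma_\phi/\sigma_{s\phi}$ distinction: Definitions~\ref{CorrOp} and \ref{GaugeOp} allow $K(a)|_a\in\{\sigma_s,\sigma_{s\phi}\}$ and $\overline{K}(i)$ to carry $\sigma_s$ factors on $O_\text{comp}$, so one must consistently decompose every local Pauli into its $\sigma_\phi$- and $\sigma_s$-exponents (as in the expansion $K=\bigotimes_a(\sigma_s^{(a)})^{v_a}(\sigma_\phi^{(a)})^{w_a}$ preceding Eq.~(\ref{G1})) before the block pattern becomes visible. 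A secondary point requiring care is confirming that the chosen generators really are a complete independent set in the extremal case — this is where $|I_\text{gauge}|=|O_\text{comp}|$ enters essentially, via the dimension count $|I_\text{gauge}|+|(O_\text{comp})^c| = |O_\text{comp}| + (n-|O_\text{comp}|) = n$ — and that Lemma~\ref{indepL} upgrades "spanning by cardinality" to "basis."
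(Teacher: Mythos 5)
Your proposal is correct and follows essentially the same route as the paper: take the correction operators $K(a)$, $a\in(O_{\text{comp}})^c$, and gauge operators $\overline{K}(i)$, $i\in I_{\text{gauge}}$, use extremality plus Lemma~\ref{indepL} to conclude they exhaust the $|\Omega|$ stabilizer generators, and then read off the eight blocks from Eqs.~(\ref{CorrCorr}), (\ref{CorrGauge}), (\ref{KTrel}) and (\ref{ZD}). The only cosmetic difference is that the paper runs the dimension count in the direction that simultaneously proves Lemma~\ref{sizes} ($|I_{\text{gauge}}|\leq|O_{\text{comp}}|$), whereas you invoke extremality directly; the content is the same.
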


\begin{proof}
By Definition~\ref{CorrOp}, a correction operator $K(a)$ exists for every $a\in (O_{\text{comp}})^c$. By~Eq.~(\ref{CorrCorr}) these operators have no support in  $I \supseteq I_{\text{gauge}}$ and are $|(O_{\text{comp}})^c|$ in number and must take the following form:
$\left(\begin{array}{c|c||c|c} 0 & A& I & B\end{array} \right)$, where the $\sigma_{\phi}$ part is 
split as $I_{\text{gauge}}$, $I_{\text{gauge}}^c$ while the $\sigma_s$-part is split as 
$O_{\text{comp}}$, $(O_{\text{comp}})^c$.
The gauge  operators are, by definition, of the form: $\left(\begin{array}{c|c||c|c} I & C& 0 &D \end{array} \right)$, where the column splits are as for the correction operators. 
Since there are $|\Omega|$ generators for the stabilizer ${\cal{G}}(|\Psi\rangle) $, of which
$|(O_{\text{comp}})^c|$ are already accounted for, there can be at most 
$|O_{\text{comp}}|$ independent  gauge operators, i.e., $|I_{\text{gauge}}|\leq |O_{\text{comp}}|$ which proves Lemma~\ref{sizes}.
In the present setting, the pair $I_{\text{gauge}}$, $O_{\text{comp}}$ is extremal by assumption, 
thus these two sets of generators exhaust the stabilizer generators and we can write the stabilizer
as 
\begin{equation}
\label{GN2}
\begin{array}{c} \mbox{ } \\ \mbox{ } \\ \\ {\cal{G}}(|\Psi\rangle) = \end{array} 
 \begin{array}{c}
    \mbox{}\;I_{\text{gauge}}\;\;(I_{\text{gauge}})^c\; (O_{\text{comp}})^c\; O_{\text{comp}}\\ \\
  \left(\begin{array}{c|c||c|c}
    \parbox{0.9cm}{\begin{center}$0$\end{center}} &  \parbox{0.9cm}{\begin{center}$A$\end{center}} & \parbox{0.9cm}{\begin{center}$I$\end{center}} & \parbox{0.9cm}{\begin{center}$B$\end{center}} \\ \hline
     \parbox{0.9cm}{\begin{center}$I$\end{center}} & \parbox{0.9cm}{\begin{center}$C$\end{center}} & \parbox{0.9cm}{\begin{center}$0$\end{center}} & \parbox{0.9cm}{\begin{center}$D$\end{center}}
  \end{array}\right),
\end{array}
\end{equation}
for suitable matrices $A$, $B$, $C$ and $D$. 
We now need to identify these matrices. By definition, $I_{\text{gauge}} \subseteq I$.  Measurement outcomes on qubits $a \in O_{\text{comp}}$ are not corrected, hence $fc(a)=\emptyset$ for all $a \in O_{\text{comp}}$, and $O_{\text{comp}} \subseteq O$ follows from the definition of the output set $O$. Then, the influence matrix $T$ takes  the form
\begin{equation}
\label{Tshape}
T =  \left(\begin{array}{c|c} 0 & 0 \\ \hline  {\tt{T}} & 0 \end{array}\right),
\end{equation}
where the column split is $(O_{\text{comp}})^c|O_{\text{comp}}$ and the row split is $I_{\text{gauge}}|(I_{\text{gauge}})^c$. 
Now consider the correction operator  $K(a)$ for $a \in (O_{\text{comp}})^c$ in the upper part of ${\cal{G}}(|\Psi\rangle)$ in Eq.~(\ref{GN2}). We already know from Eq.~\eqref{KTrel}, that the $\sigma_{\phi}$ part of $K(a)$ is the forward cone of $a$. Therefore we must have  $A={\tt{T}}^T$. Further comparing, with Eq.~\eqref{ZD}
which states that the restriction of the $\sigma_{s}$ part of the correction operator $K(a)$ to 
$O_{\text{comp}}$ is the  $a$th column of $\tt{Z}$. But this is precisely the $a$th row of $B$, thus 
we infer that  $B={\tt{Z}}^T$.

Next, consider row $i$ of the lower part of ${\cal{G}}(|\Psi\rangle)$ in Eq.~(\ref{GN2}). Row $i$ is $(0,..,0,1,0,..,0|\textbf{c}^T||\textbf{0}|\textbf{d}^T)$. The corresponding stabilizer operator $\overline{K}(i)$, when inserted into the overlap $\langle \varphi_{\text{loc}}|\Psi\rangle$ as in Eq.~(\ref{corr1}), flips the measurement basis at qubit $i$ and of qubits $l \in (I_{\text{gauge}})^c$ with $[\textbf{c}]_l=1$. It further flips the measurement outcomes at qubits $m \in O_{\text{comp}}$ with $[\textbf{d}]_m=1$. Therefore,
$$
\begin{array}{lcl}
  H =  \left(\begin{array}{c} I \\ \hline {\tt{H}} \end{array}\right),
  & \mbox{with} & {\tt{H}} = C^T,\; \mbox{and}\\
  R = {\tt{R}}, & \mbox{with} & {\tt{R}} = D^T.
\end{array}
$$
We thus arrive at the normal form Eq.~(\ref{GNF}). 
\end{proof}

\section{Interdependence of resource state and temporal order}
\label{Intdep}

Let us return to our discussion from the beginning of Section~\ref{stabGNF}, on which pieces of information are needed to describe an MBQC. At this stage, apart from the set of measurement angles which do not enter our discussion, we remain with two pieces of data specifying a given MBQC, namely ${\cal{G}}(|\Psi\rangle)$ and the processing relations Eq.~(\ref{TO7a}), (\ref{TO7b}). But it doesn't stop there. As the results of \cite{Gflow}, \cite{EffFlow}, \cite{Mhalla} show, the resource state $|\Psi\rangle$, the measurement planes $\Sigma$ and the influence matrix $T$---being part of the classical processing relations Eq.~(\ref{TO7a}), (\ref{TO7b})---are not independent. Specifically, given the sets $I$ of first-measurable and $O$ of last-measurable qubits in addition to $|\Psi\rangle$ and $\Sigma$, the temporal order (generated by $T$) can be worked out completely.

Here we prove a statement about the interdependence of ${\cal{G}}(|\Psi\rangle)$ and the MBQC classical processing relations which goes the opposite direction. Namely we show that the classical processing relations Eq.~(\ref{TO7a}), (\ref{TO7b}) uniquely specify the stabilizer generator matrix ${\cal{G}}(|\Psi\rangle)$, i.e., the pair $|\Psi\rangle$, $\Sigma$ up to equivalence; See Theorem~\ref{OneToOneTwo} below. Thus, only two pieces of data are needed to specify an MBQC that satisfies the determinism constraints, namely the measurement angles and the classical processing relations for the measurement outcomes.

A further question is whether the temporal relations compatible with a resource state $|\Psi\rangle$ and set of measurement planes $\Sigma$ fit into a common framework. In this regard, we show that the classical processing relations (containing the temporal order) for MBQC with a fixed resource state and set of measurement planes, for extremal pairs $I_{\text{gauge}}, O_{\text{comp}}$, are in one-to-one correspondence with the bases of  ${\cal{G}}(|\Psi\rangle)$, c.f. Theorem~\ref{OneToOneThree}. 

\subsection{Results}

We now present four theorems on the mutual dependence of the resource state and the classical processing relations. 

\begin{Theorem}
  \label{OneToOneOne}
  Consider an MBQC on a stabilizer state $|\Psi\rangle$, with fixed measurement planes and an extremal pair of gauge input set $I_{\text{gauge}}$ and computational output set $O_{\text{comp}}$. Then, the relations $\textbf{q} = T \textbf{s} + H \textbf{g} \mod 2$, and $\textbf{o} = Z \textbf{s} + R \textbf{g} \mod 2$ for an optimal output $\textbf{o}$ are unique.
\end{Theorem}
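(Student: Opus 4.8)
The plan is to deduce everything from Lemma~\ref{NF}. By that lemma, for an MBQC with an extremal pair $I_{\text{gauge}},O_{\text{comp}}$ the stabilizer generator matrix ${\cal G}(|\Psi\rangle)$ can be brought to the normal form Eq.~(\ref{GNF}), and the four blocks $\mathtt{T},\mathtt{H},\mathtt{Z},\mathtt{R}$ of that form determine $T,H,Z,R$ through Eq.~(\ref{HRTZshape}), together with the partitions $\Omega = I_{\text{gauge}}\sqcup (I_{\text{gauge}})^c = (O_{\text{comp}})^c\sqcup O_{\text{comp}}$, which are part of the given data. So it suffices to show that these four blocks are functions of the stabilizer group ${\cal S}(|\Psi\rangle)$ alone (equivalently, of its check matrix viewed as a subspace of $\mathbb{F}_2^{\,2|\Omega|}$), of the measurement planes (which fix the $\sigma_\phi/\sigma_s$ basis at each qubit), and of the pair $(I_{\text{gauge}},O_{\text{comp}})$ --- with no dependence on the auxiliary choices, such as which correction operators $K(a)$ or gauge operators $\overline{K}(i)$ were picked.

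The key step is to recognise Eq.~(\ref{GNF}) as a \emph{canonical} form. Reorder the columns so that the pivot coordinates come first --- i.e.\ order them as the $\sigma_\phi$-columns of $I_{\text{gauge}}$, then the $\sigma_s$-columns of $(O_{\text{comp}})^c$, then the $\sigma_\phi$-columns of $(I_{\text{gauge}})^c$, then the $\sigma_s$-columns of $O_{\text{comp}}$, and sort the rows accordingly. Then the matrix of Eq.~(\ref{GNF}) becomes $\left(\begin{array}{cc|cc} I & 0 & \mathtt{H}^T & \mathtt{R}^T \\ 0 & I & \mathtt{T}^T & \mathtt{Z}^T\end{array}\right)$, which is exactly a reduced row echelon form: the first $|I_{\text{gauge}}|+|(O_{\text{comp}})^c|$ columns are pivot columns, each carrying a single $1$, and the remaining columns are free. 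Extremality is used twice here: it gives the dimension count $\dim {\cal S}(|\Psi\rangle) = |\Omega| = |I_{\text{gauge}}|+|(O_{\text{comp}})^c|$, so that every row carries a pivot and no further generators remain; and (via Lemma~\ref{NF}) it is what lets the pivots sit in exactly the claimed, data-determined positions $[\sigma_\phi:I_{\text{gauge}}]\sqcup[\sigma_s:(O_{\text{comp}})^c]$.

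Since the reduced row echelon form of a matrix depends only on its row space and on the chosen column ordering, and both of these are fixed by the given data, the normal form Eq.~(\ref{GNF}) --- hence each of $\mathtt{T},\mathtt{H},\mathtt{Z},\mathtt{R}$, hence each of $T,H,Z,R$ --- is unique. The optimality hypothesis on $\textbf{o}$ is what pins down the output side: maximality, Eq.~(\ref{ZN}), forces $Z\sim(\mathtt{Z}\,|\,I)$ with the column split $(O_{\text{comp}})^c\,|\,O_{\text{comp}}$ (so the output register is indexed by $O_{\text{comp}}$), and the determinism condition Eq.~(\ref{ZD}) identifies $\mathtt{Z}$ with the $O_{\text{comp}}$-restriction of the $\sigma_s$-parts of the correction operators, i.e.\ with the block $\mathtt{Z}^T$ of Eq.~(\ref{GNF}); without optimality one could instead output nothing, or only guaranteed-zero bits, and the statement would be false. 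Likewise $R=\mathtt{R}$ and the fact that $\textbf{g}$ is indexed by $I_{\text{gauge}}$ are fixed once the gauge sector is put in normal form.

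The step I expect to need the most care is precisely the identification of Eq.~(\ref{GNF}) with a genuine RREF: one must check that the block structure really does make every pivot column a single-$1$ column lying in the fixed positions $[\sigma_\phi:I_{\text{gauge}}]\sqcup[\sigma_s:(O_{\text{comp}})^c]$, leaning on Lemma~\ref{NF} for existence of the form and on extremality for the pivot count, after which uniqueness of RREF closes the argument. A related subtlety worth flagging in the write-up is that a correction operator obeying only Eq.~(\ref{CorrCorr}) is itself \emph{not} unique --- it may be multiplied by a gauge operator $\overline{K}(i)$, $i\in I_{\text{gauge}}$ --- so the literal matrix $T$ attached to an MBQC is well defined only once the gauge sector has been fixed. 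Demanding the normal form (equivalently, that correction operators carry no $\sigma_\phi$-support on $I_{\text{gauge}}$, which is exactly the RREF operation of clearing the gauge pivot columns) is what removes this freedom, and this is implicit in the hypotheses ``extremal pair'' and ``optimal output''.
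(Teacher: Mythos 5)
Your proposal is correct and follows the paper's route: the paper's entire proof of Theorem~\ref{OneToOneOne} is the single sentence that it is ``an immediate consequence of Lemma~\ref{NF}.'' Your reduced-row-echelon-form argument (uniqueness of the normal form given the row space of ${\cal G}(|\Psi\rangle)$ and the column ordering fixed by the measurement planes and the pair $I_{\text{gauge}},O_{\text{comp}}$) is a careful filling-in of exactly the step the paper leaves implicit, including the observation that the non-uniqueness of individual correction operators is precisely the gauge freedom cleared by the normal form.
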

That is, once the resource state $|\Psi\rangle$, the measurement planes and $I_{\text{gauge}}$, $O_{\text{comp}}$ are fixed, there is no freedom left to choose the classical processing relations. They are uniquely determined by the former. In particular, for fixed stabilizer state $|\Psi\rangle$ and measurement planes, $T=T(I_{\text{gauge}},O_{\text{comp}})$, $H=H(I_{\text{gauge}},O_{\text{comp}})$ etc. 

A corollary of Theorem~\ref{OneToOneOne} is that given the measurement planes and an extremal pair $I_{\text{gauge}}$, $O_{\text{comp}}$, the resource state $|\Psi\rangle$ uniquely determines the influence matrix $T$. One may ask how restrictive a condition the extremality of the pair $I_{\text{gauge}}$, $O_{\text{comp}}$ is. In this regard, note

\begin{Theorem}\label{Extr} Consider an MBQC on a fixed resource stabilizer state for fixed measurement planes, with an influence matrix $T$ and input and output sets $I(T)$, $O(T)$, such that no qubit $a \in I^c$ can be individually gauged with respect to $I(T)$, $O(T)$. Then, there exists an extremal pair $I_{\text{gauge}} \subseteq I$, $O_{\text{comp}} \subseteq O$ such that $T = T(I_{\text{gauge}},O_{\text{comp}})$.
\end{Theorem}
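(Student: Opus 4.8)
The plan is to build the extremal pair $(I_{\text{gauge}},O_{\text{comp}})$ by hand, starting from a set of correction operators that realizes the given $T$ and completing it to a set of stabilizer generators already in the normal form of Lemma~\ref{NF}. Write $I:=I(T)$, $O:=O(T)$, $n:=|\Omega|$, and view ${\cal S}(|\Psi\rangle)$ as an $n$-dimensional $\mathbb{F}_2$-space in the $\sigma_\phi/\sigma_s$-basis. Since $T$ comes from an actual MBQC, every $a\in O^c$ is corrected, so there is a correction operator $K(a)$ with $\sigma_\phi$-support $\mathrm{fc}_{T}(a)$ (Eq.~(\ref{KTrel})) whose $\sigma_s$-support meets $O^c$ exactly in $\{a\}$ (Def.~\ref{CorrOp}). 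Because input qubits have empty backward cones, $\mathrm{fc}_{T}(a)\cap I=\emptyset$, so these operators carry no $\sigma_\phi$-support on $I$; and their $\sigma_s$-supports restricted to $O^c$ form the standard basis of $\mathbb{F}_2^{O^c}$, so they are independent. Put $C:=\langle K(a):a\in O^c\rangle$, and let $Z_O$ be the subspace of pure-$\sigma_s$ stabilizer elements supported inside $O$; then $C\cap Z_O=0$.

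To fix $O_{\text{comp}}$: set $d:=\dim Z_O$, choose a $d$-element ``pivot set'' $P\subseteq O$ on which $Z_O$ maps isomorphically onto $\mathbb{F}_2^P$, and put $O_{\text{comp}}:=O\setminus P$. For $a\in P$ the unique element of $Z_O$ equal to $e_a$ on $P$ is pure $\sigma_s$, supported on $\{a\}\cup O_{\text{comp}}$, hence a valid correction operator for $a$ with empty forward cone, so adjoining it leaves the columns of $T$ indexed by $O$ equal to $0$. To fix $I_{\text{gauge}}$: consider the linear map $\pi:K\mapsto\sigma_\phi\text{-supp}(K)|_I$, which kills $C\oplus Z_O$. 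This is where the hypothesis is used: ``no qubit of $I^c$ can be individually gauged with respect to $I,O$'' is precisely what forbids a stabilizer element having no $\sigma_\phi$-support on $I$, no $\sigma_s$-support on $O^c$, and yet nonzero $\sigma_\phi$-support, i.e.\ it forces $\ker\pi=C\oplus Z_O$. Consequently $r:=\dim\,\mathrm{im}\,\pi=n-|O^c|-d=|O|-d=|O_{\text{comp}}|\leq|I|$. Choosing $I_{\text{gauge}}\subseteq I$ to be a pivot set of $\mathrm{im}\,\pi$ gives $|I_{\text{gauge}}|=r=|O_{\text{comp}}|$, so the pair is extremal.

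It remains to produce honest gauge operators. For $i\in I_{\text{gauge}}$ let $u_i\in\mathrm{im}\,\pi$ be the unique image vector equal to $e_i$ on $I_{\text{gauge}}$, lift it to $\hat{K}_i$, and reduce $\hat{K}_i$ modulo $\ker\pi=C\oplus Z_O$ without altering $\pi(\hat{K}_i)$: multiplying by elements of $C$ kills the $\sigma_s$-support on $O^c$ (changing $\sigma_\phi$-support only on $I^c$, since $\mathrm{fc}_{T}(\cdot)\cap I=\emptyset$), and multiplying by elements of $Z_O$ then pushes the remaining $\sigma_s$-support into $O_{\text{comp}}$ --- possible because the only element of $Z_O$ supported in $O_{\text{comp}}$ is trivial, so the $\sigma_s$-supports of $Z_O$ together with $\mathbb{F}_2^{O_{\text{comp}}}$ fill $\mathbb{F}_2^O$. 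After a parallel clean-up of the $K(a)$, $a\in O^c$, against the pure-$\sigma_s$ operators for $P$ (so that their $\sigma_s$-part becomes $I$ on $(O_{\text{comp}})^c$), the operators $\{K(a):a\in O^c\cup P\}\cup\{\overline{K}(i):i\in I_{\text{gauge}}\}$ are $|O^c|+d+r=n$ independent stabilizer elements satisfying Definitions~\ref{CorrOp} and~\ref{GaugeOp}, hence a generating set in the normal form Eq.~(\ref{GNF}). Reading off $H,R,T,Z$ via Eq.~(\ref{HRTZshape}) yields $T(I_{\text{gauge}},O_{\text{comp}})=T$.

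The step I expect to be the crux is the identification $\ker\pi=C\oplus Z_O$ from the ``no qubit of $I^c$ can be individually gauged'' hypothesis: this is the only place the hypothesis enters, and making the implication airtight requires unpacking the exact meaning of that phrase and checking that a stabilizer element in $\ker\pi\setminus(C\oplus Z_O)$, once reduced modulo $C$, genuinely witnesses an ``individually gaugeable'' qubit of $I^c$. A secondary technical nuisance is the clean-up step --- arranging the reductions modulo $C$ and modulo $Z_O$ so that all the support constraints of Definitions~\ref{CorrOp} and~\ref{GaugeOp} hold simultaneously --- with the temporally isolated qubits of $I(T)\cap O(T)$ (as in the GHZ example) being the delicate case, since for $i\in I_{\text{gauge}}\cap O$ one must still arrange $\overline{K}(i)|_i=\sigma_\phi^{(i)}$, which calls for $i\in P$ or else a finer argument.
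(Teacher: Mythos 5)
Your proof is correct and follows essentially the same route as the paper's: your subspace $Z_O$ and pivot set $P$ are the paper's block $D'$ and $\Delta O$, your identification $\ker\pi = C\oplus Z_O$ is the paper's argument that the middle block of rows $(0|C''||0|D'')$ must vanish (which is also exactly where the paper invokes the no-individual-gauging hypothesis), and your pivot set for $\mathrm{im}\,\pi$ is the paper's maximal independent set of columns of $B'''$. The two points you flag as delicate --- promoting an element of $\ker\pi\setminus(C\oplus Z_O)$ to a witness of an individually gaugeable qubit of $I^c$, and the clean-up for qubits in $I_{\text{gauge}}\cap O$ --- are treated no more explicitly in the paper's own proof, so they are not gaps relative to it.
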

The input set $I(T)$ and the output set $O(T)$ which appear in Theorem~\ref{Extr} are uniquely specified by $T$ through Definition~\ref{IOdef}. $T(I_{\text{gauge}},O_{\text{comp}})$ is uniquely specified by the pair $I_{\text{gauge}}$, $O_{\text{comp}}$ through Theorem~\ref{OneToOneOne}.

Theorem~\ref{Extr} states that all temporal relations for an MBQC, subject to the extra condition on the qubits which can be individually gauged, arise from extremal pairs $I_{\text{gauge}}$, $O_{\text{comp}}$. 

By establishing Theorem~\ref{Extr} we trade the condition of the pairs $I_{\text{gauge}}$, $O_{\text{comp}}$ being extremal for the condition that no qubit in $I^c$ can be individually gauged. The latter is a more meaningful condition. Suppose a qubit $a$ in $I^c$ could be individually gauged wrt $I$, $O$. Then $\overline{K}(a)$ exists. For any $b$ with $K(b)|_a=\sigma_\phi^{(a)}$, $\tilde{K}(b) :=K(b)\overline{K}(a)$ is a valid correction operator for qubit $b$, and $\tilde{K}(b)|_a = I^{(a)}$. Hence, $a$ could be removed from all forward cones and thereby be made a qubit in $I$. By imposing the extra condition in Theorem~\ref{Extr}, we exclude temporal relations where certain qubits could be in the input set $I$ but aren't.\medskip 

Theorem~\ref{OneToOneOne} is mute on the question of which extremal pairs $I_{\text{gauge}}$, $O_{\text{comp}}$ are admissible. Theorem~\ref{OneToOneThree} below describes how much freedom remains for the choice of the classical processing relations, given ${\cal{G}}(|\Psi\rangle)$.

\begin{Theorem}\label{OneToOneThree} For MBQC with a fixed resource stabilizer state $|\Psi\rangle$ and fixed measurement planes, the classical processing relations for extremal $I_{\text{gauge}}$, $O_{\text{comp}}$, as specified by the matrices $H$, $R$, $T$, $Z$ and the sets $I_{\text{gauge}}, O_{\text{comp}}$, are in one-to-one correspondence with the bases of the matroid ${\cal{G}}(|\Psi\rangle)$.  
\end{Theorem}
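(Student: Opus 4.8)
The plan is to show the correspondence in both directions, using the normal form of Lemma~\ref{NF} as the bridge between matroid bases and classical processing relations. Recall that the stabilizer generator matrix ${\cal{G}}(|\Psi\rangle) = (\Phi\|S)$ is an $n\times 2n$ binary matrix of rank $n$ (where $n=|\Omega|$), and the matroid ${\cal{G}}(|\Psi\rangle)$ is the column matroid on the $2n$ columns indexed by $\{\sigma_\phi^{(a)}\}_{a\in\Omega}\cup\{\sigma_s^{(a)}\}_{a\in\Omega}$; a basis is a choice of $n$ columns that are linearly independent over $\mathbb{Z}_2$, equivalently a set of generators for the stabilizer in a ``mixed'' Pauli basis. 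First I would show the forward direction: given an extremal pair $I_{\text{gauge}},O_{\text{comp}}$ together with the associated classical processing relations, Lemma~\ref{NF} writes ${\cal{G}}$ in the normal form Eq.~(\ref{GNF}), in which the correction operators $K(a)$, $a\in(O_{\text{comp}})^c$, supply the $\sigma_s$-columns indexed by $(O_{\text{comp}})^c$, and the gauge operators $\overline{K}(i)$, $i\in I_{\text{gauge}}$, supply the $\sigma_\phi$-columns indexed by $I_{\text{gauge}}$. By Lemma~\ref{indepL} these $|(O_{\text{comp}})^c|+|I_{\text{gauge}}| = n$ operators are independent, hence the corresponding $n$ columns form a basis of the matroid. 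Thus the map
\[
(I_{\text{gauge}}, O_{\text{comp}}, H, R, T, Z) \;\longmapsto\; \{\sigma_\phi^{(i)} : i\in I_{\text{gauge}}\}\cup\{\sigma_s^{(a)} : a\in(O_{\text{comp}})^c\}
\]
is well defined into the set of matroid bases.

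Next I would establish the reverse direction. Let $\mathcal{B}$ be any basis of the matroid ${\cal{G}}(|\Psi\rangle)$. It selects $n$ columns, of which some subset are $\sigma_\phi$-columns, say indexed by a set $P\subseteq\Omega$, and the rest are $\sigma_s$-columns, indexed by a set $Q\subseteq\Omega$, with $|P|+|Q|=n$. Set $I_{\text{gauge}}:=P$ and $(O_{\text{comp}})^c:=Q$, i.e., $O_{\text{comp}}:=\Omega\setminus Q$; then $|I_{\text{gauge}}|=|O_{\text{comp}}|$ automatically, so the pair is extremal. I then need to check that this pair is \emph{admissible}, i.e.\ that $I_{\text{gauge}}\subseteq I$ and that correction and gauge operators with the defining support patterns of Eqs.~(\ref{CorrCorr}), (\ref{CorrGauge}) actually exist. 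The key point is that, since $\mathcal{B}$ is a basis, the stabilizer has a unique set of generators $\{g_1,\dots,g_n\}$ whose ``pivot'' (leading) Pauli coordinate runs over exactly the columns of $\mathcal{B}$ — this is Gaussian elimination over $\mathbb{Z}_2$ applied to ${\cal{G}}$ with the columns of $\mathcal{B}$ used as pivot columns. Reading off the resulting reduced matrix in the block structure of Eq.~(\ref{GNF}), the $n$ generators split into those with a $\sigma_\phi$-pivot in $I_{\text{gauge}}$ (these become the gauge operators $\overline{K}(i)$) and those with a $\sigma_s$-pivot in $(O_{\text{comp}})^c$ (these become the correction operators $K(a)$), and by construction each has the unit-vector pivot pattern required by Eqs.~(\ref{CorrCorr}), (\ref{CorrGauge}) within its own block. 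The off-pivot entries of the reduced matrix then \emph{define} the matrices $\tt{T}, \tt{Z}, \tt{H}, \tt{R}$, and via Eq.~(\ref{HRTZshape}) the full $T, Z, H, R$; by Theorem~\ref{OneToOneOne} these are the unique optimal processing relations for this pair. This produces the inverse map $\mathcal{B}\mapsto (I_{\text{gauge}},O_{\text{comp}},H,R,T,Z)$.

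It then remains to verify that the two maps are mutually inverse, which is essentially bookkeeping: starting from an extremal pair, forming its basis, and re-reducing ${\cal{G}}$ along that basis returns the same reduced matrix because the pivot pattern determines the reduced form uniquely (uniqueness of reduced row echelon form, here phrased column-wise); and starting from a basis, extracting the pair and re-forming the basis is the identity by construction. I expect the main obstacle to be the admissibility check in the reverse direction — specifically, confirming that $I_{\text{gauge}}\subseteq I(T)$ where $I(T)$ is defined from the resulting $T$ via Definition~\ref{IOdef}, and that the support restrictions in the \emph{other} blocks of Eqs.~(\ref{CorrCorr}), (\ref{CorrGauge}) (e.g.\ $K(a)|_b\in\{I,\sigma_\phi\}$ for $b\in(O_{\text{comp}})^c\setminus a$, and the four-case conditions on $\overline{K}(i)$) are automatically satisfied by the echelon structure rather than being extra constraints. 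The resolution should be that the block form Eq.~(\ref{GNF}) is exactly engineered so that ``pivot in column set $\mathcal{B}$'' forces precisely those support patterns: a correction operator has its $\sigma_\phi$-support entirely outside $I_{\text{gauge}}$ (since those are other generators' pivots, cleared by elimination) and $I^{(b)}$ in the $\sigma_s$-slot of any $b\in(O_{\text{comp}})^c\setminus a$ (another correction operator's pivot, also cleared), which is the content of Eq.~(\ref{CorrCorr}); and similarly for Eq.~(\ref{CorrGauge}). So the echelon/pivot structure does all the work, and no genuinely new verification beyond Lemmas~\ref{indepL}, \ref{NF} and Theorem~\ref{OneToOneOne} is needed.
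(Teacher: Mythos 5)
Your proposal is correct and follows essentially the same route as the paper's (much terser) proof: identify a matroid basis with the column set $\{\sigma_\phi^{(i)}:i\in I_{\text{gauge}}\}\cup\{\sigma_s^{(a)}:a\in(O_{\text{comp}})^c\}$, note extremality is automatic from the column count, and invoke Lemma~\ref{NF} together with Theorem~\ref{OneToOneOne} to get the unique matrices $\tt{H},\tt{R},\tt{T},\tt{Z}$, with injectivity/surjectivity by construction. The only addition is your explicit check that the pivot/echelon structure forces the support patterns of Eqs.~(\ref{CorrCorr}), (\ref{CorrGauge}); the paper leaves this implicit, and your resolution of it is the right one.
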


After we have justified our restriction to extremal pairs of gauge input and computational output sets in Theorem~\ref{Extr} and have characterized the set of temporal relations compatible with a given resource state and set of measurement planes in Theorem~\ref{OneToOneThree}, we now return to Theorem~\ref{OneToOneOne}, and show that a converse also holds. 
\begin{Theorem}
\label{OneToOneTwo}
Consider an MBQC on a stabilizer state $|\Psi\rangle$, with classical processing relations $\textbf{q} = T \textbf{s} + H \textbf{g} \mod 2$, $\textbf{o} = Z\textbf{s} + R\textbf{g} \mod 2$ for an optimal classical output $\textbf{o}$, such that $\text{rk}\, H = \text{rk}\,Z$. Then the classical processing relations uniquely specify the stabilizer generator matrix ${\cal{G}}(|\Psi\rangle)$ in the $\sigma_\phi/\sigma_s$-basis, i.e. the resource stabilizer state $|\Psi\rangle$ and set $\Sigma$ of measurement planes up to equivalence.
\end{Theorem}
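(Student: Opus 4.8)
The plan is to reduce the statement to the normal form of Lemma~\ref{NF}, after first showing that the rank hypothesis $\text{rk}\,H=\text{rk}\,Z$ coincides with the extremality hypothesis required by that lemma. To this end I would first note that, for any MBQC with an optimal classical output, $\text{rk}\,H=|I_{\text{gauge}}|$ and $\text{rk}\,Z=|O_{\text{comp}}|$. The columns of $H$ are indexed by the gauge degrees of freedom, one per qubit of $I_{\text{gauge}}$; by Definition~\ref{GaugeOp} (together with the text following Definition~\ref{GIS}) the gauge operator $\overline{K}(i)$ contributes a $\sigma_\phi$ at qubit $i$ but at no other qubit of $I_{\text{gauge}}$, so $H$ restricted to the rows indexed by $I_{\text{gauge}}$ is already the identity, whence $H$ has full column rank $|I_{\text{gauge}}|$. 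Dually, the maximality clause of Definition~\ref{oco} says that $\textbf{o}$ has $|O_{\text{comp}}|$ components and $Z$ has full row rank, so $\text{rk}\,Z=|O_{\text{comp}}|$. Hence $\text{rk}\,H=\text{rk}\,Z$ is equivalent to $|I_{\text{gauge}}|=|O_{\text{comp}}|$, i.e.\ to the (implicitly fixed) pair $I_{\text{gauge}},O_{\text{comp}}$ being extremal, which is exactly the situation covered by Lemma~\ref{NF}.

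Granting extremality, Lemma~\ref{NF} writes ${\cal{G}}(|\Psi\rangle)$, up to the row operations absorbed in the ``$\cong$'' of Eq.~(\ref{GNF}), as an explicit matrix built from the partition $\Omega=I_{\text{gauge}}\sqcup(I_{\text{gauge}})^c=(O_{\text{comp}})^c\sqcup O_{\text{comp}}$ and the blocks ${\tt{T}},{\tt{H}},{\tt{Z}},{\tt{R}}$, while Eq.~(\ref{HRTZshape}) ties those blocks to $T,H,Z,R$. So it suffices to show that $T,H,Z,R$ (with $\textbf{o}$ optimal) determine the two distinguished subsets of $\Omega$ and then to read off the blocks. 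The sets $I$ and $O$ come straight from $T$ via Definition~\ref{IOdef}: $I$ consists of the qubits whose row in $T$ vanishes, $O$ of those whose column in $T$ vanishes. Inside these, $O_{\text{comp}}\subseteq O$ is the set of columns on which the identity block of the canonical normal form $Z\sim({\tt{Z}}\,|\,I)$ of Definition~\ref{oco} sits — unique because that normal form is unique by the maximality and determinism clauses there — and $I_{\text{gauge}}\subseteq I$ is the set of rows on which the identity block of $H$ sits, in the shape of Eq.~(\ref{HRTZshape}). Once $I_{\text{gauge}},O_{\text{comp}}$ are fixed, the remaining blocks are immediate: ${\tt{R}}=R$; ${\tt{T}}$ is the nonzero block of $T$ (rows $(I_{\text{gauge}})^c$, columns $(O_{\text{comp}})^c$); ${\tt{H}}$ is the restriction of $H$ to the rows $(I_{\text{gauge}})^c$; and ${\tt{Z}}$ is the restriction of $Z$, in its normal form, to the columns $(O_{\text{comp}})^c$. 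Plugging $I_{\text{gauge}},O_{\text{comp}},{\tt{T}},{\tt{H}},{\tt{Z}},{\tt{R}}$ into Eq.~(\ref{GNF}) then yields ${\cal{G}}(|\Psi\rangle)$, and hence — since, as recalled at the start of Section~\ref{stabGNF}, the stabilizer generator matrix in the $\sigma_\phi/\sigma_s$-basis encodes the pair $(|\Psi\rangle,\Sigma)$ up to a simultaneous local Clifford — the resource state and set of measurement planes up to equivalence.

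The step I expect to be the main obstacle is the \emph{canonical} recovery of $I_{\text{gauge}}$ and, to a lesser extent, $O_{\text{comp}}$ — equivalently, the injectivity of the assignment from extremal pairs to processing relations whose well-definedness is Theorem~\ref{OneToOneOne}. The subtlety is that $T$ alone yields only the containments $I_{\text{gauge}}\subseteq I$ and $O_{\text{comp}}\subseteq O$, while the column matroids of $H$ and of $Z$ might a priori possess several admissible bases inside $I$ and $O$ respectively. I would resolve this by arguing that, for a genuine MBQC with an optimal output and $\text{rk}\,H=\text{rk}\,Z$, the normal-form shapes of Eq.~(\ref{HRTZshape}) are already forced on the given matrices: for $H$, using Lemma~\ref{indepL} and the fact that $\overline{K}(i)$ carries no $\sigma_\phi$ on $I_{\text{gauge}}\setminus\{i\}$ — so that $H$ restricts to the identity, not merely to an invertible matrix, on the rows $I_{\text{gauge}}$ — and for $Z$, using the uniqueness clause of Definition~\ref{oco}; thus the identity blocks, and with them the two sets, occupy fixed positions. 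With this in place the theorem follows by combining the rank/extremality equivalence of the first step with Lemma~\ref{NF}.
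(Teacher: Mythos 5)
Your reduction of the rank hypothesis $\text{rk}\,H=\text{rk}\,Z$ to extremality, and your extraction of the blocks ${\tt{T}},{\tt{H}},{\tt{Z}},{\tt{R}}$ once a pair $I_{\text{gauge}},O_{\text{comp}}$ is fixed, both match what the paper does in steps (i)--(ii) of its proof. But there is a genuine gap at exactly the point you flag as ``the main obstacle'': you claim that the identity blocks of $H$ and $Z$ ``occupy fixed positions,'' so that $I_{\text{gauge}}$ and $O_{\text{comp}}$ are canonically recoverable from the processing relations. This is false in general, and your argument for it is circular: knowing that $\overline{K}(i)$ carries no $\sigma_\phi$ on $I_{\text{gauge}}\setminus\{i\}$ tells you that $H$ restricts to the identity on the rows of a \emph{chosen} $I_{\text{gauge}}$ (in the matching basis for $\textbf{g}$); it does not tell you that only one subset of rows of $I$ has this property. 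The paper's own Example~2 (the GHZ state) is a counterexample: there $H=(1,1,1)^T$, $T=0$, $I=O=\{1,2,3\}$, and $H$ restricts to the $1\times 1$ identity on row $1$, on row $2$, and on row $3$ equally well, so $I_{\text{gauge}}$ can be any singleton; likewise several choices of $O_{\text{comp}}$ are admissible. The set of admissible pairs is governed by the bases of the matroid ${\cal{G}}(|\Psi\rangle)$ (Theorem~\ref{OneToOneThree}), and these bases are generically non-unique.

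Consequently, to prove that the processing relations ``uniquely specify'' ${\cal{G}}(|\Psi\rangle)$ you must show that the normal form assembled from \emph{any} admissible choice of $(I_{\text{gauge}},O_{\text{comp}})$ yields the same stabilizer state. This is where the bulk of the paper's proof lives: steps (iii) and (iv) establish a commuting diagram under changes $I_{\text{gauge}}\to I_{\text{gauge}}'$ (absorbed by an invertible change of basis $\Lambda$ for $\textbf{g}$, which acts as a row transformation on the lower block of ${\cal{G}}$) and $O_{\text{comp}}\to O_{\text{comp}}'$ (handled by a single-element basis exchange in the matroid, with explicit row operations carrying $\tt{Z},\tt{R}$ to $\tt{Z}',\tt{R}'$). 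Your proposal contains no substitute for this independence argument, so as written it does not establish the uniqueness claim of the theorem.
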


\subsection{The MBQC - spacetime correspondence}

Our proposal of MBQC as a toy model for spacetime is inspired by Malament's theorem \cite{conTC}, stating that for a continuous spacetime manifold, the metric is determined by the temporal order of spacetime events up to a conformal factor. Our Theorem~\ref{OneToOneTwo} shows close resemblance with Malament's theorem if we assert the following correspondence:
\begin{equation}
\label{corresp}
\parbox{10cm}{\includegraphics[width=10cm]{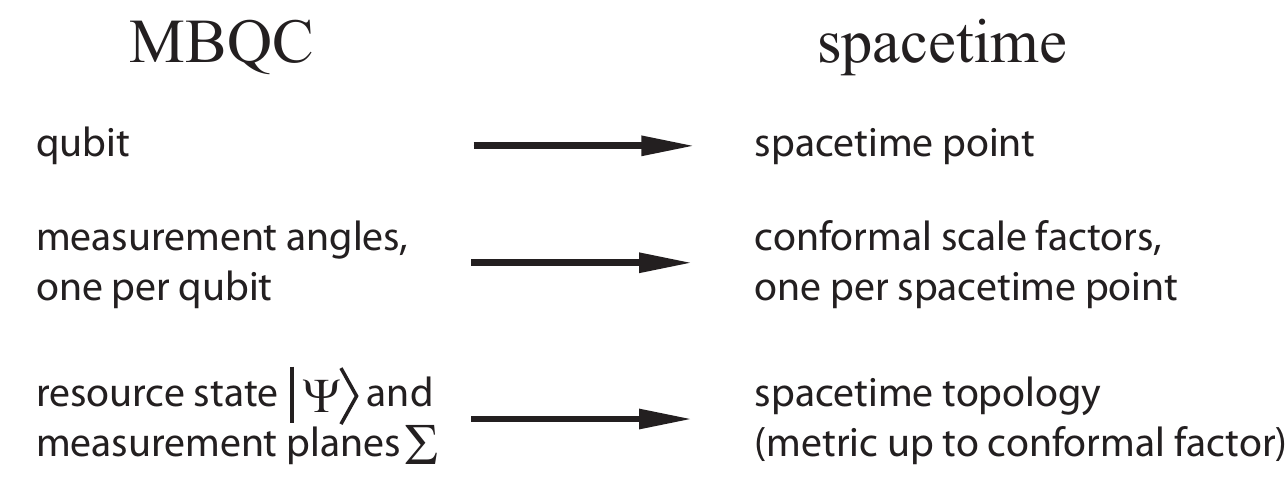}}
\end{equation}  
While the first line of the correspondence (\ref{corresp}) appears intuitive, the second line really comes about by identifying the left-over pieces on either side. Nevertheless, it seems a match: A reel-valued parameter, the measurement angle at a given qubit is mapped to a real-valued parameter, the scale factor at the corrsponding spacetime point. (Note, though, that the domain of the measurement angle is a compact set while that of the scale factor is not). A physical interpretation for this correspondence remains to be found. 

The support for the conjectured correspondence (\ref{corresp}) comes from its third line. By analogy with Malament's theorem, the temporal order of measurement events should, up to equivalence at least, determine the resource state $|\Psi\rangle$ and the set $\Sigma$ of measurement planes. Theorem~\ref{OneToOneTwo} almost gives this result. But the correspondence seems imperfect: It is not the temporal order alone---contained in the influence matrix $T$---which yields the pair $|\Psi\rangle$, $\Sigma$ up to equivalence. Rather it is the entire classical processing relations Eq.~(\ref{TO7a}), (\ref{TO7b}), specified by the four matrices $T$, $H$, $R$, $Z$.

But we can do better! Namely, it can be shown that the four matrices $T$, $H$, $R$, $Z$ can be assembled to an influence matrix $T_{\text{ext}}$ of an MBQC on a slightly bigger resource state $|\Psi'\rangle$. Specifically, the support of $|\Psi'\rangle$ is, in comparison to $|\Psi\rangle$, enlarged at the temporal boundaries $I_{\text{gauge}}$, $O_{\text{comp}}$. By this extension, all information contained in the classical processing relations becomes temporal information.

To obtain $T_{\text{ext}}$ we first choose a pair $I_{\text{gauge}} \subset I, O_{\text{comp}} \subset O$ ($I,O$ follow from $T$), and extract from $H,R,T,Z$ the corresponding matrices ${\tt{H}}, {\tt{R}}, {\tt{T}}, {\tt{Z}}$ wrt to the normal form Eq.~(\ref{GNF}) specified by $I_{\text{gauge}},O_{\text{comp}}$.  We assemble the composite matrix  
\begin{equation}
\label{TppL}
T_{\text{ext}} =  \left(\begin{array}{c|c|c|c} 0 & 0 & 0 & 0\\ I & 0 & 0 &0 \\ \hline {\tt{H}} & {\tt{T}}  & 0 & 0\\ \hline {\tt{R}} & {\tt{Z}} & I & 0 \end{array}\right),
\end{equation}
which is a square matrix of size $(|\Omega|+|I_{\text{gauge}}|+|O_{\text{comp}}|) \times (|\Omega|+|I_{\text{gauge}}|+|O_{\text{comp}}|)$. 

$T_{\text{ext}}$ is the influence matrix for MBQC on a bigger resource state $|\Psi'\rangle$ constructed from $|\Psi\rangle$. For the support $\Omega'$ of $|\Psi'\rangle$ we require two additional sets of qubits, $I'$ and $O'$, with $|I'|=|O'|=|I_{\text{gauge}}|=|O_{\text{comp}}|$, such that $\Omega' = \Omega \cup I' \cup O'$. $|\Psi'\rangle$ is obtained from $|\Psi\rangle$ by the following construction:
\begin{equation}
\label{PsiExt}
\parbox{5.5cm}{\includegraphics[width=5.5cm]{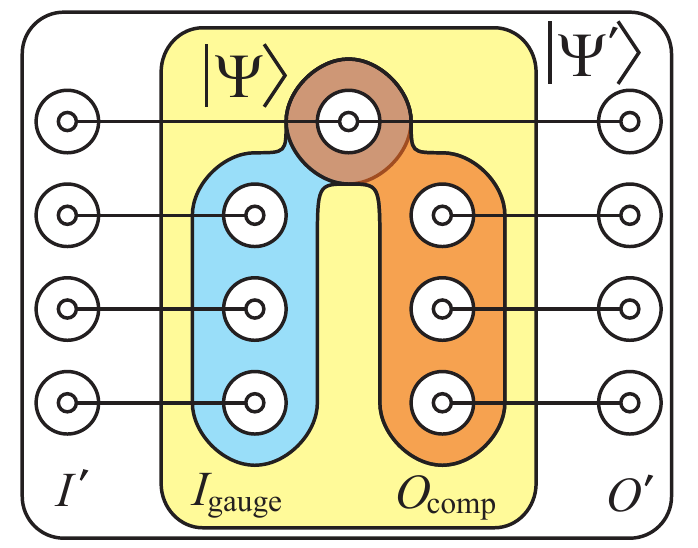}}
\end{equation} 
Therein, the gates $\parbox{1cm}{\includegraphics[width=1cm]{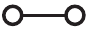}}=\Lambda_{s}$ are $\sigma_s$-controlled $\sigma_s$-gates, i.e., $\Lambda_s \sigma_s^{(1)} \Lambda_s^\dagger = \sigma_s^{(1)}$, $\Lambda_s \sigma_\phi^{(1)} \Lambda_s^\dagger = \sigma_\phi^{(1)} \otimes \sigma_s^{(2)}$, etc, and the extra qubits in $I'$ and $O'$ are initially prepared in the eigenstate of $\sigma_\phi$ with eigenvalue 1. With the definition Eq.~(\ref{PsiExt}) of $|\Psi'\rangle$, the labelling of the blocks of rows and columns for the matrix on the r.h.s. of Eq.~(\ref{TlcExt}) is $I'|(O_{\text{comp}})^c | O_{\text{comp}} | O'$ for the columns and $I' | I_{\text{gauge}} | (I_{\text{gauge}})^c| O'$ for the rows. 

We thus find that all information in the classical MBQC processing relations is temporal information for the computation on a slightly extended resource state.\medskip

\subsection{Proofs of Theorems~\ref{OneToOneOne}-\ref{OneToOneTwo}} 

Theorem~\ref{OneToOneOne} is an immediate consequence of Lemma~\ref{NF}. \medskip

\begin{proof}[Proof of Theorem~\ref{Extr}]
Assume that $I$ is valid input set and $O$ is a valid output set for a given MBQC. Then, the stabilizer generator matrix of the resource state can be written in the $\sigma_\phi/\sigma_s$-basis as
\begin{equation}
\label{GN3}
\begin{array}{c} \mbox{ } \\ \mbox{ } \\ {\cal{G}}(|\Psi\rangle) = \end{array} 
 \begin{array}{c}
    \mbox{}\;I \hspace{1cm} I^c\hspace{1cm} O^c\hspace{0.85cm} O\\ 
  \left(\begin{array}{c|c||c|c}
    \parbox{0.9cm}{\begin{center}$0$\end{center}} &  \parbox{0.9cm}{\begin{center}${\tilde{\tt{T}}}^T$\end{center}} & \parbox{0.9cm}{\begin{center}$I$\end{center}} & \parbox{0.9cm}{\begin{center}$A$\end{center}} \\ \hline
     \parbox{0.9cm}{\begin{center}$B$\end{center}} & \parbox{0.9cm}{\begin{center}$C$\end{center}} & \parbox{0.9cm}{\begin{center}$0$\end{center}} & \parbox{0.9cm}{\begin{center}$D$\end{center}}
  \end{array}\right),
\end{array}
\end{equation}
for some matrices $A$, $B$, $C$ and $D$. The influence matrix $T$ can be obtained as
\begin{equation}
  \label{T}
  T^T = \left( \begin{array}{c|c} 0 & {\tilde{\tt{T}}}^T \\ \hline 0 & 0\end{array} \right),
\end{equation}
with the column split between $I$ and $I^c$, and the row split between $O^c$ and $O$. 
The matrices $B$ and $(B|C)$ do not necessarily have maximal row-rank. By row transformations of 
$(B|C||0|D)$ we extract the dependent rows, and obtain
\begin{equation}
\label{GN3b}
\begin{array}{c} \mbox{ } \\ \mbox{ } \\  {\cal{G}}(|\Psi\rangle) = \end{array} 
 \begin{array}{c}
    \mbox{}\;I \hspace{1cm} I^c\hspace{1cm} O^c\hspace{0.85cm} O\\ 
  \left(\begin{array}{c|c||c|c}
    \parbox{0.9cm}{\begin{center}$0$\end{center}} &  \parbox{0.9cm}{\begin{center}${\tilde{\tt{T}}}^T$\end{center}} & \parbox{0.9cm}{\begin{center}$I$\end{center}} & \parbox{0.9cm}{\begin{center}$A$\end{center}} \\ \hline
     \parbox{0.9cm}{\begin{center}$0$\end{center}} & \parbox{0.9cm}{\begin{center}$0$\end{center}} & \parbox{0.9cm}{\begin{center}$0$\end{center}} & \parbox{0.9cm}{\begin{center}$D'$\end{center}}
\\ \hline
     \parbox{0.9cm}{\begin{center}$0$\end{center}} & \parbox{0.9cm}{\begin{center}$C''$\end{center}} & \parbox{0.9cm}{\begin{center}$0$\end{center}} & \parbox{0.9cm}{\begin{center}$D''$\end{center}}
\\ \hline
     \parbox{0.9cm}{\begin{center}$B'''$\end{center}} & \parbox{0.9cm}{\begin{center}$C'''$\end{center}} & \parbox{0.9cm}{\begin{center}$0$\end{center}} & \parbox{0.9cm}{\begin{center}$D'''$\end{center}}
  \end{array}\right),
\end{array}
\end{equation}
However, note that each row in the third set of rows in the above matrix can either be interpreted as a correction operator $K(a)$, with non-empty forward cone $fc(a)$,  for some $a \in O$, or a gauge operator $\overline{K}(i)$ for some $i\in I^c$. The former is ruled out because every $a \in O$ must have an empty forward cone.  The latter  is ruled out by the assumption that no qubit outside the set $I$ can be individually gauged. Therefore that set of rows must identically vanish.

Since ${\cal{G}}(|\Psi\rangle)$ has full row rank, so does the matrix $D'$ appearing in Eq.~(\ref{GN3b}). We may then choose a set $\Delta O \subseteq O$ such that the columns of $D'$ indexed by $\Delta O$ form a maximal independent set. We set $O_{\text{comp}}:=O\backslash \Delta O$.  
Then, by further row transformations which do not affect $\tilde{\tt{T}}$, the matrix in Eq.~(\ref{GN3b}) can be converted to
\begin{equation}
\label{GN4}
\begin{array}{c} \mbox{ } \\ \mbox{ } \\  {\cal{G}}(|\Psi\rangle) = \end{array} 
 \begin{array}{c}
    \mbox{ }\mbox{ }\;I \hspace{1cm} I^c\hspace{1cm} O^c\hspace{0.7cm} \Delta O \hspace{0.5cm} O_{\text{comp}}\\ 
  \left(\begin{array}{c|c||c|c|c}
    \parbox{0.9cm}{\begin{center}$0$\end{center}} &  \parbox{0.9cm}{\begin{center}${\tilde{\tt{T}}}^T$\end{center}} & \parbox{0.9cm}{\begin{center}$I$\end{center}} & \parbox{0.9cm}{\begin{center}$0$\end{center}} & \parbox{0.9cm}{\begin{center}$A'$\end{center}} \\ \hline
     \parbox{0.9cm}{\begin{center}$0$\end{center}} & \parbox{0.9cm}{\begin{center}$0$\end{center}} & \parbox{0.9cm}{\begin{center}$0$\end{center}} &  \parbox{0.9cm}{\begin{center}$I$\end{center}} & \parbox{0.9cm}{\begin{center}$A''$\end{center}} \\ \hline
     \parbox{0.9cm}{\begin{center}$B'''$\end{center}} & \parbox{0.9cm}{\begin{center}$C'''$\end{center}} & \parbox{0.9cm}{\begin{center}$0$\end{center}} &  \parbox{0.9cm}{\begin{center}$0$\end{center}} & \parbox{0.9cm}{\begin{center}$D'''$\end{center}}
  \end{array}\right).
\end{array}
\end{equation}
$B'''$ has full row rank by construction. We can therefore find a set $I_{\text{gauge}} \subseteq I$ such that the columns of $B'''$ indexed by $I_{\text{gauge}}$ form a maximal independent set. For any such set $I_{\text{gauge}}$ we can convert the matrix in Eq.~(\ref{GN4}) fully into the normal form Eq.~(\ref{GNF}) without affecting $T$.

For any of the above choices for $I_{\text{gauge}} \subseteq I$ and $O_{\text{comp}}\subseteq O$, the resulting influence matrix $T(I_{\text{gauge}}, O_{\text{comp}})$ can be extracted as 
\begin{equation}
  \label{Tpr}
  T(I_{\text{gauge}}, O_{\text{comp}})^T = \left( \begin{array}{c|c} 0 & \tilde{\tt{T}}^T \\ \hline 0 & 0  \\ \hline 0 & 0\end{array} \right),
\end{equation}
with the column split between $I$ and $I^c$, and the row split between $O^c$, $\Delta O$ and $O_{\text{comp}}=O\backslash \Delta O$. By comparison of Eqs.~(\ref{T}) and (\ref{Tpr}) we verify $T = T(I_{\text{gauge}},O_{\text{comp}})$. 
\end{proof}

{\em{Remark:}}
Comparing Eq.\eqref{GN4} with the normal form in Eq.~\eqref{GNF}, we can write the stabilizer matrix in a slightly varied form that can be useful later.
\begin{eqnarray}
\label{GN5}
\begin{array}{c} \mbox{ } \\ \mbox{ } \\ {\cal{G}}(|\Psi\rangle) = \end{array} 
 \begin{array}{c}
    I_{\text{gauge}}\hspace{0.7cm} \Delta I \hspace{0.7cm} I^c\hspace{0.8cm} O^c\hspace{0.7cm} \Delta O \hspace{0.5cm} O_{\text{comp}}\\ 
  \left(\begin{array}{c|c|c||c|c|c}
 \parbox{0.9cm}{\begin{center}$0$\end{center}} &    \parbox{0.9cm}{\begin{center}$0$\end{center}} &  \parbox{0.9cm}{\begin{center}${\tilde{\tt{T}}}^T$\end{center}} & \parbox{0.9cm}{\begin{center}$I$\end{center}} & \parbox{0.9cm}{\begin{center}$0$\end{center}} & \parbox{0.9cm}{\begin{center}${\tt{Z}}_1^T$\end{center}} \\ \hline
 \parbox{0.9cm}{\begin{center}$0$\end{center}} &     \parbox{0.9cm}{\begin{center}$0$\end{center}} & \parbox{0.9cm}{\begin{center}$0$\end{center}} & \parbox{0.9cm}{\begin{center}$0$\end{center}} &  \parbox{0.9cm}{\begin{center}$I$\end{center}} & \parbox{0.9cm}{\begin{center}${\tt{Z}}_2^T$\end{center}} \\ \hline
 \parbox{0.9cm}{\begin{center}$I$\end{center}} &     \parbox{0.9cm}{\begin{center}${\tt{H}}_1^T$\end{center}} & \parbox{0.9cm}{\begin{center}${\tt{H}}_2^T$\end{center}} & \parbox{0.9cm}{\begin{center}$0$\end{center}} &  \parbox{0.9cm}{\begin{center}$0$\end{center}} & \parbox{0.9cm}{\begin{center}${\tt{R}}^T$\end{center}}
  \end{array}\right),
\end{array}
\end{eqnarray}
where ${\tt{Z}} = ({\tt{Z}}_1\,|\,{\tt{Z}}_2)$
and where ${\tt{H}}^T = ({\tt{H}}_1^T \,| \, {\tt{H}}_2^T)$.

\begin{proof}[Proof of Theorem~\ref{OneToOneThree}]
Denote by ${\cal{B}}$ the set of bases of ${\cal{G}}$, and by ${\cal{T}}$ the set of extremal classical processing relations of form Eq.~(\ref{TO7a}), (\ref{TO7b}), specified by the triple $(I_{\text{gauge}}, O_{\text{comp}}, \{ {\tt{H}},{\tt{R}},{\tt{T}},{\tt{Z}}\})$. Then, the mapping $h: {\cal{B}} \longrightarrow {\cal{T}}$ exists and is a bijection. (1) Existence of $h$: By the normal form Eq.~(\ref{GNF}) of ${\cal{G}}= (\Phi||S)$, for a given basis $B({\cal{G}})$ the sets $I_{\text{gauge}}$ and $O_{\text{comp}}$ are extracted as follows. A qubit $i$ is in $I_{\text{gauge}}$ if and only if the corresponding column of $\Phi$ appears in $B({\cal{G}})$. A qubit $a$ is in $O_{\text{comp}}$ if and only if the corresponding column of $S$ does not appear in $B({\cal{G}})$. Knowing $I_{\text{gauge}}$ and $O_{\text{comp}}$, $\{ {\tt{H}},{\tt{R}},{\tt{T}},{\tt{Z}}\}$ is uniquely determined via Theorem~\ref{OneToOneOne}. (2) Surjectivity of $h$: By definition of ``extremal''. (3) Injectivity of $h$: given $I_{\text{gauge}}$ and $O_{\text{comp}}$, $B(G)=\left( \Phi|_{I_{\text{gauge}}}\,| \, S_{O_{\text{comp}}}\right)$ is unique. 
\end{proof}

\begin{proof}[Proof of Theorem~\ref{OneToOneTwo}]
We divide the proof into the following steps: i) First, with $\text{rk}\,H =\text{rk}\, Z$, the processing relations stem from an extremal pair $I_{\text{gauge}}$,$O_{\text{comp}}$. We show that given an extremal pair $I_{\text{gauge}}$, $O_{\text{comp}}$, 
the matrices $\tt{H}$, $\tt{R}$, ${\tt{T}}$ and $\tt{Z}$ 
are uniquely determined by the classical  processing relations~Eq.~(\ref{TO7a},\ref{TO7b}). ii) From
these matrices we can derive the corresponding normal form of the resource state uniquely.  Let  us 
denote this by $\mathcal{N}$. In iii) and iv) we show that the following diagram commutes, which establishes the  equivalence of normal forms of all extremal pairs. 
\begin{eqnarray}
\begin{CD}
(I_{\text{gauge}}, O_{\text{comp}})  @>{\Lambda}>> (I_{\text{gauge}}', O_{\text{comp}}')\\
@VV \tt{T},\tt{H},\tt{Z},\tt{R}V  @VV{\tt{T}',\tt{H}',\tt{Z}',\tt{R}'}V \\
\mathcal{N}@> {{M(\Lambda)}}>> \mathcal{N}'
\end{CD}\label{eq:cd}
\end{eqnarray}
In Eq.~\eqref{eq:cd}, $M(\Lambda)$ is a transformation on the normal form $\mathcal{N}$, dependent on $\Lambda$.
Because of their independence, we consider the transformations $I_{\text{gauge}} \rightarrow I_{\text{gauge}}'$ and $ O_{\text{comp}}\rightarrow O_{\text{comp}}'$ separately in iii) and iv) respectively.

\begin{compactenum}[i)]
\item
Extracting $\tt{H}$, $\tt{R}$, ${\tt{T}}$ and $\tt{Z}$: Given any set  of inputs $I$ and outputs $O$ by Theorem~\ref{Extr}, we know that there always 
exists an   extremal pair $(I_{\text{gauge} }, O_{\text{comp}})$,  where $I_{\text{gauge} } \subseteq I, O_{\text{comp}}\subseteq O$. 
By Eq.~(\ref{Tshape}),  $\tt{T}$ is uniquely specified by $T$.  By the assumption of the classical output being optimal, $Z$ can be brought into the unique normal form of Eq.~(\ref{ZN}) by left multiplication with an invertible matrix  and (permuting the columns if necessary), ${\tt{Z}}$ can be extracted. 
%
%
The matrix $H$ may not appear in Eq.~(\ref{TO7a}) in its normal form Eq.~(\ref{HRTZshape}), 
nonetheless
for an invertible matrix $\Lambda$, the vector $\textbf{q}$ specifying the measurement bases is invariant under $H \longrightarrow H \Lambda$, $\textbf{g} \longrightarrow \Lambda^{-1}\textbf{g}$.  
By definition of $I_{\text{gauge}}$, every qubit in $I_{\text{gauge}}$ can be individually gauged 
with respect to $I_{\text{gauge}}$, $O_{\text{comp}}$. 
Therefore, (up to row permutations), we can choose $\Lambda$ such that 
\begin{equation}
  \label{HNF}
H \Lambda =  \left(\begin{array}{c} I \\ \hline {\tt{H}} \end{array}\right),
\end{equation}
where the row split is between $I_{\text{gauge}}$ (upper) and $(I_{\text{gauge}})^c$ (lower). $\Lambda$ and ${\tt{H}}$ are unique.
The classical output $\textbf{o}$ is invariant under the transformation $R \longrightarrow R \Lambda',\, \textbf{g} \longrightarrow (\Lambda')^{-1}\textbf{g}$. However, since Eqs.~(\ref{TO7a}) and (\ref{TO7b}) refer to $\textbf{g}$ in the same basis, $\Lambda'$ is now fixed: $\Lambda'=\Lambda$. Then, ${\tt{R}} = R\Lambda$, with the $\Lambda$ of Eq.~(\ref{HNF}).

\item Assembling the normal form: Using the unique matrices $\tt{H}$, $\tt{R}$, $\tt{T}$, $\tt{Z}$, by Lemma~\ref{NF}, we can now assemble the normal form Eq.~(\ref{GNF}) of the stabilizer generator matrix for the resource state $|\Psi\rangle$. By assumption of Theorem~\ref{OneToOneTwo}, the matrices $H$, $R$, $T$, $Z$ describe a valid computation, and the normal form Eq.~(\ref{GNF}) derived from them must thus yield a valid description of a quantum state. In particular, all Pauli operators specified by the rows of ${\cal{G}}(\{\tt{H}, \tt{R}, \tt{T}, \tt{Z}\})$ in Eq.~(\ref{GNF}) must commute. (The rows are independent by design of the normal form.) We have thus constructed a description of $|\Psi\rangle$. Since $\tt{H}$, $\tt{R}$, $\tt{T}$, $\tt{Z}$ are unique, so is $|\Psi\rangle$. \medskip

We now proceed to construct the stabilizer ${\cal{S}}(|\Psi\rangle)$ from the processing relations when $I_{\text{gauge}}$ and $O_{\text{comp}}$ are not specified. From the classical processing relation Eq.~(\ref{TO7a}) we can still extract the input set $I$ and the output set $O$,  by testing which rows and columns of $T$ identically vanish. Then, the possible choices for $I_{\text{gauge}}$ and $O_{\text{comp}}$ are limited to $I_{\text{gauge}} \subseteq I$ and $O_{\text{comp}} \subseteq O$ by definition.

\noindent
\item Equivalence under $\Lambda:I_{\text{gauge}}\rightarrow I_{\text{gauge}}'$. In order to prove this
we rely on the slightly variant version of the normal form of ${\cal{G}}(|\Psi\rangle)$ as shown in Eq.~(\ref{GN5})  which also includes additional detail about the correction operators.
As noted above, the different normal forms for $H$ can be interconverted by  right-multiplication with an invertible matrix $\Lambda$ (change of basis for $\textbf{g}$), i.e. $H, {\tt{R}} \longrightarrow H\Lambda, R\Lambda$, where $H^T=(I|{{\tt{H}}^T})$. Under such a transformation, the upper part of the normal form Eq.~(\ref{GNF}) for ${\cal{G}}(|\Psi\rangle)$ remains unchanged, and the lower part is transformed $(H^T||0|R^T) \longrightarrow  \Lambda^T(H^T||0|{{\tt{R}}^T})$. Invertible row transformations on ${\cal{G}}(|\Psi\rangle)$ leave $|\Psi\rangle$ unchanged, and $|\Psi\rangle$ is thus independent on the precise choice of $I_{\text{gauge}} \subseteq I$.  

\item  Equivalence under $\Lambda:O_{\text{comp}}\rightarrow O_{\text{comp}}'$.
Proving that a different choice of $O_{\text{comp}}$ does not change the stabilizer state is a little
more complicated. We proceed in the following manner. From Theorem~\ref{OneToOneThree},
we know that $I_{\text{gauge}}\cup O_{\text{comp}}^c $ are the bases of a matroid.
Therefore for two distinct
computational output sets $O_{\text{comp}}$ and $O_{\text{comp}}''$, there exists another computational
output set $O_{\text{comp}}'= O_{\text{comp}}\setminus \{i \} \cup \{j \}$, where 
$i\in O_{\text{comp}}\setminus O_{\text{comp}}''$ and $j\in O_{\text{comp}}''\setminus O_{\text{comp}}$.
Therefore, it  suffices if we show that the stabilizer state does not change if we change the computational output set from $O_{\text{comp}}$ to $O_{\text{comp}}'$.
Assume that the classical relation for the computational output set $O_{\text{comp}}$ is given as
\begin{eqnarray}
\mathbf{o} = Z \mathbf{s} + R  \mathbf{g}  = ({\tt{Z}}_1\, | {\tt{Z}}_2 | I )\mathbf{s} + {\tt{R}} \mathbf{g}.
\end{eqnarray}
where the column split of $Z$ is between $O_{\text{comp}}^c$, $\Delta O$, and $O_{\text{comp}}$. The 
corresponding normal form (with the column split in $\sigma_{\phi}$-part $I_{\text{gauge}}|I_{\text{gauge}}^c$ ) is
\begin{eqnarray}
  \left(\begin{array}{c|c||c|c|c}
    0&  \tilde{\tt{T}}^T & I & 0 & {\tt{Z}}_1^T \\ \hline
     0 & 0 & 0 &  I & {\tt{Z}}_2^T \\ \hline
     I & \mbox{\tt{H}}^T & 0 &  0 & {\tt{R}}^T
  \end{array}\right),\label{eq:nform2}
\end{eqnarray}
where ${\tt{T}}^T= \left(\begin{array}{c}  \tilde{\tt{T}}^T\\\hline 0 \end{array} \right)$.
Suppose that we transform $O_{\text{comp}}$ to $O_{\text{comp}} \setminus \{i \}\cup \{j\} $,
where $i\in O_{\text{comp}}$  and $j\in \Delta O $. Without loss of generality assume that $i$ is the 
last column of ${\tt{Z}}_2$. (It cannot be an all zero column because, then it would not be possible for it to be in 
$O_{\text{comp}}$.) Let ${\tt{Z}}_1= \left(\begin{array}{c}x^T\\\hline Z_A \end{array}\right)$ and 
${\tt{Z}}_2 = \left(\begin{array}{c|c} a^T&1\\\hline Z_B&b  \end{array}\right)$. 
Then  $\Lambda = \left(\begin{array}{c|c} 1&0 \\\hline b &I\end{array}\right)$ acting on 
$Z$ achieves the transformation $O_{\text{comp}}$ to $O_{\text{comp}}'$. 
\begin{eqnarray}
\Lambda ({\tt{Z}}_1\,|\,{\tt{Z}}_2| I )  &=& \left(\begin{array}{c|c|c|c|c}x^T &a^T&1&1&0\\\hline Z_A+bx^T& Z_B+ba^T&0&b& I  \end{array}\right)\\ 
&\sim& \left(\begin{array}{c||c|c||c|c}x^T &a^T&1&1&0\\\hline Z_A+bx^T& Z_B+ba^T&b&0& I  \end{array}\right)  = ({\tt{Z}}_1'\,||\,{\tt{Z}}_2'|| I)
\end{eqnarray}
We claim that this same transformation can be effected by row transformations of ${\cal{G}}(|\Psi\rangle)$. First let us focus on the middle set of rows in ${\cal{G}}(|\Psi\rangle)$, namely the 
correction operators for $\Delta O$. Then
acting by $M(\Lambda)= \left(\begin{array}{c|c} I & a \\\hline 0 &1\end{array}\right)$ gives us
\begin{eqnarray}
M(\Lambda)(0 || 0 |I | {\tt{Z}}_2^T) & = & \left( \begin{array}{c||c|c|c|c|c}0&0&I & a & 0 & Z_B^T+ab^T\\\hline0&0& 0 & 1& 1& b^T\end{array}\right)\\
&\sim& \left( \begin{array}{c||c|c|c|c|c}0&0&I & 0 & a & Z_B^T+ab^T\\\hline 0&0&0 & 1& 1& b^T\end{array}\right)  = (0||0|I | {\tt{Z}}_2'^T ).\label{eq:lastRow}
\end{eqnarray}
Now if take the last row in Eq.~\eqref{eq:lastRow}, namely $(0|| 0 | 0 |1|b^T) = c$ 
 and add  $x c$ to the top set of rows in Eq.~\eqref{eq:nform2} we  obtain 
\begin{eqnarray}
 \left( \begin{array}{c|c||c|c|c|c|c}0&\tilde{\tt{T}}^T&I&0 & x & 0 & Z_A^T+xb^T\end{array}\right)
&\sim& \left( \begin{array}{c|c||c|c|c}0&\tilde{\tt{T}}^T&I&0  & {\tt{Z}_1}'^T\end{array}\right)
\end{eqnarray}
showing the equivalence of $\tt{Z}$ and $\tt{Z}'$. The equivalence of $\tt{R}$ and $\tt{R}'$
under $\Lambda$ can be shown in exactly the same fashion as for ${\tt{Z}}_1$ and ${\tt{Z}}_1'$.
\end{compactenum}
This concludes the proof that the extremal classical relations completely determine the stabilizer
state. 
\end{proof}

\section{Invariance under the gauge transformations}
\label{GTO}

\subsection{Gauge transformations and temporal order}

In this section we provide a different angle at Theorem~\ref{OneToOneOne}, namely we show that the gauge transformations Eq.~(\ref{G1}) impose severe constraints on the possible temporal orders for a given stabilizer state and set of measurement planes.

The gauge transformations Eq.~(\ref{G1}) act on $\textbf{q}$ and $\textbf{s}$, and can therefore have a non-trivial effect on the classical processing relation Eq.~(\ref{TO7a}), $\textbf{q} = T\textbf{s} + H \textbf{g}$. We now study this effect. Since the transformations Eq.~(\ref{G1}) are caused by the insertion of stabilizer operators into the state overlap in Eq.~(\ref{corr1}), they do not change the physical situation. Therefore, the temporal relations before and after any such transformation must be equally valid, although not necessarily identical. By insertion of the stabilizer operator into the overlap $\langle \Phi_{\text{loc}}|\Psi\rangle$, the stabilizer of $|\Psi\rangle$ and the the sets $I_{\text{gauge}}$, $O_{\text{comp}}$ do not change. Therefore, by Theorem~\ref{OneToOneOne}, the matrices $T$ and $H$ do not change. Thus, besides $\textbf{q}$ and $\textbf{s}$, all that can change in the relation Eq.~(\ref{TO7a}) under a transformation Eq.~(\ref{G1}) is $\textbf{g}$. The following two viewpoints are always equivalent: (A) The relation $\textbf{q} = f_{\textbf{g}}(\textbf{s})$, under the action Eq.~(\ref{G1}) of a $G_K$ on $(\textbf{s},\textbf{q})$ is changed into an equivalent such relation $\textbf{q} = f_{\textbf{g}'}(\textbf{s})$, with $G_K: \textbf{g} \longrightarrow \textbf{g}'$. (B) The relation $\textbf{q} = \tilde{f}(\textbf{s},\textbf{g})$ {\em{remains invariant}} under all transformations $G_K$, acting on the triple $(\textbf{s},\textbf{q},\textbf{g})$. We choose the latter viewpoint.

We now infer the action of the transformations $G_K$ on $\textbf{g}$. Without loss of generality we assume that the relations Eq.~(\ref{TO7a}) are given with $H$ in its normal form Eq.~(\ref{HRTZshape}),
$$
\textbf{q} = T \textbf{s} +  \left( \begin{array}{c} I \\ \hline {\tt{H}}  \end{array} \right) \textbf{g} \mod 2.
$$ 
Furthermore, we assume that the pair $I_{\text{gauge}}$, $O_{\text{comp}}$ is extremal. Since $I_{\text{gauge}} \subseteq I$ by definition, for each $i \in I_{\text{gauge}}$, $q_i$ only depends on $\textbf{g}$ but not on the measurement outcomes $\textbf{s}$, $q_i = g_i$. Now, the correction operators $K(a)$, $a \in (O_{\text{comp}})^c$ derived from the normal form Eq.~(\ref{GNF}) of ${\cal{G}}(|\Psi\rangle)$, $K(a)|_{I_{\text{gauge}}}$  has no $\sigma_\phi$-part. Therefore, the corresponding transformations $G_K$ do not flip $q_i$, for all $i \in I_{\text{gauge}}$. In order to preserve the relation $q_i=g_i$, they thus leave $\textbf{g}$ unchanged. Now consider the other stabilizer generators, $\overline{K}(i)$, $i \in I_{\text{gauge}}$, obeying the conditions Eq.~(\ref{CorrGauge}). By construction, $G_{\overline{K}(i)}$ flips $q_i$ but no other $q_j$, for $i\neq j \in I_{\text{gauge}}$. Hence, to preserve the relations $q_i=g_i$, it must also flip $g_i$, but no other $g_j$, $i \neq j \in I_{\text{gauge}}$. Thus, for a stabilizer element $K=\bigotimes_{a \in \Omega}(\sigma_s^{(a)})^{v_a}(\sigma_\phi^{(a)})^{w_a}$,
\begin{equation}
  \label{G2}
  G_K: \; \mathbf{g} \longrightarrow \mathbf{g} \oplus \mathbf{w}|_{I_{\text{gauge}}}.
\end{equation}  
Therein, we have assumed that the basis choice for $\textbf{g}$ is such that the matrix $H$ appearing in Eq.~(\ref{TO7a}) is of normal form Eq.~(\ref{HRTZshape}).

We have now fully specified the action  of $G_K$ on the triple $(\textbf{q}, \textbf{s}, \textbf{g})$, c.f. Eq.~(\ref{G1}), (\ref{G2}). MBQCs satisfy the invariance condition
\begin{equation}
  \label{GK1}
    \textbf{q} = T\textbf{s} + H\textbf{g}  \mod 2 \Longleftrightarrow   G_K(\textbf{q}) = T\, G_K(\textbf{s}) + H\, G_K(\textbf{g}) \mod 2, \, \forall K \in {\cal{S}}(|\Psi\rangle).
\end{equation}
It is evident that the requirement (\ref{GK1}) of invariance of the processing relations (\ref{TO7a}) under the gauge transformations poses constraints on the possible matrices $T$  and ${\tt{H}}$. In fact, as we show below, given $O_{\text{comp}}$ the matrices $T$ and ${\tt{H}}$ are uniquely specified uniquely by the above invariance condition.\medskip

To check the invariance condition Eq.~(\ref{GK1}) in a specific case, we return to our 3-qubit cluster state example of Section~\ref{OTO}. We consider the effect of the transformations induced by generators $K_1=\sigma_\phi^{(1)}\sigma_s^{(2)}$, $K_2=\sigma_s^{(1)}\sigma_{\phi}^{(2)}\sigma_s^{(3)}$ and $K_3=\sigma_s^{(2)}\sigma_\phi^{(3)}$ on the processing relations Eq.~(\ref{TrelEx}). As noted earlier, $I_{\text{gauge}}=\{1\}$. Then, with Eqs.~(\ref{G1}) and (\ref{G2}),
\begin{equation}
  \label{Gex3}
  \begin{array}{rlll}
    G_{K_1}:& \textbf{q} \longrightarrow \textbf{q} \oplus (1,0,0)^T,& \textbf{s} \longrightarrow \textbf{s} \oplus (0,1,0)^T, & g_1 \longrightarrow g_1 \oplus 1, \\
    G_{K_2}:& \textbf{q} \longrightarrow \textbf{q} \oplus (0,1,0)^T,& \textbf{s} \longrightarrow \textbf{s} \oplus (1,0,1)^T, & g_1 \longrightarrow g_1,\\
    G_{K_3}:& \textbf{q} \longrightarrow \textbf{q} \oplus (0,0,1)^T,& \textbf{s} \longrightarrow \textbf{s} \oplus (0,1,0)^T, & g_1 \longrightarrow g_1.
  \end{array}
\end{equation}
It is easily checked that the relation Eq.~(\ref{TrelEx}) is invariant under the transformations  $G_{K_1}$, $G_{K_2}$ and $G_{K_3}$ of Eq.~(\ref{Gex3}). However, if the transformations are restricted to $\textbf{q}$, $\textbf{s}$, the relation Eq.~(\ref{TrelEx}) is no longer invariant under the transformation induced by $K_1$.\medskip

We now return to the general case and show that, given the set $O_{\text{comp}}$ and the action Eq.~(\ref{G1}), (\ref{G2}) of the gauge transformations on the triple $(\textbf{q},\textbf{s},\textbf{g})$, the invariance condition Eq.~(\ref{GK1}) uniquely specifies the classical processing relations Eq.~(\ref{TO7a}) for the adaption of measurement bases.

Recall that we write the stabilizer generator matrix for $|\Psi\rangle$ in the $\sigma_\phi/\sigma_s$-basis as ${\cal{G}}(|\Psi\rangle) = (\Phi||S)$. Then, for the stabilizer generator $K_a \in {\cal{S}}(|\Psi\rangle)$ corresponding to the $a$-th row of ${\cal{G}}(|\Psi\rangle)$, with Eq.~(\ref{G1}) the action of the gauge transformation $G_{K_a}$ on $\textbf{s}$, $\textbf{q}$ is
\begin{equation}
  G_{K_a}:\; \textbf{s} \longrightarrow \textbf{s} \oplus \text{row}_a(S),\; \textbf{q} \longrightarrow \textbf{q} \oplus \text{row}_a(\Phi).
\end{equation}
With Eq.~(\ref{G2}), the action of $G_{K_a}$ on $\textbf{g}$ is
\begin{equation}
   G_{K_a}: \longrightarrow \textbf{g} \oplus \text{row}_a(\Phi)|_{I_{\text{gauge}}}.
\end{equation}
Here, $\text{row}_a(\Phi)|_{I_{\text{gauge}}}$ denotes $\text{row}_a(\Phi)$ restricted to $I_{\text{gauge}}$. Then, the condition Eq.~(\ref{GK1}) for invariance of $\textbf{q} =T \textbf{s} + H \textbf{g}$ under $G_{K_a}$ becomes 
$$
\text{row}_a(\Phi) = T  \text{row}_a(S) + H \text{row}_a(\Phi)|_{I_{\text{gauge}}} \mod 2.
$$
This condition must hold for all stabilizer generators $K_a$ simultaneously, hence
\begin{equation}
  \label{SymCon}
  \Phi^T = T S^T + H \Phi^T|_{I_{\text{gauge} \times \Omega}} \mod 2.
\end{equation} 
By definition, the qubits in $I_{\text{gauge}}$ have empty backward cones, and the qubits in $O_{\text{comp}}$ have empty forward cones, hence $T$ is of the form
$$
T =  \left(\begin{array}{c|c} 0 & 0 \\ \hline  {\tt{T}} & 0 \end{array}\right),
$$
where the column split is $(O_{\text{comp}})^c|O_{\text{comp}}$ and the row split is $I_{\text{gauge}}|(I_{\text{gauge}})^c$, c.f. Eq.~(\ref{Tshape}). By right-multiplication of relation Eq.~(\ref{SymCon}) with a suitable matrix, we transform $\Phi^T|_{I_{\text{gauge} \times \Omega}}$ into a matrix of form $(I|0)$ where the column split is between $I_{\text{gauge}}$ and $I_{\text{gauge}}^c$. By definition of $I_{\text{gauge}}$, such a transformation is always possible. Under the same transformation,
\begin{equation}
  \Phi^T \longrightarrow \left( \begin{array}{c|c} I & 0 \\ \hline \Phi_1 & \Phi_2 \end{array}\right),\;\; S^T|_{(O_{\text{comp}})^c \times \Omega} \longrightarrow (S_1|S_2).
\end{equation}
Inserting the above into Eq.~(\ref{SymCon}), we find that $H$ must be of normal form Eq.~(\ref{HRTZshape}), $H = \left( \begin{array}{c} I \\ \hline {\tt{H}} \end{array}\right)$, and
\begin{equation}
\begin{array}{rcl}
  \Phi_1 &=& {\tt{T}} S_1 + {\tt{H}} \mod 2,\\
  \Phi_2 &=& {\tt{T}} S_2 \mod 2.
\end{array}
\end{equation}
Now, $S_2$ must be an invertible matrix. This is the condition that, by definition of $O_{\text{comp}}$, every measurement outcome in $(O_{\text{comp}})^c$ is correctable. Then,
\begin{equation}
  {\tt{T}} = \Phi_2 {S_2}^{-1} \mod 2,\; {\tt{H}} = \Phi_1 + \Phi_2 {S_2}^{-1} S_1 \mod 2.
\end{equation} 
Hence the relation $\textbf{q} = T\textbf{s} + H \textbf{g}$ is uniquely specified.

\subsection{Gauge transformations and computational output}
 
In addition to Eq.~\ref{GK1}, we also require invariance of the classical output under the transformations Eq.~(\ref{G1}), (\ref{G2}),
\begin{equation}
  \label{GK2}
 \textbf{o} = Z\textbf{s} + R\textbf{g} \mod 2 = Z\, G_K(\textbf{s}) + R\,G_K(\textbf{g}) \mod 2, \, \forall K \in {\cal{S}}(|\Psi\rangle).
\end{equation}
Like Eq.~(\ref{GK1}), Eq.~(\ref{GK2}) is a determinism constraint. If for a single output bit $o$ the relation $o = \textbf{z}^T\textbf{s} + \textbf{r}^T\textbf{g}$ is not invariant under all gauge transformations Eq.~(\ref{G1}), (\ref{G2}), then the value of $o$ is guaranteed to be random, and thus useless as readout bit of a computation. Specifically,

\begin{Lemma} 
\label{Det}
Assume an MBQC where the relation $\textbf{q} =T\textbf{s}+H\textbf{g}$ is invariant under the gauge transformations Eq.~(\ref{G1}), (\ref{G2}), but an output bit $o$ exists whose defining relation $o = \textbf{z}^T\textbf{s} + \textbf{r}^T\textbf{g}$ is not invariant under the action of $G_K$ for some $K \in {\cal{S}}(|\Psi\rangle)$. Then, the value of $o$ is completely random, independent of the choice of measurement angles. 
\end{Lemma}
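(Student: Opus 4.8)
The plan is to read off the distribution of $o$ from the Born rule and then to exhibit a $\mathbb{Z}_2$‑symmetry of the measurement statistics, forced by the stabilizer of $|\Psi\rangle$, under which $o$ is not invariant; this symmetry will pin the distribution of $o$ to the uniform one. First, fix the measurement angles and a gauge $\textbf{g}$, and order the qubits compatibly with ``$\prec$'' so that $q_a=[T\textbf{s}+H\textbf{g}]_a$ is already determined when qubit $a$ is measured. Writing $|s_a,q_a\rangle_a$ for the post‑measurement states, the conditional Born probabilities of the successive adaptive measurements telescope, and the probability of obtaining the outcome string $\textbf{s}$ is
\[
P_{\textbf{g}}(\textbf{s})\;=\;\Bigl|\bigl(\textstyle\bigotimes_{a\in\Omega}{}_a\langle s_a,q_a|\bigr)\,|\Psi\rangle\Bigr|^2 ,\qquad \textbf{q}=T\textbf{s}+H\textbf{g}\bmod2 .
\]
The measurement angles enter only through the eigenstates $|s_a,q_a\rangle$, and will play no role in what follows.

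Next, let $K=\bigotimes_a(\sigma_s^{(a)})^{v_a}(\sigma_\phi^{(a)})^{w_a}\in{\cal{S}}(|\Psi\rangle)$ be a stabilizer element under whose action $G_K$ the relation $o=\textbf{z}^T\textbf{s}+\textbf{r}^T\textbf{g}$ is not invariant. Using ${}_a\langle s_a,q_a|\,\sigma_s^{(a)}={}_a\langle s_a\oplus1,q_a|$ and ${}_a\langle s_a,q_a|\,\sigma_\phi^{(a)}={}_a\langle s_a,q_a\oplus1|$, together with $K|\Psi\rangle=|\Psi\rangle$ exactly as in Eq.~(\ref{GaugeId}), one gets $\bigl(\bigotimes_a{}_a\langle s_a,q_a|\bigr)|\Psi\rangle=\bigl(\bigotimes_a{}_a\langle s_a\oplus v_a,q_a\oplus w_a|\bigr)|\Psi\rangle$ up to a global phase. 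Taking moduli squared, and then invoking the hypothesis that $\textbf{q}=T\textbf{s}+H\textbf{g}$ is $G_K$‑invariant, i.e. $(T\textbf{s}+H\textbf{g})\oplus\textbf{w}=T(\textbf{s}\oplus\textbf{v})+H(\textbf{g}\oplus\textbf{w}|_{I_{\text{gauge}}})$ by Eqs.~(\ref{GK1}),(\ref{G1}),(\ref{G2}), yields the key symmetry
\[
P_{\textbf{g}}(\textbf{s})\;=\;P_{\textbf{g}'}(\textbf{s}\oplus\textbf{v}),\qquad \textbf{g}':=\textbf{g}\oplus\textbf{w}|_{I_{\text{gauge}}}\bmod2 .
\]

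Finally, set $Q_b(\textbf{g}):=\sum_{\textbf{s}:\,\textbf{z}^T\textbf{s}+\textbf{r}^T\textbf{g}=b}P_{\textbf{g}}(\textbf{s})$. Re‑indexing the sum by $\textbf{s}\mapsto\textbf{s}\oplus\textbf{v}$ and using the symmetry, the condition defining it becomes $\textbf{z}^T\textbf{s}+\textbf{r}^T\textbf{g}'=b\oplus\delta$ with $\delta:=\textbf{z}^T\textbf{v}\oplus\textbf{r}^T(\textbf{w}|_{I_{\text{gauge}}})$, which is exactly the amount by which $G_K$ shifts $o$; by assumption $\delta=1$. Hence $Q_0(\textbf{g})=Q_1(\textbf{g}')$, and together with $Q_0+Q_1=1$ this already forces $Q_0(\textbf{g})=Q_1(\textbf{g})=\tfrac12$ as soon as $\textbf{g}'=\textbf{g}$, i.e. $\textbf{w}|_{I_{\text{gauge}}}=\textbf{0}$. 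For the remaining case one uses that, by hypothesis, $\textbf{q}=T\textbf{s}+H\textbf{g}$ is invariant under \emph{every} $G_{K'}$, in particular under the gauge operators $\overline{K}(i)$ of Eq.~(\ref{CorrGauge}); since $\delta$ is additive under multiplication of stabilizer elements, multiplying $K$ by a suitable product of the $\overline{K}(i)$ either produces a stabilizer element with vanishing $I_{\text{gauge}}$‑part of $\textbf{w}$ and $\delta=1$ (reducing to the case just settled), or exhibits some $\overline{K}(i)$ with $\delta=1$, so that the corresponding gauge change is an unobservable relabeling of $\textbf{s}$ and $Q_b$ cannot depend on $\textbf{g}$, whence again $Q_0(\textbf{g})=Q_1(\textbf{g}')=Q_1(\textbf{g})=\tfrac12$. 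In all cases the bias of $o$ vanishes, and since every identity used holds verbatim for any choice of measurement angles, this is independent of those angles. The step I expect to be the main obstacle is precisely this last reduction: the bare symmetry only equates $Q_0$ at one gauge with $Q_1$ at another, so the delicate point is to argue cleanly, from the stated hypotheses alone, that the single‑bit distribution $Q_b$ cannot carry any dependence on the gauge; everything preceding it is the telescoping identity of the first step together with routine $\mathbb{Z}_2$‑bookkeeping.
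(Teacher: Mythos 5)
Your overall strategy coincides with the paper's: both proofs write the outcome distribution via the Born rule, insert the stabilizer element $K$ into the overlap as in Eq.~(\ref{GaugeId}), and use the assumed invariance of $\textbf{q}=T\textbf{s}+H\textbf{g}$ to obtain the measure-preserving shift $P_{\textbf{g}}(\textbf{s})=P_{\textbf{g}'}(\textbf{s}\oplus\textbf{v})$ that flips $o$; your treatment of the case $\textbf{w}|_{I_{\text{gauge}}}=\textbf{0}$ is exactly the paper's computation Eq.~(\ref{rand}), and the first horn of your final dichotomy (factoring out the gauge operators to land on a gauge-trivial element with $\delta=1$) is a legitimate reduction to that case.

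The gap is in the second horn, and it is the one you yourself flagged. When the element with $\delta=1$ is a gauge operator $\overline{K}(i)$, your argument needs ``$Q_b$ cannot depend on $\textbf{g}$,'' but that is precisely what fails when the defining relation of $o$ is not invariant under $G_{\overline{K}(i)}$: the gauge change is an unobservable relabeling of the \emph{full record} $(\textbf{s},\textbf{q})$, not of a bit whose definition explicitly breaks that relabeling. A one-qubit example makes this concrete: $|\Psi\rangle=|+\rangle$ with plane $[X,Y]$, stabilizer generated by $\overline{K}(1)=\sigma_\phi^{(1)}=X$, so $I_{\text{gauge}}=O_{\text{comp}}=\{1\}$, $q_1=g_1$ (invariant under $G_X$), and $o=s_1+g_1$ has $\delta=1$. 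At fixed gauge $g_1=0$ one finds $p(o=0)=\cos^2(\varphi_1/2)\neq 1/2$; only the relation $Q_0(g_1)=Q_1(g_1\oplus 1)$ holds, so the fixed-gauge distribution of $o$ does depend on $\textbf{g}$ and is not uniform. The paper closes this case differently: it declares the gauge freely choosable and therefore writes $p(o=c)$ as the uniform average $2^{-|I_{\text{gauge}}|}\sum_{\textbf{g}}(\cdots)$ over all gauges, after which $G_K$ permutes the joint sum over $(\textbf{s},\textbf{g})$ and exchanges the two level sets of $o$, giving $1/2$. So the correct repair of your second horn is not gauge-independence of $Q_b$ but averaging over $\textbf{g}$ (equivalently, reading the conclusion of Lemma~\ref{Det} with the gauge drawn uniformly); as written, your final step asserts a false intermediate claim and the proof does not go through in that sub-case.
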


{\em{Proof of Lemma~\ref{Det}.}} For simplicity, consider first the special case where $K \in {\cal{S}}(|\Psi\rangle)$ acts trivially on $\textbf{g}$, $G_K (\textbf{g}) = \textbf{g},\, \forall g$. We may then write $o = \sum_{i \in J}s_i + c$ for an offset $c = \textbf{r}^T\textbf{g}$. We call the string $\textbf{s}|_J$ of measurement outcomes on $J$ even (odd) if it has even (odd) weight. We denote the local post-measurement state on qubit $a$ by $|\varphi_a, s_a,q_a(\textbf{s}, \textbf{g})\rangle$, where $\varphi_a$ is the measurement angle, $s_a$ the measurement outcome and $q_a$ specifies the chosen measurement basis. 

Under the transformation $G_K$, $\textbf{s} \longrightarrow \textbf{s} \oplus \Delta \textbf{s}_K$, where, by assumption, $\Delta \textbf{s}_K|_J$ is odd. Now, the probability of outputting $o=c$ is
\begin{equation}
\label{rand}
  \begin{array}{rcl}
  p(o=c) &= & \displaystyle{\sum_{\textbf{s}|_J = \text{even}} \left|\left( \bigotimes_{a \in \Omega} \mbox{}_a\langle \varphi_a, s_a, q_a(\textbf{s},\textbf{g})|\right) |\Psi\rangle \right|^2}\\
  & =&  \displaystyle{\sum_{\textbf{s}|_J = \text{even}} \left|\left( \bigotimes_{a \in \Omega} \mbox{}_a\langle \varphi_a, s_a, q_a(\textbf{s},\textbf{g})|\right) K |\Psi\rangle \right|^2}\\
& =& \displaystyle{\sum_{\textbf{s}|_J = \text{even}} \left|\left( \bigotimes_{a \in \Omega} \mbox{}_a\langle \varphi_a, s_a \oplus \Delta s_{K,a}, q_a\oplus \Delta q_{K,a}|\right) |\Psi\rangle \right|^2}\\
 & =& \displaystyle{\sum_{\textbf{s}|_J = \text{even}} \left|\left( \bigotimes_{a \in \Omega} \mbox{}_a\langle \varphi_a, s_a \oplus \Delta s_{K,a}, q_a(\textbf{s} \oplus \Delta \textbf{s}_K,\textbf{g})|\right) |\Psi\rangle \right|^2}\\
 & =&  \displaystyle{\sum_{\textbf{s}|_J = \text{odd}} \left|\left( \bigotimes_{a \in \Omega} \mbox{}_a\langle \varphi_a, s_a, q_a(\textbf{s},\textbf{g})|\right) |\Psi\rangle \right|^2}\\
&=& p(o=\overline{c}).
  \end{array}
\end{equation} 
Thus, $p(o=c) = p(o=\overline{c})=1/2$. Note that in transitioning from the third to the fourth line of Eq.~(\ref{rand}) we have used the invariance property Eq.~(\ref{GK1}), i.e., the assumption that the adaption of measurement bases is deterministic.

In the general case, $G_K: \textbf{s} \longrightarrow \textbf{s} \oplus \Delta \textbf{s}_K,\; \textbf{g} \longrightarrow \textbf{g} \oplus \Delta \textbf{g}_K$. We note that we can choose any gauge fixing $\textbf{g}$, and thus $p(o=c)= \displaystyle{\frac{1}{2^{|I_{\text{gauge}}|}} \sum_{\textbf{g}} \sum_{\textbf{s}|_J = \text{even}} \left|\left( \bigotimes_{a \in \Omega} \mbox{}_a\langle \varphi_a, s_a, q_a(\textbf{s},\textbf{g})|\right) |\Psi\rangle \right|^2}$. By an argument analogous to the above we then find $p(o=0) = p(o=1) = 1/2$. $\Box$

\section{What is time in MBQC?}
\label{Flip}

It has been established earlier that `time' in measurement based quantum computation is a transitive binary relation. If a given MBQC is to be deterministically runnable, then we require a strict partial ordering among the measurement events. If, on the other side, in pursuing an analogy with general relativity we allow for closed time-like curves, then antisymmetry and irreflexivity  are not required. Given this setting, in the previous sections of this paper we have analyzed the constraints on MBQC temporal relations imposed by a group of gauge transformations. 

In this section, we provide another angle at the question of what time in MBQC is. Namely, we introduce a second group of symmetry transformations on MBQCs which is related to local complementation \cite{LC1} - \cite{LC3}. These transformations in general change the MBQC on which they act, but leave temporal relations invariant. The possible temporal relations in MBQC thus label representations of this group.  

\subsection{Flipping measurement planes}

According to Eq.~(\ref{Obs}), for any qubit $a$ in a given resource state, the local observable measured to drive the computation is
$$
O_a[q_a]= \cos \varphi_a\, \sigma_\phi^{(a)}  +  (-1)^{q_a}\sin \varphi_a\, \sigma_{s\phi}^{(a)}.
$$
Therein, $q_a$ is a linear function of the measurement outcomes $\{s_b,\, b \in \Omega\}$, c.f. Eq.~(\ref{TO7a}). 

Let's see what happens if we use a different rule for the adjustment of measurement bases, namely
\begin{equation}
\label{lObsPr}
O_a'[q_a]= (-1)^{q_a} \cos \varphi_a\, \sigma_\phi^{(a)}  + \sin \varphi_a\, \sigma_{s\phi}^{(a)},
\end{equation}
That is, if $q_a=1$, to obtain $O'_a[1]$ we are flipping the observable $O_a[0]$ about the $\sigma_{s\phi}$-axis rather than the $\sigma_\phi$-axis. Comparing Eqs. (\ref{Obs}) and (\ref{lObsPr}), we find that 
\begin{equation}
\label{ObsId}
O'_a[q_a] \equiv (-1)^{q_a}O_a[q_a],
\end{equation}
independent of the measurement angle $\varphi_a$. The measurements of $O'_a[q_a]$ and  $O_a[q_a]$ are always in the same basis for the same $q_a$, and the measured eigenvalues differ by a factor of $(-1)^{q_a}$. 

We call the transformation $\tau_s[i]: O_i[q_i] \longrightarrow O'_i[q_i]$ {\em{flipping of the measurement plane at qubit $i$}}. On the elementary degrees of freedom, namely the resource state $|\Psi\rangle$, the Pauli observables $\sigma_s^{(i)}$, $\sigma_{s\phi}^{(i)}$ (action on $\sigma_s^{(i)}$ is implied), and the measurement angle $\varphi_i$, the flipping $\tau_s[i]$ acts as
\begin{equation}
\label{tau_s}
\tau_s[i]: \begin{array}{lcl} \sigma_\phi^{(i)} &\longleftrightarrow& \sigma_{s\phi}^{(i)},\\ \varphi_i &\longrightarrow& (-1)^{q_i}\frac{\pi}{2} - \varphi_i,\\
|\Psi\rangle & \longrightarrow & |\Psi\rangle, \end{array}
\end{equation}
The action of $\tau_s[i]$ on the Pauli operators $\sigma^{(j)}$ and the measurement angles $\varphi_j$, for $j\neq i$ is trivial. All our considerations are independent of the values of the measurement angles. In particular, the second part of the transformation Eq.~(\ref{tau_s}) does not affect temporal order.

We now discuss the effect of the flipping $\tau_s[a]$, $a \in \Omega$ on a given MBQC. Denote the measurement outcome of a measurement of $O'_a(q_a)$ by $s'_a$. By Eq.~(\ref{ObsId}), the following two measurement procedures are always equivalent. (I) Measuring $O_a[q_a]$ and outputting $s_a$, and (II) Measuring $O'_a[q_a]$ and outputting $s'_a + q_a \mod 2$. We may call the device that performs Procedure I a $\phi$-box, and the device that performs Procedure II a $s\phi'$-box. Then, 
\begin{center}
\includegraphics[width=6cm]{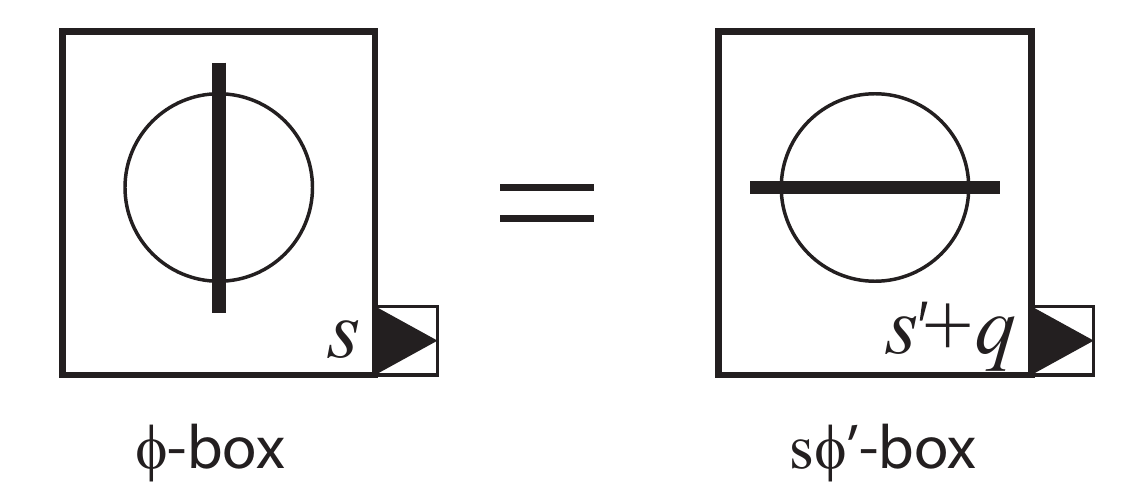}.
\end{center}
The prime in the $s\phi'$-box accounts for the fact that not the measurement outcome $s'_a$ itself is outputted, but rather the locally post-processed value $s'_a+q_a \mod 2$. Now, instead of outputting $s'_a+q_a$, the device at $a$ may only output $s'_a$ (that is an $s\phi$-box), and the classical post-processing relations for the adaption of measurement bases are modified accordingly, i.e. $s_a \longrightarrow s_a'+q_a \mod 2$. Can the resulting relations again be written in a form $\textbf{q} = T'\textbf{s}' + H'\textbf{g}$?  
 
We now attempt transforming a $\phi$-box into an $s\phi$-box at qubit $a$. The vectors of measurement outcomes $\textbf{s}$ and $\textbf{s}'$ are related via $\textbf{s} = \textbf{s}' + e_ae_a^T\textbf{q}$. Inserting this relation into Eq.~(\ref{TO7a}), we obtain 
\begin{equation}
\label{star}
(I+T \textbf{e}_a \textbf{e}_a^T)\textbf{q} = T\textbf{s}'  + H\textbf{g} \mod 2.
\end{equation} 

Case I: $T_{aa}=0$. Physically, this means that the measurement basis at the flipped qubit $a$ does not depend on the measurement outcome at $a$, before the transformation. Multiplying Eq.~(\ref{star}) with $\textbf{e}_a^T$ from the left yields $q_a = \textbf{e}_a^TT\textbf{s}'+\textbf{e}_a^TH\textbf{g} \mod 2$. Inserting back into Eq.~(\ref{star}), we obtain
\begin{equation}
  \label{Tlc}
  T' = T + T \textbf{e}_a \textbf{e}_a^T T \mod 2.
\end{equation}
Likewise,
\begin{equation}
  \label{TlcCons}
  H' = H \oplus T \textbf{e}_a \textbf{e}_a^TH, \; Z' = Z \oplus  Z \textbf{e}_a \textbf{e}_a^T T,\; R' = R \oplus  Z \textbf{e}_a \textbf{e}_a^T H.  
\end{equation}
Eqs.~(\ref{Tlc}) and (\ref{TlcCons}) completely describe the effect of the flipping $\tau_s[a]$ of the measurement plane at qubit $a$ on the classical processing relations Eq.~(\ref{TO7a}), (\ref{TO7b}). 

{\em{Remark:}} If the matrices $H,R,T,Z$ are given their normal form Eq.~(\ref{HRTZshape}) wrt the pair $I_{\text{gauge}},O_{\text{comp}}$ then the flipping of the measurement plane at any vertex $a$ with $T_{aa}=0$ leaves this normal form intact. We can therefore state a transformation rule equivalent to Eqs.~(\ref{Tlc}), (\ref{TlcCons}) for the matrices $\tt{H},\tt{R},\tt{T},\tt{Z}$. This rule is, in fact, simpler. We again consider the composite matrix $T_{\text{ext}}$, c.f. Eq.~(\ref{TppL}), 
$$
T_{\text{ext}} =  \left(\begin{array}{c|c|c|c} 0 & 0 & 0 & 0\\ I & 0 & 0 &0 \\ \hline {\tt{H}} & {\tt{T}}  & 0 & 0\\ \hline {\tt{R}} & {\tt{Z}} & I & 0 \end{array}\right),
$$
The effect of flipping of the measurement plane at $a$ then is
\begin{equation}
\label{TlcExt}
T_{\text{ext}} \longrightarrow T'_{\text{ext}} = T_{\text{ext}} + T_{\text{ext}}\, \textbf{e}_a \textbf{e}_a^T\, T_{\text{ext}} \mod 2.
\end{equation}
Thus, the rule is just the same as Eq.~(\ref{Tlc}) for the original influence matrix $T$.\medskip

Case 2: $T_{aa}=1$. In this case, the measurement basis at $a$ does depend on the measurement outcome at $a$. This is an example for a closed time-like curve (only involving the measurement device at $a$), and an obstacle to deterministic runnability. Now, the matrix $I\oplus T \textbf{e}_a \textbf{e}_a^T$ on the left side in Eq.~(\ref{star}) is not invertible, $\textbf{e}_A^T(I \oplus T \textbf{e}_a \textbf{e}_a^T) =0$. Hence, the relation (\ref{star}) can not be solved for $\textbf{q}$ in this case. There is no relation Eq.~(\ref{TO7a}) with the same sets $I_{\text{gauge}}$, $O_{\text{comp}}$ before and after flipping.\medskip

We now discuss the consequences of flipping measurement planes for the above two cases.

\subsection{Flipping measurement planes and local complementation}

We now return to the above Case 1, namely when flipping of a measurement plane yields a computation with a new relation Eq.~(\ref{TO7a}). Note that the computation before and after the flip generate the same output distribution. Flipping a $\phi$-box into an $s\phi'$-box is an equivalence transformation, only based on the operator identity Eq.~(\ref{ObsId}). Changing an $s\phi'$-box into an $s\phi$-box is again an equivalence transformation, provided it can be carried out. 

The influence matrices $T$ and $T'$ before and after the flipping, respectively, are in general not the same, c.f. Eq.~(\ref{Tlc}). However, $T$ and $T'$ still generate the same temporal order, as we now show.
\begin{Lemma}
\label{TOpres}
Be $T$ an influence matrix with $T_{ii}=0$. Then, $T$ and $T' = T \oplus T \textbf{e}_i \textbf{e}_i^T T$ generate the same temporal relation under transitivity.  
\end{Lemma}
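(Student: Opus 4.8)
The plan is to show that the relations "$a \prec b$" induced by $T$ and by $T'$ have the same transitive closure. Recall $a \prec b$ iff $b \in fc(a)$ iff $[T]_{ba} = 1$. Write $T' = T \oplus T\mathbf{e}_i\mathbf{e}_i^T T$, so $[T']_{ba} = [T]_{ba} \oplus [T]_{bi}[T]_{ia}$. The extra term $[T]_{bi}[T]_{ia}$ is nonzero exactly when $a \prec i$ and $i \prec b$ both hold under $T$. Thus the only entries that change when passing from $T$ to $T'$ are entries $[\cdot]_{ba}$ with $a \prec i \prec b$ already witnessed by $T$ --- and for such a pair, $a \prec b$ is a \emph{transitive consequence} of $a \prec i$ and $i \prec b$. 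So setting such a bit from $0$ to $1$ adds nothing new to the transitive closure, and setting it from $1$ to $0$ removes a relation that is still forced by transitivity through $i$.

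Concretely, I would argue the two set inclusions between the transitive closures. First, let $\prec_T^*$ and $\prec_{T'}^*$ denote the transitive closures. If $a \prec_{T'} b$ (one step under $T'$), then either $[T]_{ba}=1$, giving $a \prec_T b$ directly, or $[T]_{ba}=0$ but $[T]_{bi}=[T]_{ia}=1$, giving $a \prec_T i \prec_T b$, hence $a \prec_T^* b$. Either way a single $T'$-step is covered by $\prec_T^*$, so by transitivity $\prec_{T'}^* \subseteq \prec_T^*$. Conversely, if $a \prec_T b$ (one step under $T$), then either $[T']_{ba}=1$ and we are done, or $[T']_{ba}=0$, which forces $[T]_{bi}[T]_{ia}=1$, i.e. $a \prec_T i$ and $i \prec_T b$. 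Here I use $T_{ii}=0$: it guarantees $a\neq i$ and $b \neq i$ in this case (if $a=i$ then $[T]_{ia}=[T]_{ii}=0$, contradiction; similarly $b\neq i$), so that applying the same dichotomy to the strictly "shorter" pairs $(a,i)$ and $(i,b)$ is legitimate. Iterating/inducting on this decomposition --- each pair $(a,i)$, $(i,b)$ either survives in $T'$ as a single step or again splits through $i$, but $i$ is a fixed vertex and cannot be re-split because $T_{ii}=0$ --- one shows $a \prec_{T'} i$ and $i \prec_{T'} b$, hence $a \prec_{T'}^* b$. Therefore $\prec_T^* \subseteq \prec_{T'}^*$, and the two closures coincide.

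The one subtlety, which I expect to be the only real obstacle, is making the "split through $i$" argument airtight: after decomposing $a \prec_T b$ into $a \prec_T i$ and $i \prec_T b$, one must check these do not themselves require further splitting through $i$ ad infinitum. This is precisely where $T_{ii}=0$ is essential: an edge into or out of $i$ is never itself of the form "$a \prec i \prec b$" with the middle vertex equal to $i$ unless one endpoint equals $i$, and $[T]_{ii}=0$ rules out the degenerate loop. So after one decomposition both pieces already have $[T']$-value $1$ (no $[T]_{\cdot i}[T]_{i \cdot}$ correction applies when one index is $i$, again using $T_{ii}=0$), i.e. they are genuine single $T'$-steps, and the induction terminates immediately. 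I would phrase the whole argument in terms of the reachability digraphs of $T$ and $T'$ and note that $T'$ is obtained from $T$ by "short-cutting and/or deleting some length-two paths through the single vertex $i$," an operation that visibly preserves reachability; the formal write-up is the two-inclusion chase above.
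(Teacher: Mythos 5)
Your proof is correct and follows essentially the same route as the paper's: both rest on the observation that the only entries that change in passing from $T$ to $T'$ are those already witnessed by a length-two path through $i$, and that edges into and out of $i$ are preserved because $T_{ii}=0$, so each single step of either relation is covered by the transitive closure of the other. The only cosmetic difference is that the paper obtains the inclusion $\prec_{T'}\subseteq\,\prec_{T}$ (on transitive closures) from the involution $\tau_s[i]^2=I$, whereas you argue both inclusions directly; both are fine.
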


{\em{Proof of Lemma~\ref{TOpres}.}} Let's introduce a shorthand $a \rightarrow c$ for $c \in fc(a)$ (meaning that the measurement outcome at $a$ influences the measurement basis at $c$). Now, we have to show that $e \prec_T f \Longleftrightarrow e \in \prec_{T'} f$, for any $T'$ generated from $T$ by the transformation Eq.~(\ref{Tlc}).

(I) ``$\Longrightarrow$'': Assume that $e \prec_T f$. Then there exists a sequence of measurement events $e \rightarrow m_1 \rightarrow m_2 \rightarrow ..\rightarrow a \rightarrow c \rightarrow .. \rightarrow f$. Can we break the arrow $a \rightarrow c$, say? To investigate this, let us rewrite the transformation rule Eq.~(\ref{Tlc}) for the flip $\tau_s[i]$ as
\begin{equation}
\label{Tpr3}
\tau_s[i]: \begin{array}{lclr}
  \textbf{fc}(a) &\longrightarrow& \textbf{fc}(a) \oplus \textbf{fc}(i),& \mbox{if } i \in fc(a),\\ 
  \textbf{fc}(a) &\longrightarrow& \textbf{fc}(a),& \mbox{if } i \not\in fc(a),
\end{array}
\end{equation}
Case 1: $a \rightarrow i$ before the transformation $\tau_s[i]$. Then, $\textbf{fc}'(a) = \textbf{fc}(a) \oplus \textbf{fc}(i)$. Since $T_{ii}=0$ by assumption, $a \rightarrow i$ after the transformation $\tau_s[i]$. Sub-case 1a: $i \rightarrow c$ before the transformation $\tau_s[i]$. Since $T_{ii}=0$ ($i \not\in fc(i)$), $i \rightarrow c$ after the transformation $\tau_s[i]$. Thus $a \rightarrow i \rightarrow c$ after the transformation, and hence $a \prec_{T'} c$.
Sub-case 1b: $i \not\rightarrow c$ before $\tau_s[i]$. Then, $a \rightarrow c$ remains after the transformation. Case 2:  $a \not \rightarrow i$ before the transformation $\tau_s[i]$. Then $a \rightarrow c$ after $\tau_s[i]$.
Thus, in all cases $a \prec_{T'} c$, and therefore $e \prec_{T'} f$.

 (II) ``$\Longleftarrow$'': From Eq.~(\ref{Tpr3}), $\tau_s[i]^2 = I$. $\Box$\medskip

Apply a series of transformations Eq.~(\ref{Tlc}) on an initial influence matrix $T$ with vanishing diagonal part may produce an influence matrix with a non-vanishing diagonal part. Thus, the application of the transformation Eq.~(\ref{Tlc}) is restricted. To circumvent this problem, we introduce a modified transformation
\begin{equation}
  \label{Tlc2}
  \tilde{\tau}[i]: T \longrightarrow T' =  T + T\textbf{e}_i \textbf{e}_i^TT +{\cal{D}}(T\textbf{e}_i \textbf{e}_i^TT) \mod 2.
\end{equation} 
Clearly, this transformation takes influence matrices with vanishing diagonal part to influence matrices with vanishing diagonal part, and thereby avoids the problem of restricted applicability of transformation Eq.~(\ref{Tlc}). Note that the transformation Eq.~(\ref{Tlc2}) has the form of local complementation, albeit the influence matrix $T$ that it acts on will in general not be symmetric.

But what is the physical significance of transformation Eq.~(\ref{Tlc2})? The only additional effect of the transformation $\tilde{\tau}[i]$ over $\tau_s[i]$ is the cancelling of the diagonal part of the influence matrix after the transformation, c.f. the last term in Eq.~(\ref{Tlc2}). This can be achieved by a local unitary that exchanges $\sigma_{s\phi}\leftrightarrow \sigma_s$ on a respective qubit. The action of $\tilde{\tau}[i]$ on the elementary degrees of freedom therefore is
\begin{equation}
\label{Tlc3}
\tilde{\tau}[i]: \begin{array}{lcll} \sigma_\phi^{(i)} &\longleftrightarrow& \sigma_{s\phi}^{(i)},\\
 \varphi_i &\longrightarrow& (-1)^{q_i}\frac{\pi}{2} - \varphi_i, \\ 
\sigma_s^{(j)} & \longleftrightarrow & \sigma_{s\phi}^{(j)}, & \forall j \in fc(i) \cap bc(i),\\
|\Psi\rangle &\longrightarrow & |\Psi\rangle. 
\end{array}
\end{equation}
We find that the local measured operators for all qubits $j \in bc(i) \cap fc(i)$ change in a way that cannot be accommodated by a change of the respective measurement angle. For those qubits, the new measured observables  lie in a different equatorial plane of the Bloch sphere. Therefore,  the transformation Eq.~(\ref{Tlc3}), unlike the transformation Eq.~(\ref{Tlc}), does {\em{not}} necessarily map a given computation onto itself. What it does, however, is mapping a given computation to a computation with the same temporal relation. 

\begin{Lemma}
\label{TI}
Be $T$ an influence matrix with $T_{ii}=0$. Then, $T$ and $T' = T \oplus  T \textbf{e}_i \textbf{e}_i^T T \oplus {\cal{D}}(T \textbf{e}_i \textbf{e}_i^T T)$ generate the same temporal relation under transitivity.  
\end{Lemma}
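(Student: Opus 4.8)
The plan is to reduce Lemma~\ref{TI} to Lemma~\ref{TOpres}, which already establishes that $T$ and $T_1 := T \oplus T\textbf{e}_i\textbf{e}_i^T T$ generate the same temporal relation under transitivity whenever $T_{ii}=0$. Write $D := {\cal{D}}(T\textbf{e}_i\textbf{e}_i^T T)$ for the diagonal matrix appearing in the extra term, so that the matrix in the statement is $T' = T_1 \oplus D$. The key observation is that adding a diagonal matrix to an influence matrix changes only the self-loops $a \prec a$ and never affects any off-diagonal entry; since the transitive closure of a relation is determined by its off-diagonal edges together with, at most, the self-loops they generate, and since a self-loop $a\prec a$ is generated under transitivity from off-diagonal edges iff there is a cycle $a \to b \to \cdots \to a$ (which depends only on off-diagonal data), the transitive closure restricted to distinct pairs $e \neq f$ is insensitive to $D$.

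The steps, in order. First I would record that $T$ and $T' = T_1 \oplus D$ have identical off-diagonal parts: by Lemma~\ref{TOpres}, $T$ and $T_1$ generate the same temporal relation, hence (in particular) the off-diagonal parts of $T$ and $T_1$ have the same transitive closure; and since $D$ is diagonal, $[T']_{ef} = [T_1]_{ef}$ for all $e \neq f$. Second, I would argue that whether $e \prec f$ holds in the transitive closure, for any pair $e,f$ (distinct or not), is a function only of the off-diagonal entries: a chain $e = m_0 \to m_1 \to \cdots \to m_k = f$ witnessing $e \prec f$ can always be chosen with $m_0, \dots, m_{k}$ so that no step is a self-loop (a self-loop step $m \to m$ can simply be deleted from the chain without changing its endpoints or validity), so self-loop entries are never needed to derive any precedence relation. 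Third, combining these two points: the transitive closure of $T'$ equals the transitive closure of the off-diagonal part of $T'$, which equals the transitive closure of the off-diagonal part of $T_1$, which by Lemma~\ref{TOpres} equals the transitive closure of $T$. Hence $T$ and $T'$ generate the same temporal relation.

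The main obstacle — really the only subtlety — is making precise the claim that self-loops are irrelevant to the transitive closure, i.e.\ that a diagonal perturbation $D$ cannot create or destroy any precedence $e \prec f$. This requires being slightly careful about the convention for ``the temporal relation generated under transitivity'': one must check that it is generated by chains and that a chain may be taken loop-free, so that the only way $D$ could matter is by contributing a self-loop $a \prec a$ that was not already forced — but such a self-loop, if forced by transitivity at all, is forced by an off-diagonal cycle, which $D$ does not alter, and if not forced, it is not part of ``the temporal relation generated under transitivity'' in the same sense for $T$ as for $T'$. Once this bookkeeping is pinned down, the lemma follows immediately from Lemma~\ref{TOpres}. $\Box$
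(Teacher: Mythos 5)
Your overall strategy coincides with the paper's: split the transformation into the local-complementation step $T \to T_1 = T \oplus T\textbf{e}_i\textbf{e}_i^T T$ handled by Lemma~\ref{TOpres}, followed by the diagonal correction. The gap is in the second step. Your central claim---that whether $e \prec f$ holds in the transitive closure is, for \emph{all} pairs including $e=f$, a function of the off-diagonal entries alone---is false, and your own justification shows why: the chain-shortening argument (``a self-loop step $m \to m$ can simply be deleted'') fails precisely when the witnessing chain is the single step $a \to a$, since deleting it leaves no chain at all. A direct diagonal edge $[T_1]_{aa}=1$ puts $a \prec_{T_1} a$ into the relation regardless of the off-diagonal data, so XORing away that entry could in principle delete a self-loop from the temporal relation. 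Your closing dichotomy does not repair this: the problematic case is a self-loop present in $T_1$ as a direct edge but (hypothetically) not witnessed by any off-diagonal cycle, and your assertion that such a self-loop ``is not part of the temporal relation generated under transitivity'' contradicts the fact that the transitive closure always contains the generating edges themselves.

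What is missing is the observation that this bad case cannot occur. The paper closes it structurally: $[{\cal{D}}(T\textbf{e}_i\textbf{e}_i^T T)]_{aa}=1$ forces $T_{ai}=T_{ia}=1$, and since $T_{ii}=0$ these two off-diagonal entries survive in $T'$, so the two-cycle $a \to i \to a$ still yields $a \prec_{T'} a$ under transitivity; hence zeroing $[T_1]_{aa}$ loses nothing. (Alternatively, your own ingredients almost suffice: by Lemma~\ref{TOpres}, $a\prec_{T_1}a$ implies $a\prec_T a$, and since ${\cal{D}}(T)=0$ this self-loop in $T$ must be witnessed by a cycle through some $b\neq a$, which transfers to $T'$ via the agreement on distinct pairs---but this argument is not made explicit in your writeup, and either version of it must appear for the proof to be complete.)
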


\noindent{Proof of Lemma~\ref{TI}.} Assuming the initial influence matrix has vanishing diagonal part, we split the transformation $T \longrightarrow T \oplus  T \textbf{e}_i \textbf{e}_i^T T \oplus {\cal{D}}(T \textbf{e}_i \textbf{e}_i^T T)$ into two steps, namely $T \longrightarrow T' = T \oplus  T \textbf{e}_i \textbf{e}_i^T T$ and $T' \longrightarrow T'' = T' \oplus {\cal{D}}(T')$. By Lemma~\ref{TOpres}, $T$ and $T'$ generate the same temporal order. Now assume that $T'_{kk}=1$ for some $k \in \Omega$, $k \neq i$. This requires that $T_{ik}=T_{ki}=1$. Then, we also have $T_{ik}'=T_{ki}'=1$ and $T_{ik}''=T_{ki}''=1$. Thus, $k \prec_{T'} k$ and $k \prec_{T''} k$. The closed time-like curve involving $k$ is not changed by setting $T''_{kk}=0$. All other relations trivially remain unaffected by the transformation $T' \longrightarrow T''$. $\Box$  \medskip

We may now consider the orbit $O_T$ of a particular influence matrix $T$ under local complementation Eq.~(\ref{Tlc3}). Every local complementation $\tilde{\tau}[i]$ permutes the elements of $O_T$. Therefore, we have a homomorphism $\tilde{\tau}[i] \longrightarrow P[i]$ where $P[i]$ is a permutation matrix of size $|O_T| \times |O_T|$. The matrices $\{P[i]|\, i \in \Omega\}$ generate a representation of the local complementation group. The representation acts on a given orbit $O_T$ of influence matrices, and all $T \in O_T$ give rise to the same temporal relation under transitivity. Therefore, this temporal relation is a label of the given representation. All possible temporal relations in MBQC arise as labels of representations of the local complementation group.

\section{Breaking up closed time-like curves}
\label{BRKctc}

\subsection{Closed time-like curves of length 1}

In this section we return to the case of $T_{ii}=1$ before flipping the measurement plane at qubit $i$. Such a qubit $i$ cannot be in $I_{\text{gauge}}$, since $I_{\text{gauge}} \subseteq I$ by definition. If $T_{ii}=1$ then $i \in bc(i)$. The backward cones of all qubits in $i$ are empty by definition of $I$, however. Likewise, $i \not \in O_{\text{comp}}$. If $T_{ii}=1$ then $i \in fc(i)$. However, $fc(a)=\emptyset$ for all $a \in O_{\text{comp}}$. Thus, there is only one case to consider, namely $i \in (I_{\text{gauge}})^c\cap (O_{\text{comp}})^c$.

In this case, there exists a correction operator $K(i)$ for qubit $i$ before the flipping, $K(i) = \sigma_{s\phi} \otimes K(i)|_{\Omega \backslash i}$. After flipping at $i$, this operator turns into
\begin{equation}
  \tau_s[i](K(i)) = \sigma^{(i)}_\phi \otimes K(i)|_{\Omega \backslash i}=:\overline{K}'(i).
\end{equation} 
That is, the operator $\tau_s[i](K(i))$ resulting from flipping at $i$ is a gauge type operator, c.f. Eq.~(\ref{CorrGauge}). Thus, the flipping transformation $\tau_s[i]$ (when $T_{ii}=1$) enlarges $I_{\text{gauge}}$ by one qubit,
$$
\tau_s[i]:\; I_{\text{gauge}}  \longrightarrow I_{\text{gauge}} \cup \{i\},\;\; \mbox{if } T_{ii}=1.
$$
Furthermore, after the flipping at $i$ there no longer is a correction operation for qubit $i$, hence
$$
\tau_s[i]:\; O_{\text{comp}}  \longrightarrow O_{\text{comp}} \cup \{i\},\;\; \mbox{if } T_{ii}=1.
$$
This has two consequences. First, the forward cone of $i$ becomes empty. In particular $T_{ii}=0$ after the flipping. Thus, the closed time-like curve consisting of qubit $i$ has been removed. Second, an additional bit of optimal classical output is being created by the flipping at $i$. What does that output bit signify?

Recall that before the flipping at $i$, the rule for adjusting the measurement basis at $i$ is
$$
q_i \begin{array}{c} !\vspace{-2mm} \\= \vspace{-2mm} \\ \mbox{ }\end{array} s_i + \sum_{j \in J\backslash i}s_j \mod 2, 
$$
for some set $J \subseteq \Omega$. Here, we have dropped a constant offset $\textbf{h}^T\textbf{g}$ on the r.h.s. The symbol ``!'' above the equality means that equality is a requirement for the correctness of the computation, but it cannot be deterministically implemented. As follows from Eq.~(\ref{ObsId}), the measurement outcomes before and after the flip, $s_i$ and $s_i'$ are related via $s_i = s_i' \oplus q_i$. For all the other qubits, $s_j' = s_j$. Substituting this into the above relation, we obtain
\begin{equation}
\label{s_tau}
s_i' + \sum_{j \in J\backslash i}s_j' \mod 2 \begin{array}{c} !\vspace{-2mm} \\= \vspace{-2mm} \\ \mbox{ }\end{array} 0, \;\; \forall q_a' \in \mathbb{Z}_2.
\end{equation}
Thus, the additional output bit $o_i= s_i' + \sum_{j \in J\backslash i}s_j' \mod 2$ is a flag. If $o_i=0$ then the computation succeeded, and if $o_i=1$ then it did not.

Now suppose that the problem solved by the given MBQC is in NP. Then, this flag bit is not necessary. The remaining output may be efficiently checked for correctness anyway. Thus, one may safely discard the extra bit $o_i$ of output. Not post-selecting on $o_i=0$ can, if anything, only increase the success probability of the computation. We thus arrive at

\begin{Lemma}
  \label{CTCrem}
  Be ${\cal{M}}_1$ an MBQC with a classical output $\textbf{o}$ and influence matrix $T$ such that $T_{ii}=1$, i.e., ${\cal{M}}_1$ has a closed time-like curve involving a single qubit $i \in \Omega$. Be ${\cal{M}}_2$ the MBQC with the same classical output $\textbf{o}$, obtained from ${\cal{M}}_1$ by flipping the measurement plane at $i$. Then, the closed time-like curve of $i$ in ${\cal{M}}_1$ is removed in ${\cal{M}}_2$. Furthermore, if ${\cal{M}}_1$ solves a problem in the complexity class NP with probability $p$, then ${\cal{M}}_2$ solves the same problem with probability $\geq p$.
\end{Lemma}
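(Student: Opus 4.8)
The plan is to combine the facts already assembled in the preceding discussion of Section~\ref{BRKctc} with the operator identity Eq.~(\ref{ObsId}) and one elementary monotonicity-of-probability inequality; essentially all the ingredients are in place, and the proof is mostly a matter of organising the bookkeeping.

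For the first assertion (removal of the length-$1$ closed time-like curve) I would argue as follows. Since $[T]_{ii}=1$, qubit $i$ lies in $(I_{\text{gauge}})^c\cap(O_{\text{comp}})^c$ and carries a correction operator $K(i)=\sigma_{s\phi}^{(i)}\otimes K(i)|_{\Omega\backslash i}$. Flipping the measurement plane at $i$ sends $K(i)$ to $\overline K'(i)=\sigma_\phi^{(i)}\otimes K(i)|_{\Omega\backslash i}$, which satisfies the conditions of Definition~\ref{GaugeOp}; hence in $\mathcal{M}_2$ the qubit $i$ joins $I_{\text{gauge}}$ and, no longer possessing a correction operator, also joins $O_{\text{comp}}$. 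In particular $fc(i)=\emptyset$ in $\mathcal{M}_2$, so $[T_{\mathcal{M}_2}]_{ii}=0$ and the closed time-like curve through $i$ is gone.

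For the second assertion I would proceed in three small steps. (a) Because of the self-loop, $\mathcal{M}_1$ cannot be run deterministically; the only way to realise it is to post-select on the self-consistency constraint at $i$ holding. By Eq.~(\ref{ObsId}) and the substitution $s_i=s_i'\oplus q_i$, this constraint is, in terms of the measurement outcomes $\textbf{s}'$ of the flipped computation, exactly the event that $o_i:=s_i'+\sum_{j\in J\backslash i}s_j'\bmod 2$ vanishes, c.f. Eq.~(\ref{s_tau}). (b) Flipping the measurement plane is an equivalence transformation assembled from Eq.~(\ref{ObsId}); since after the flip $i\in I_{\text{gauge}}$, the gauge bit $q_i=g_i$ does not affect the output distribution (gauge-invariance, c.f. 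Eq.~(\ref{GK2})), and therefore the distribution of the retained output $\textbf{o}$ of $\mathcal{M}_2$, restricted to the branches with $o_i=0$, coincides with the post-selected output distribution of $\mathcal{M}_1$. (c) Reading ``$\mathcal{M}_1$ solves the problem with probability $p$'' as the per-run success probability of the post-selection implementation -- i.e. the probability that a single physical run both passes the post-selection ($o_i=0$) and returns a correct output -- steps (a) and (b) give $p=\Pr_{\mathcal{M}_2}(o_i=0 \text{ and } \textbf{o}\text{ correct})$.

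Finally, since the problem lies in NP, a run of $\mathcal{M}_2$ -- which discards the flag $o_i$ and no longer has the closed time-like curve at $i$ -- can be completed by verifying $\textbf{o}$ in polynomial time, so its per-run success probability is $\Pr_{\mathcal{M}_2}(\textbf{o}\text{ correct})$. Monotonicity of probability then yields $\Pr_{\mathcal{M}_2}(\textbf{o}\text{ correct})\ge\Pr_{\mathcal{M}_2}(o_i=0 \text{ and } \textbf{o}\text{ correct})=p$, which is the claimed bound. I expect the only real friction to be step (b): one must be explicit that the flag $o_i$ encodes precisely ``$\mathcal{M}_1$'s self-consistency held'' and that on those branches the two computations produce the same $\textbf{o}$ -- a point that follows from Eq.~(\ref{ObsId}) together with gauge-invariance, but which is worth spelling out rather than waving at.
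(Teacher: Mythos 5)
Your proposal follows essentially the same route as the paper: the first claim is established by observing that the flip converts the correction operator $K(i)=\sigma_{s\phi}^{(i)}\otimes K(i)|_{\Omega\backslash i}$ into a gauge-type operator, placing $i$ in $I_{\text{gauge}}\cap O_{\text{comp}}$ and emptying its forward cone, and the second claim by identifying the new output bit $o_i$ as the post-selection flag for $\mathcal{M}_1$'s self-consistency constraint and then invoking NP-verifiability plus monotonicity of probability to discard it. Your step (b), making explicit that the post-selected branches of $\mathcal{M}_2$ reproduce $\mathcal{M}_1$'s output distribution, is a welcome sharpening of a point the paper treats only implicitly, but the argument is the same.
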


{\em{Remark:}} Lemma~\ref{CTCrem} does not guard against the inefficiencies of post-selection, in particular if multiple CTCs of length 1 are being removed. While the success probability after removing the CTCs is guaranteed not to be smaller than for the original computation with the CTCs (which can only be executed using post-selection), neither it is provably significantly larger.\medskip 
 
{\em{Event horizons.}} Let us consider the flow of information between qubit $i$ whose measurement plane has been flipped and the other qubits. Before the flip (MBQC ${\cal{M}}_1$ of Lemma~\ref{CTCrem}), $i \in (O_{\text{comp}})^c \cap (I_{\text{gauge}})^c$. After the the flip (MBQC ${\cal{M}}_2$ of Lemma~\ref{CTCrem}), $i \in O_{\text{comp}} \cap I_{\text{gauge}}$. In ${\cal{M}}_2$, since $i \in I_{\text{gauge}}$, no information for the adaption of measurement basis is flowing into site $i$ from the other sites. Likewise, since $i \in O_{\text{comp}}$, no information for the adaption of measurement bases is flowing out of site $i$. Finally, because of the normal form Eq.~(\ref{HRTZshape}), the measurement outcome $s_i$ appears in only one readout bit. This readout bit is $o_i$ as given in l.h.s. of Eq.~(\ref{s_tau}), which is precisely the bit of classical output that can be discarded if the problem solved by the quantum computation is in NP. If $o_i$ is discarded, then no information is flowing out of the site $i$ at all. Thus, in summary, from the viewpoint of classical processing, qubit $i$ in ${\cal{M}}_2$ becomes entirely disconnected from the computation. It vanishes behind the MBQC counterpart of an event horizon. 

\begin{figure}
\begin{center}
  \includegraphics[width=15cm]{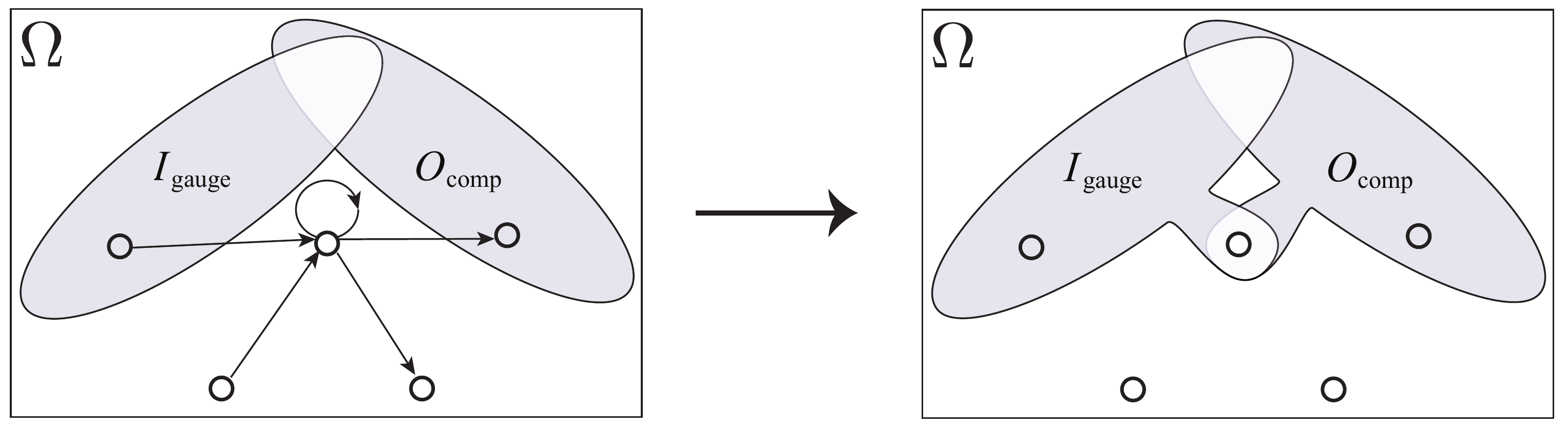}
  \caption{\label{CTCbreak} Breaking a closed time-like curve of length 1. The looped qubit becomes an element of $I_{\text{gauge}} \cap O_{\text{comp}}$ after flipping the measurement plane. As such both its forward cone ($O_{\text{comp}} \subseteq O$), and backward cone  ($I_{\text{gauge}} \subseteq I$) must be empty.}
\end{center}
\end{figure} 

\subsection{Closed time-like curves of length $\geq 2$}

A closed time-like curve ${\cal{L}} = \{1,2,..,l\}$ of length $l \geq 2$ is a set of qubits such that the relations $1 \longrightarrow 2$, $2 \longrightarrow 3$, .. ,$i \longrightarrow i+1$, .., $l-1 \longrightarrow l$, $l \longrightarrow 1$ hold. We call a closed time-like curve ${\cal{L}}$ minimal if no proper subset of ${\cal{L}}$ is a closed time-like curve. Minimal closed time-like curves are of the form
\begin{equation}
  fc(i)\cap {\cal{L}} = \{i+1\},\; 1 \leq i \leq l,\;\; (\mbox{where } l+1\equiv 1). 
\end{equation}
That is, for a minimal closed time-like curve, no arrows other than the defining ones are present.
Proof: Assume $\exists k>1$ such that $i+k \in fc(i)$. Then, ${\cal{L}}'=\{1,2,..,i-1,i,i+k,i+k+1,..,l\}$ is also a closed time-like curve, and a proper subset of ${\cal{L}}$. Hence, ${\cal{L}}$ is not minimal. Contradiction. $\Box$

Regarding the embedding of ${\cal{L}}$ in the overall temporal structure, we have
\begin{equation}
{\cal{L}} \subset (I_{\text{gauge}})^c \cap (O_{\text{comp}})^c.
\end{equation}
Proof: $\forall i \in {\cal{L}}$, $i-1 \in bc(i)$, hence $bc(i) \neq \emptyset$. Hence, $i \not \in I_{\text{gauge}}\subset I$. Furthermore, $\forall i \in {\cal{L}}$, $i+1 \in fc(i)$. Hence, $fc(i) \neq \emptyset$. Hence, $i \not\in O_{\text{comp}} \subset O$. $\Box$

Now, the CTC ${\cal{L}}$ can be broken up between qubits 1 and l. This time, no transformation $\tau_s$ is necessary. The starting point ``1" for the labelling is of course arbitrary. To break up the CTC, we enlarge the sets of gauge input and computational output,
\begin{equation}
\label{IOtrans}
 I'_{\text{gauge}} = I_{\text{gauge}} \cup \{1\},\; O_{\text{comp}}'= O_{\text{comp}} \cup \{l\},
\end{equation}
and modify the correction and gauge operators
\begin{equation}
\label{Ktrans}
\begin{array}{rcll}
  K(l) & \longrightarrow & \overline{K}'(1):=K(l),\vspace{1mm}\\
  K(a) & \longrightarrow & K'(a) := \left\{
  \begin{array}{ll} 
    K(a),& \mbox{if}\; 1 \not \in fc(a)\\
    K(a)K(l),& \mbox{if}\; 1 \in fc(a)
  \end{array}\right.,
  \forall a \in (O_{\text{comp}})^c\backslash \{l\},\vspace{1mm}\\
  \overline{K}(i) & \longrightarrow & \overline{K}'(i) := \left\{
  \begin{array}{ll} 
    \overline{K}(i),& \mbox{if}\; \overline{K}(i)|_1 =I^{(1)}\\
    \overline{K}(i)K(l),& \mbox{if}\; \overline{K}(i)|_1 =\sigma_s^{(1)}
  \end{array}\right.,
  \forall i \in I_{\text{gauge}}.
\end{array}
\end{equation}
The first line in Eq.~(\ref{Ktrans}) says that $K(l)$ is re-interpreted as $\overline{K}(1)$. Hence, qubit $l$ no longer has a correction operation and thus becomes a member of $O_{\text{comp}}$, as required in Eq.~(\ref{IOtrans}). The second
line in Eq.~(\ref{Ktrans}) ensures that $K'(a)|_1 = I^{(1)}$ for all $a \in (O_{\text{comp}})^c$ such that $bc'(i) = \emptyset$. Thus, $1 \in  I_{\text{gauge}}' \subset I'$, as required in Eq.~(\ref{IOtrans}). The third line in Eq.~(\ref{Ktrans}) makes the gauge operators compatible with the new normal form for ${\cal{G}}(|\Psi)$ (based on $I'_{\text{gauge}}$ and $O'_{\text{comp}}$), and has no effect on temporal order. Since, after the reshuffling of the gauge and correction operators according to Eq.~(\ref{Ktrans}), $1 \in I_{\text{gauge}}'$ and $l \in O_{\text{comp}}'$ and we have successfully broken up the CTC ${\cal{L}}$ between 1 and $l$. Since the computational output set has been enlarged by one qubit, $l$, we have one bit $o_l$ of additional classical output. What does it signify? - When retracing the transformations Eq.~(\ref{Ktrans}) in the normal form Eq.~(\ref{GNF}) of the stabilizer generator matrix ${\cal{G}}(|\Psi\rangle)$, we find that the matrix ${\tt{Z}}$ is updated according to
\begin{equation}
  {\tt{Z}} \longrightarrow \left(\begin{array}{c} \textbf{bc}(1)^T\backslash \{l\}\\ \hline {\tt{Z}} \backslash \{l\} \end{array} \right),
\end{equation}
where $\textbf{bc}(1)\backslash \{l\}$ is the characteristic vector of $bc(1)$, restricted to $O_{\text{comp}}\backslash \{l\}$, and ${\tt{Z}}\backslash \{l\}$ is the matrix obtained from ${\tt{Z}}$ by deleting column $l$. Thus we find that the new output bit $o'_l$ is
\begin{equation}
\label{addlout}
o'_l:= s_l + \sum_{a \in bc(1)\backslash \{l\}}s_a \mod 2.
\end{equation}
We compare this to the relation $q_1 = s_l+ \sum_{a \in bc(1)\backslash \{l\}}s_a \mod 2$ before the transformation Eq.~(\ref{Ktrans}), and find the following interpretation of $o'_l$: The temporal relations before the transformation Eq.~(\ref{Ktrans}) contain a closed time-like curve and can therefore only be implemented probabilistically. So one may as well assume $q_1 = 0$, perform the measurement and later check whether the relation between $q_1$ and $\textbf{s}$ was obeyed. With Eq.~(\ref{addlout}), this check amounts to $o'_l = 0$, and $o'_l$ is thus a flag for the correctness of the computation. If the computational problem solved is in NP then we can afford to discard this extra bit $o'_l$ of output. The solution can be efficiently checked anyway. In no longer post-selecting on $o'_l = 0$, we retain all the `good' cases while admitting additional ones. The success
probability of the algorithm does not decrease. We thus arrive at

\begin{Lemma} \label{CTCl2}
Be ${\cal{M}}_1$ an MBQC with a classical output $\textbf{o}$ and influence matrix $T$ giving rise to a minimal closed time-like curve ${\cal{L}} = \{1,2,..,l\}$ of length $\geq 2$. Be ${\cal{M}}_2$ the MBQC with the same classical output $\textbf{o}$, obtained from ${\cal{M}}_1$ by enlarging $I_{\text{gauge}} \longrightarrow  I_{\text{gauge}} \cup \{1\}$, $O_{\text{comp}} \longrightarrow O_{\text{comp}} \cup \{l\}$. Then, the closed time-like curve ${\cal{L}}$ is not present in ${\cal{M}}_2$. Furthermore, if ${\cal{M}}_1$ solves a problem in the complexity class NP with probability $p$, then ${\cal{M}}_2$ solves the same problem with probability $\geq p$.
\end{Lemma}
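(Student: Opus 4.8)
The plan is to show that the construction already spelled out in Eqs.~(\ref{IOtrans})--(\ref{Ktrans}) does what it should: it turns $\mathcal{M}_1$ into a legitimate MBQC $\mathcal{M}_2$ on the same resource state, measurement planes and measurement angles, in which qubit $1$ has been pushed into $I_{\text{gauge}}$ and qubit $l$ into $O_{\text{comp}}$. Once that is established the first assertion is immediate, and the second reduces to recognising the single extra optimal-output bit $o'_l$ of Eq.~(\ref{addlout}) as a correctness flag that may be dropped for NP problems.

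First I would fix a pair $(I_{\text{gauge}},O_{\text{comp}})$ realising $T$, which by Theorem~\ref{Extr} may be taken extremal, so that the normal form Eq.~(\ref{GNF}) is available. I would then use the two structural facts about a minimal CTC established in the paragraphs preceding the Lemma: $\mathcal{L}\subseteq(I_{\text{gauge}})^c\cap(O_{\text{comp}})^c$, and $fc(i)\cap\mathcal{L}=\{i+1\}$ for all $i$ (indices mod $l$); in particular $\mathcal{L}$ carries no self-loop and $T_{1l}=1$ is the only arrow of $\mathcal{L}$ into qubit $1$. Combining $T_{1l}=1$ with Eq.~(\ref{KTrel}) and the first line of Eq.~(\ref{CorrCorr}) pins down $K(l)|_1=\sigma_\phi^{(1)}$, which is exactly what is needed for $\overline{K}'(1):=K(l)$ to satisfy the gauge-operator conditions Eq.~(\ref{CorrGauge}) relative to $(I_{\text{gauge}}\cup\{1\},\,O_{\text{comp}}\cup\{l\})$. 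The analogous, slightly tedious, check on the other redefined operators --- done block by block against Eq.~(\ref{GNF}), using repeatedly that a qubit of $I_{\text{gauge}}$ has empty backward cone --- shows that the point of the clause ``$K(a)\mapsto K(a)K(l)$ whenever $1\in fc(a)$'' is precisely to cancel the $\sigma_\phi^{(1)}$ contributions, so that $K'(a)|_1=I^{(1)}$ for every $a\in(O'_{\text{comp}})^c$, and that the modified $\overline{K}'(i)$ remain valid gauge operators. By Lemma~\ref{indepL} applied to the new pair these operators are independent, and since (by extremality) there are $|\Omega|$ of them they generate $\mathcal{S}(|\Psi\rangle)$; hence $\mathcal{M}_2$ is a genuine MBQC with $1\in I'_{\text{gauge}}\subseteq I'$ and $l\in O'_{\text{comp}}$. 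Since $1\in I'$, its backward cone is empty, so no arrow of $\mathcal{M}_2$ ends at qubit $1$; in particular the relation $l\to1$ is gone, so $\mathcal{L}$ is not a closed time-like curve in $\mathcal{M}_2$. (Whether the reshuffling creates CTCs among other qubits is irrelevant for the Lemma, which only claims $\mathcal{L}$ is broken.)

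For the probability claim I would retrace Eq.~(\ref{Ktrans}) through the normal form Eq.~(\ref{GNF}) to read off the updated output relation; this yields the new optimal-output bit $o'_l=s_l+\sum_{a\in bc(1)\backslash\{l\}}s_a\bmod 2$ of Eq.~(\ref{addlout}) and the fact that every original output bit $o_i$, $i\in O_{\text{comp}}$, coincides with its $\mathcal{M}_1$-value on the event $o'_l=0$. Comparing $o'_l$ with the pre-transformation rule $q_1=s_l+\sum_{a\in bc(1)\backslash\{l\}}s_a$ for the measurement basis at qubit $1$ --- a rule that, owing to the CTC, could only ever be enforced by post-selection --- one sees that $\{o'_l=0\}$ is exactly the event on which running $\mathcal{M}_2$ (with $g_1=0$) reproduces the post-selected run of $\mathcal{M}_1$ (with $q_1=0$), outputs and all. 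Thus every run in which $\mathcal{M}_1$ succeeds is inherited by $\mathcal{M}_2$; the extra runs of $\mathcal{M}_2$ (those with $o'_l=1$) can only contribute further correct outputs, and since the computational problem is in NP a candidate output can be verified after the fact, so no longer post-selecting on $o'_l=0$ and discarding $o'_l$ cannot lower the probability of producing a (verified) correct answer. Hence $\mathcal{M}_2$ solves the problem with probability at least $p$.

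The main obstacle is the first step: verifying that Definitions~\ref{CorrOp} and~\ref{GaugeOp}, together with the independence of Lemma~\ref{indepL}, all hold \emph{simultaneously} for the redefined operators with respect to $(I'_{\text{gauge}},O'_{\text{comp}})$ is a fairly delicate block-matrix bookkeeping on Eq.~(\ref{GNF}). A secondary, more conceptual, point needing care is fixing the precise meaning of ``$\mathcal{M}_1$ solves a problem in NP with probability $p$'' for a computation runnable only by post-selection: for the bound ``$\geq p$'' to be the clean statement one should read $p$ as the per-run probability that $\mathcal{M}_1$ returns a verifiably correct output, counting a failed post-selection as a failure; with that convention the comparison above is immediate.
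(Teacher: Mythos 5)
Your proposal is correct and follows essentially the same route as the paper: re-interpret $K(l)$ as the gauge operator $\overline{K}'(1)$ (using $1\in fc(l)$ together with Eq.~(\ref{CorrCorr}) to pin down $K(l)|_1=\sigma_\phi^{(1)}$), cancel the residual $\sigma_\phi^{(1)}$ factors in the other correction operators so that qubit $1$ acquires an empty backward cone, and read off the new output bit $o'_l$ as a correctness flag that can be dropped without loss of success probability for NP problems. Your two flagged caveats (the block-wise verification of Definitions~\ref{CorrOp} and \ref{GaugeOp} for the new pair, and the post-selection reading of ``probability $p$'') are exactly the points the paper also treats only informally.
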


We still need to check that by the transformation Eq.~(\ref{Ktrans}) of the correction and gauge operators we do not create additional closed time-like curves in exchange for removing the minimal CTC ${\cal{L}}$. In this regard, note that the middle line of Eq.~(\ref{Ktrans}) amounts to
$$
fc(a) \longrightarrow fc'(a) = fc(a) \oplus fc(l), \forall a \in (O'_{\text{comp}})^c \mbox{ with } 1 \in fc(a).
$$
Therefore, for any $b \in (I'_{\text{gauge}})^c$, $b \in fc'(a)$ only if $b \in fc(a)$ or $b \in fc(l)$. Thus, $b \in fc'(a)$ only if
$a \prec b$ before the enlargement of $I_{\text{gauge}}$, $O_{\text{comp}}$. Therefore, no new closed time-like curves are being created.

\subsection{Removal of closed time-like curves}

Combining the above two cases of CTCs of length 1 and $\geq 2$, respectively, we can remove CTCs repeatedly until none remains. The process must stop, since at some point every qubit is in the gauge input or/and the computational output set. We thus arrive at the following

\begin{Theorem}\label{CTCrem5} Be ${\cal{M}}_1$ an MBQC solving a problem in the complexity class NP with probability $p$. Then, all closed time-like curves can be removed from the influence matrix $T$ of ${\cal{M}}_1$, resulting in a
deterministically runnable MBQC ${\cal{M}}_2$ which solves the same problem with probability $\geq p$.
\end{Theorem}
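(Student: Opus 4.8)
The plan is to prove Theorem~\ref{CTCrem5} by repeatedly invoking Lemmas~\ref{CTCrem} and \ref{CTCl2}, the termination of the iteration being enforced by the monotone quantity $\mu(\mathcal{M}) := |I_{\text{gauge}}| + |O_{\text{comp}}|$. The first thing to establish is the equivalence ``$\mathcal{M}$ is deterministically runnable $\iff$ the influence matrix of $\mathcal{M}$ has no closed time-like curve''. The relation $\prec$ is the transitive closure of the arrow relation $a \to b$ ($\Leftrightarrow b \in fc(a)$), hence automatically transitive, so $\prec$ fails to be a strict partial order exactly when it violates irreflexivity or antisymmetry; either violation furnishes a directed cycle in the $\to$-graph, i.e.\ a CTC, and conversely any CTC $\{1,\dots,l\}$ yields $1 \prec 1$. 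Next I would record that a CTC, if present, contains a \emph{minimal} CTC (repeatedly pass to a proper CTC subset; this halts since $\Omega$ is finite), and a minimal CTC is either a self-loop, $T_{ii}=1$, or a minimal CTC of length $\ell \geq 2$ as treated in Section~\ref{BRKctc}.

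Next I would describe the iteration. Set $\mathcal{M}\leftarrow\mathcal{M}_1$; while $\mathcal{M}$ has a CTC, pick a minimal one. If it is a self-loop at $i$, apply Lemma~\ref{CTCrem}: flipping the measurement plane at $i$ removes the self-loop, does not alter the resource state (Eq.~(\ref{tau_s})), retains the designated classical output $\textbf{o}$, and—since the problem lies in NP—discarding the new flag bit $o_i$ of Eq.~(\ref{s_tau}) leaves the success probability at least as large. If instead the minimal CTC is $\mathcal{L}=\{1,\dots,\ell\}$ with $\ell\geq 2$, apply Lemma~\ref{CTCl2}: enlarging $I_{\text{gauge}}\to I_{\text{gauge}}\cup\{1\}$ and $O_{\text{comp}}\to O_{\text{comp}}\cup\{\ell\}$ breaks $\mathcal{L}$, creates no new CTC (the end-of-section check $fc'(a)\subseteq fc(a)\cup fc(\ell)$), keeps $\textbf{o}$, and—discarding the flag bit $o'_\ell$ of Eq.~(\ref{addlout})—again does not decrease the success probability. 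In either case replace $\mathcal{M}$ by the resulting MBQC.

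The heart of the argument is termination. In the self-loop case the looped qubit $i$ is not in $I_{\text{gauge}}\cup O_{\text{comp}}$ beforehand (because $i\in bc(i)$ while $I_{\text{gauge}}\subseteq I$, and $i\in fc(i)$ while $O_{\text{comp}}\subseteq O$) and is in $I_{\text{gauge}}\cap O_{\text{comp}}$ afterwards, so $\mu$ grows by $2$; in the length-$\ell$ case $\mathcal{L}\subset(I_{\text{gauge}})^c\cap(O_{\text{comp}})^c$, so the two new qubits were previously uncounted and again $\mu$ grows by $2$. Since neither operation ever removes a qubit from $I_{\text{gauge}}$ or $O_{\text{comp}}$, $\mu$ increases strictly at every step while staying $\leq 2|\Omega|$, so the loop runs at most $|\Omega|$ times and then stops; it can only stop when $\mathcal{M}$ has no CTC, because otherwise one of the two lemmas still applies. (This makes no use of the ``no new CTC'' clause, which is only needed so that the length-$\ell$ step is clean; monotonicity of $\mu$ alone forces termination.)

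It then remains to collect the conclusions. The final MBQC $\mathcal{M}_2$ has no CTC, hence by the equivalence of the first paragraph its temporal relation is a strict partial order and $\mathcal{M}_2$ is deterministically runnable. Throughout the iteration the designated output $\textbf{o}$—and therefore the NP problem being solved—is unchanged, and the success probability is non-decreasing at every step, so $\mathcal{M}_2$ solves the same problem with probability $\geq p$. I expect the real work to be the per-step bookkeeping already carried out in Lemmas~\ref{CTCrem} and \ref{CTCl2}: verifying that the operator re-interpretation after a plane flip, and the reshuffling of correction and gauge operators in Eq.~(\ref{Ktrans}), genuinely yield a legitimate MBQC with the enlarged pair $(I_{\text{gauge}},O_{\text{comp}})$, that the created flag bit is precisely the one NP permits us to drop, and that $\mu$ indeed only grows. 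Granting those lemmas, Theorem~\ref{CTCrem5} follows by induction on $\mu$.
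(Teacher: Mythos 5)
Your proposal is correct and follows essentially the same route as the paper: the paper's proof is precisely "apply Lemma~\ref{CTCrem} to self-loops and Lemma~\ref{CTCl2} to minimal CTCs of length $\geq 2$, iterating until none remain, with termination because eventually every qubit lands in $I_{\text{gauge}}$ and/or $O_{\text{comp}}$." Your explicit monotone quantity $\mu=|I_{\text{gauge}}|+|O_{\text{comp}}|$, the reduction to minimal CTCs, and the equivalence between deterministic runnability and absence of CTCs are just careful spellings-out of steps the paper leaves implicit.
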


{\em{Remark:}} Theorem~\ref{CTCrem5} does not guard against the inefficiencies of post-selection. While the success probability after removing the CTCs is guaranteed not to be smaller than for the original computation
with the CTCs (which can only be executed using post-selection), we cannot prove that it is significantly larger.

\section{Conclusions and outlook}
\label{Concl}

In this paper we have suggested MBQC as a toy model for spacetime emerging from quantum mechanics. We have analyzed the constraints on MBQC temporal order that arise from the quantum mechanical randomness of measurement, in combination with the computational requirement of preventing this randomness from affecting the quantum information processing. We have identified two groups of symmetry transformations, a gauge transformation which leaves every MBQC---including the temporal relations---invariant, and a second transformation, related to local complementation, which only preserves the temporal relations. We have shown that for any given resource stabilizer state $|\Psi\rangle$ and set $\Sigma$ of measurement planes, the compatible temporal relations all arise from the bases of a matroid encompassing $|\Psi\rangle$,$\Sigma$. Finally, we have identified event horizons as a further piece of the phenomenology of general relativity which has a counterpart in MBQC.    

At this point, we are led to ask the following questions:
\begin{enumerate} 
\item{We introduced a group of gauge transformations Eq.~(\ref{G1},\ref{G2}), and a group symmetry transformations Eq.~(\ref{Tlc2}), generated by flipping measurement planes. Both transformations preserve MBQC temporal orders. Can the two groups be unified?}
\item{Some of the temporal relations admitted by the matroid ${\cal{G}}(|\Psi\rangle)$ contain closed time-like curves. Given a stabilizer state $|\Psi\rangle$ and set of measurement planes $\Sigma$, can we find a similar algebraic (or other) structure which comprises only the partial orders? Can the partial order of measurements with the smallest set $O_{\text{comp}}$ be efficiently computed?} 
\item{We have established that all temporal relations free of self-loops appearing in MBQC arise as representations of the local complementation group. However, these representations are in general reducible. If we decompose them into irreps, where does temporal order go?} 
\item{\label{Qspace}In proposing MBQC as a toy model for spacetime, here we have focussed on the temporal part. What is a suitable notion of {\em{space}} that can be associated with a resource state and a set of measurement planes in MBQC?}
\item{\label{QDet}In MBQC, the link between the randomness in quantum mechanical measurement and temporal order is the principle that the randomness of measurement outcomes should not affect the logical processing. In a context more general than quantum computation, what could this principle be replaced by?}
\end{enumerate}
{\em{Remark regarding Question~\ref{Qspace}:}} With Theorem~\ref{OneToOneTwo} we arrive at a situation closely resembling Malament's theorem \cite{conTC}. The latter states that for a continuous spacetime manifold, the temporal order of spacetime events determines the spacetime metric up to a conformal factor, i.e., determines the topology of spacetime. In MBQC, the classical processing relations---which are essentially temporal information encoded in the influence matrix $T$, plus the extra matrices $H$, $R$, $Z$ related to the boundary sets $I_{\text{gauge}}$ and $O_{\text{comp}}$---determine the entire computation up to the measurement angles. To further pursue an analogy with Malament's theorem it is therefore desirable to have a characterization of `space' in MBQC that admits a discussion in terms of topology. Cellular complexes thus seem a suitable choice. Indeed, certain stabilizer states considered as resource states admit the characterization in terms of a cellular complex, such as the planar code state \cite{Kita} (two-dimensional complex) as well as computationally universal two-dimensional cluster states (three-dimensional \cite{BR06} and four-dimensional complexes).

{\em{Remark regarding Question~\ref{QDet}:}} A possible criticism of MBQC as a toy model for a quantum spacetime is that the laws which govern it, namely the classical processing relations Eq.~(\ref{TO7a}), (\ref{TO7b}) are not laws of nature, but only rules imposed by the requirement of shielding the processed quantum information from the randomness of measurement. As such, the processing relations may be obeyed or violated at will by an operator running the MBQC.

To this we respond that if the operating unit for a measurement-based quantum computer was a dedicated device, it would not have the freedom to violate the classical processing rules. For it, they {\em{would}} be the laws of nature. They would follow straight from Newton's axioms if the device was mechanical, and from Maxwell's equations if it was electrical. The processing rules could only be violated by a conscious being, such as a human or trained animal.

Such beings, bound to the pull of gravity and the consequences of the no-cloning theorem by their belonging to this `real' universe, but entitled by their free will to disobey man-made regulations, can violate the processing rules of MBQC precisely because they have an existence outside it. Therefore, a theory of MBQC should not be required to describe them. Neither quantum mechanics nor the theory of gravity make statements about hypothetical objects that jump in and out of spacetime.\medskip 

To conclude, we would like to recall the main idea underlying this work.  In attempts to unify the theory of general relativity with quantum mechanics, one may take the viewpoint that spacetime is not an independent construct, but rather a consequence of the laws of quantum mechanics. Once this assertion is spelled out, the natural next step is to identify the key quantum property which yields a mechanism for generating temporal order, and to illustrate this mechanism in a toy model. Measurement based quantum computation provides such a toy model. Therein, the key property which leads to temporal order is the inherent randomness of quantum measurement.

\end{document}